\newcommand{\be}{\begin{equation*}}
\newcommand{\ee}{\end{equation*}}
\newcommand{\ben}[1]{\begin{equation}\label{#1}}
\newcommand{\een}{\end{equation}}
\newcommand{\bea}{\begin{eqnarray}}
\newcommand{\eea}{\end{eqnarray}}
\newcommand{\bean}{\begin{eqnarray*}}
\newcommand{\eean}{\end{eqnarray*}}
\newcommand{\R}{\mathbb{R}}
\newcommand{\p}{\partial}
\renewcommand{\O}[1]{\mathcal{O}\left( #1 \right)}
\newcommand{\abs}[1]{\left|#1 \right|} 
\newcommand{\norm}[2]{\left|\left |#1 \right| \right |_{#2}}
\newcommand{\eq}[1]{({\ref{#1}})}
\newtheorem{Theorem}{Theorem}
\newtheorem{Lemma}{Lemma}
\numberwithin{Theorem}{section}
\numberwithin{Lemma}{section}
\numberwithin{Proposition}{section}
\newcommand{\gr}{\nabla}
\newcommand{\grt}{\tilde{\nabla}}
\newcommand{\temt}{\tilde{\mathbb{T}}}
\newcommand{\pa}{\partial}
\newcommand{\hf}{\frac{1}{2}}
\def\d[#1]{\text{d} #1}
\def\pd[#1]{\frac{\partial}{\partial {#1}}}
\def\pdt[#1][#2]{\frac{\partial {#1}}{\partial {#2}}}
\newcommand{\td}{\dot{t}}
\newcommand{\rd}{\dot{r}}
\newcommand{\xd}{\dot{x}}
\newcommand{\yd}{\dot{y}}
\newcommand{\mf}{\mathcal{M}}
\newcommand{\mh}{\mathcal{H}}
\newcommand{\mi}{\mathcal{I}}
\newcommand{\mo}{\mathcal{O}}
\newcommand{\su}{\slashed{\gr}u}
\newcommand{\Lu}{\underline{L}}
\newcommand{\Hu}{\underline{H}}
\newcommand{\cu}{\chi u}
\newcommand{\gd}{\dot{\gamma}}
\newtheorem*{rep@theorem}{\rep@title} \newcommand{\newreptheorem}[2]{%
\newenvironment{rep#1}[1]{%
\def\rep@title{\bf #2 \ref{##1} }%
\begin{rep@theorem} }%
{\end{rep@theorem} } }
\newtheorem{rem}{Remark}
\numberwithin{rem}{section}
\newcommand{\f}{\frac}
\title[]{The Klein-Gordon equation on the Toric AdS-Schwarzschild Black Hole}
\author{Jake Dunn}
\author{Claude Warnick}
\thanks{\vspace{.1cm}\texttt{j.dunn15@imperial.ac.uk};  \texttt{c.warnick@imperial.ac.uk}\\
\phantom{1    }\hspace{.05cm} Dept. of Mathematics, South Kensington Campus, Imperial College London, SW7 2AZ, UK.\vspace{.1cm}\\
\phantom{1    }\hspace{.05cm} Mathematics Institute,  Zeeman Building, University of Warwick, Coventry, CV4 7AL, UK\vspace{.1cm}}
\date{\today \vspace{.1cm}}  
\begin{document}
\maketitle
\begin{abstract}
We consider the Klein-Gordon equation on the exterior of the toric anti de-Sitter Schwarzschild black hole with Dirichlet, Neumann and Robin boundary conditions at $\mi$. We define a non-degenerate energy for the equation which controls the renormalised $H^1$ norm of the field. We then establish both decay and integrated decay of this energy through vector field methods. Finally we demonstrate the necessity of `losing a derivative' in the integrated energy estimate through the construction of a Gaussian beam staying in the exterior of the event horizon for arbitrary long co-ordinate time.  
\end{abstract}
\section{Introduction}

Among asymptotically flat spacetimes satisfying the vacuum Einstein equations:
\begin{equation}\label{fEVE}
Ric_g = 0,
\end{equation}
the black hole solutions occupy a privileged position. It is conjectured that the family of rotating black holes described by the Kerr metric represent the final state of gravitational collapse. Within the class of spherically symmetric solutions with scalar matter it is possible to establish this fact rigorously \cite{csgsf}. Establishing such a result without an assumption of symmetry appears a very challenging task. Less ambitiously, one might hope to show that the Kerr family of black holes are in some suitable sense \emph{stable} as solutions of \eq{fEVE} against small perturbations to the initial data. In the absence of a black hole, the nonlinear stability of the Minkowski spacetime against small perturbations was established in the work of Christodoulou-Klainerman \cite{CKMS}. For black hole spacetimes, the nonlinear stability against generic small perturbations remains an important open problem, although see \cite{Holzegel:2010wm, Dafermos:2013bua} for an alternative approach.

In the linear setting, considerable progress towards establishing the stability of the black hole spacetimes has been made over the last few years, culminating in the recent proof of decay for solutions of the scalar wave equation on any fixed subextremal Kerr black hole background \cite{drsrK}. Even this highly simplified problem nevertheless requires a very careful analysis. The key issues to be understood are those of \emph{trapping} and \emph{superradiance}. The issue of trapping is related to the presence of null geodesics which orbit the central black hole, neither escaping to infinity, nor falling through the black hole horizon. Trapped null geodesics are an obstacle to linear decay, resulting in decay estimates that `lose derivatives'. The other problem to be confronted is the superradiant effect, which manifests itself in the absence of a globally timelike Killing field. As a consequence, the natural conserved `energy' is not coercive and does not control the solution. Overcoming these issues requires a rather subtle argument involving the nature of the set of trapped geodesics and its interplay with the superradiance phenomenon.

In this paper, we shall initiate the mathematical study of the Klein-Gordon equation on a family of black hole backgrounds which, largely, evade the complications of trapping and superradiance present for the Kerr family of black holes: the planar, or toroidal, AdS-Schwarzschild black holes. These are solutions of the vacuum Einstein equations with a negative cosmological constant:
\begin{equation}\label{EVE}
Ric_g = \Lambda g, \qquad \Lambda <0.
\end{equation}
The presence of a negative cosmological constant for our purposes has two main effects. Firstly, the character of null infinity changes: it becomes a timelike surface, meaning that it is necessary to impose boundary conditions in order for the Klein-Gordon equation to define a well posed evolution problem. Secondly, for  $\Lambda<0$ it is no longer the case that the horizon of a compact black hole in four dimensions must be spherical: versions of the AdS-Schwarzschild black holes exist whose horizon is a surface of arbitrary genus. We shall consider the case of a toroidal AdS-Schwarzschild black hole, with isometry group $\R \times T^2$. The spherical case has been studied in \cite{Holzegel:2013kna,Holzegel:2011uu}.

The main difference between the toric AdS-Schwarzshild black holes and the spherical AdS-Schwarzschild black hole can be seen at the level of the null geodesics. In the equation governing null geodesics in the toric black hole background, there is no analogue of the centrifugal force that appears for the spherical black hole. Crudely, there exist trapped null geodesics in the spherically symmetric case owing their existence to a combination of the centrifugal repulsion and the confining nature of the  AdS boundary. For the toric black hole, the centrifugal force is absent, and so no trapped null geodesics exist. This has a profound effect on the behaviour of solutions to the Klein-Gordon equation on this background.

Aside from the fact that solutions of \eq{EVE} are of interest in the context of classical relativity, they have also received considerable attention in the context of the putative AdS/CFT correspondence. This conjectures a relation between gravitational theories with negative cosmological constant, and conformal field theories in one fewer dimension. In this context linear fields on the planar\footnote{The difference between planar and toric AdS/Schwarzschild is purely topological: the toric solution arises by periodically identifying the planes of symmetry present in the planar solution.} AdS-Schwarzschild solution are studied, as they are conjectured to be dual to conformal field theories in Minkowski space at finite temperature.

\subsection*{The main results}

We shall first consider solutions of the Klein-Gordon equation:
\begin{equation}\label{introKGE}
\Box_g u + \frac{\alpha}{l^2}u = 0,
\end{equation}
where $\alpha < \frac{9}{4}$. (When dealing with integrated decay due to a technical limitation we will restrict to $\alpha^*<\alpha<\frac{9}{4}$ where $\alpha^*$ is a constant with the property $\alpha^*\in \left(\frac{5}{4},\frac{27}{{16}} \right)$, ($\alpha^* \approx 1.46$). Our first result is an extension of the results of \cite{warn2} to the toric AdS-Schwarzschild black hole. That is to say, for a stationary time foliation which is regular at the horizon, we shall establish the following result:
\begin{reptheorem}{thm1}
Suppose $u$ is a solution to \eqref{KGE} satisfying suitable (Dirichlet, Neumann or Robin) boundary conditions at infinity. Let $\mathcal{E}[u]$ be the renormalised energy density of the field $u$. Then for any $T>0$ we have:
\be
\int_{\{t=T\}} \mathcal{E}[u] \lesssim  \int_{\{t=0\}}\mathcal{E}[u],
\ee
with the implicit constant independent of $T$.
\end{reptheorem}
Here and elsewhere, $\mathcal{E}[u]$ is an energy density which does \emph{not} degenerate at the event horizon, so we gain full control of all derivatives up to, and including, the horizon. Having established boundedness, we then consider the question of decay for solutions of \eq{introKGE}. We establish an integrated decay estimate


\begin{reptheorem}{thm2}
Suppose $u$ is a smooth solution to \eqref{KGE} satisfying Dirichlet, Neumann or Robin boundary conditions at infinity. Let $\mathcal{E}[u]$ be the renormalised energy density of the field $u$. Then for any $T>0$ we have:
\be
\int_{\{0<t< T\}} \frac{\mathcal{E}[u]}{r^3} \lesssim  \int_{\{t=0\}}\mathcal{E}[u],
\ee
with the implicit constant independent of $T$.
\end{reptheorem}
Here $r$ is a radial coordinate with null infinity corresponding to $r\to\infty$. Clearly, the weight on the left hand side of this estimate is weaker than that on the right. In fact, this degeneration occurs only for derivatives tangent to $\mi$. We can improve the weight at the cost of losing a derivative:
\begin{reptheorem} {thm3}
Suppose $u$ is a smooth solution to \eqref{KGE} satisfying the same boundary conditions as above. Let $\mathcal{E}[u]$ be the renormalised energy density of the field $u$. Then for any $T>0$ we have:
\be
\int_{\{0<t< T\}}{\mathcal{E}[u]} \lesssim  \int_{\{t=0\}}\left( \mathcal{E}[u] + \mathcal{E}[u_t]\right) ,
\ee
with the implicit constant independent of $T$.\\
\end{reptheorem}  
It is likely that one does not need to lose a whole derivative to improve the weight, but we shall not pursue this point here. 

In order to establish the results above, we combine the vector field method with the renormalisation methods of \cite{warn1}. The boundedness proof follows a very similar approach to that of \cite{warn2}. To establish decay, we make use of currents constructed from Morawetz vector fields. In contrast to the spherical black holes, the absence of a photon sphere simplifies the construction of appropriate currents.

Combining the integrated decay estimate of Theorem \ref{thm3} with the result of Theorem \ref{thm1}, and an argument based on the redshift, we can establish:
\begin{reptheorem}{thm3.5}
Suppose $u$ is a solution to \eqref{KGE} satisfying suitable (Dirichlet, Neumann or Robin) boundary conditions at infinity. Let $\mathcal{E}[u]$ be the renormalised energy density of the field $u$. Then for any $T>0$ we have:
\be
\int_{\{t=T\}} \mathcal{E}[u] \lesssim  \frac{1}{(1+T)^n} \sum_{k=0}^n\int_{\{t=0\}}\mathcal{E}\left [\p_t^ku \right],
\ee
with the implicit constant independent of $T$.
\end{reptheorem}

It is worth contrasting this result with that of \cite{Holzegel:2013kna,Holzegel:2011uu} for the spherical AdS-Schwarzschild and Kerr AdS black holes. They established that in this background solutions of the Klein-Gordon equation satisfy a decay estimate of the form:
\be
\int_{\{t=T\}} \mathcal{E}[u] \lesssim \frac{1}{\log(2+T)}  \int_{\{t=0\}}\left( \mathcal{E}[u] + \mathcal{E}[u_t]\right) .
\ee
and moreover this cannot be improved. The slower rate of decay here is due to the existence of null geodesics which \emph{never} cross the horizon: they are stably trapped between the centrifugal barrier and null infinity (which for these purposes may be thought of as a reflecting barrier).

The fact that our estimates exhibit a loss, either in weight or in derivatives, is due to a `trapping at infinity'. There exist null geodesics which start far from the black hole, moving orthogonally to the radial direction, and which take an arbitrarily long time to cross the horizon. Using this fact, one can establish using the Gaussian beam methods of \cite{GBS} that:
\begin{reptheorem}{thm4} 
There exists no constant $C>0$, independent of $T$, such that the estimate 
\be
\int_{\{0<t< T\}}{\mathcal{E}[u]} \leq C  \int_{\{t=0\}}\mathcal{E}[u],
\ee
holds for all smooth solutions $u$ of \eqref{KGE}.
\end{reptheorem}   
\noindent This result is very robust, in particular it does not depend on the boundary conditions, nor on the value of the Klein-Gordon mass.

\subsection*{Outline of the paper} This paper consists of four main sections. The first one defines the spacetime, sets up the appropriate hypersurfaces we will need and states the relevant divergence theorem. The second section sets up the initial boundary value problems (IBVP) we are interested in, the appropriate re-normalisation scheme for well posedness as seen in \cite{warn1} and defines the necessary tensorial quantities for proving decay. The third section proves energy decay for the IBVPs, taking the approach of \cite{warn2}. The final section then seeks to obtain qualitative time decay rates for the energy of solutions of the IBVP in the time coordinate using vector field methods found in \cite{mora1} and \cite{DRS}.

\section{Toric AdS Schwarzschild black hole}  
\subsection{The Manifold}
We define the exterior of the toric AdS-Schwarzchild black hole with mass $M$ and AdS radius $l$ to be the following manifold with boundary
\[
\mathcal{M} = \mathbb{R}_{t \ge 0} \times \mathbb{R}_{r\ge r_+}\times \mathbb{T}^2,
\]
with Lorentzian metric
\[
g = -\left( \frac{-2M}{r}+\frac{r^2}{l^2}\right)dt^2+\frac{4Ml^2}{r^3}dtdr +\left( \frac{2Ml^4}{r^5}+\frac{l^2}{r^2}\right) dr^2 + r^2(dx^2+dy^2). 
\]
Where $\mathbb{T}^2$ denotes the two dimensional torus and $r_+ = (2Ml^2)^{\frac{1}{3}}$ is the event horizon for this space. This metric was first stated in the papers \cite{metric1,metric2}.\\
In these coordinates $(t,r,x,y)$ we make the usual periodic identification for the torus 
\[
x\sim x+p_1,\hspace{2pt} y\sim y+p_2.
\]
As the toric domain is compact integrating over it shall prove no problem and these periods will typically not appear explicitly. There appears to be no obstacle in extending our results to the planar black hole (without identification of $x, y$), provided one assumes some decay for the field in the $x, y$ directions.

For later asymptotic analysis it is also helpful to know the expression of the cometric
\[
g^{-1} = -\left( \frac{2Ml^4}{r^5}+\frac{l^2}{r^2}\right)\pa_t^2 +\frac{4Ml^2}{r^3}\pa_t\pa_r + \left(\frac{-2M}{r}+\frac{r^2}{l^2}\right) \pa_r^2 + \frac{1}{r^2}(\pa_x^2+\pa_y^2). 
\]
\subsection{Hypersurfaces and Measures}
In this paper we will make extensive use of the divergence theorem. 
With this in mind it is helpful to establish a few hypersurfaces and measures.\\
The volume element for this space is
\[
d^4Vol = r^2drdtdxdy = r^2d\eta.
\]
We introduce the following spacetime slab
\[
\mathcal{M}_{[T_1,T_2]} = \mathbb{R}_{T_1\le t \le T_2} \times \mathbb{R}_{r\ge r_+}\times \mathbb{T}^2.
\]
We now define the hypersurfaces we will need
\begin{itemize}
\item $\Sigma_t$ the hypersurface of constant $t$. This surface has future directed unit normal given by
\[
\begin{split}
n &= \sqrt{-g^{tt}}\pa_t - \frac{g^{rt}}{\sqrt{-g^{tt}}}\pa_r\\
 &= \sqrt{\frac{2l^4M}{r^5}+\f{l^2}{r^2}}\pa_t - \frac{2lM}{\sqrt{2l^2Mr+r^4}}\pa_r,
 \end{split}
\]
and induced surface measure
\[
\begin{split}
dS_{\Sigma_t}&=\sqrt{-g^{tt}}r^2drdxdy\\
& = l\sqrt{\left( \frac{2l^2M}{r} + r^2\right) }drdxdy.
\end{split}
\]
A simple calculation shows that $\Sigma_t$ is a regular spacelike hypersurface up to and including the horizon. The following notation will also be useful
\[
\Sigma_t^{[R_1,R_2]} =\Sigma_t\cap \{R_1\le r\le R_2\}.
\]
\item $\Sigma_r$ the hypersurface of constant $r$. This surface has unit normal given by 
\[
\begin{split}
m &= \sqrt{g^{rr}}\pa_r - \frac{g^{tr}}{{\sqrt{g^{rr}}}}\pa_t\\
&= \frac{2l^3M}{\sqrt{-2l^2Mr^5+r^8}}\pa_t + \sqrt{\left( -\frac{2M}{r}+\f{r^2}{l^2}\right) }\pa_r,
\end{split}
\]
and induced surface measure
\[
\begin{split}
dS_{\Sigma_r}&= \sqrt{g^{rr}}r^2dtdxdy\\
&= \sqrt{-2Mr^3 + \frac{r^6}{l^2}} drdxdy.
\end{split}
\]
Notice that $m$ becomes singular and $dS_{\Sigma_r}$ degenerates as we approach the horizon. The combination $m^\mu dS_{\Sigma_r}$ is well behaved and gives the appropriate normal volume element to the surface. We again will make use of the notation
\[
\Sigma_r^{[T_1,T_2]} =\Sigma_r\cap \{T_1\le t\le T_2\}.
\]
\item The surface $T^2_{t,r}$ denotes the intersection of $\Sigma_t$ and $\Sigma_r$. It has induced surface measure
\[
dS_{T^2_{r,t}}= r^2 dxdy.
\]
\item We will also denote the hypersurface of the event horizon by
\[
\mh=\{r=r_+\},
\]
and define null infinity $\mi$ formally as
\[
\mi = \{r=\infty\}.
\]
\end{itemize} 
While working in this spacetime it is helpful to initially restrict to a finite region of the $t$ and $r$ coordinates then take limits to recover the full space. With this in mind we define
\[
B = \{(t,r,x,y)\in [T_1,T_2]\times[R_1,R_2]\times T^2\}.
\]

\begin{figure}[h]
\centering
\includegraphics{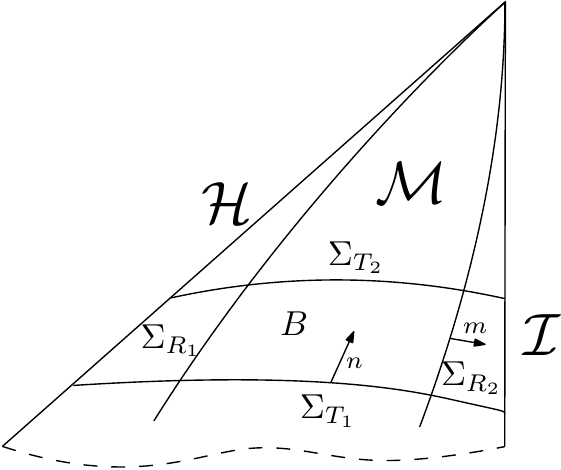}
\caption{Penrose diagram of the spacetime}
\end{figure}
\noindent We now state the divergence theorem for this setting. If we take $J^\mu$ to be a $C^1$ vector field on $\mathcal{M}$ then
\begin{equation}\label{DT}
\begin{split}
\int_{B} - \gr_\mu J^\mu d^4Vol &= \int_{{\Sigma^{[R_1,R_2]}_{T_2}}}J_\mu n^\mu dS_{\Sigma_{T_2}} - \int_{{\Sigma^{[R_1,R_2]}_{T_1}}}J_\mu n^\mu dS_{\Sigma_{T_1}} \\&+\int_{{\Sigma^{[T_1,T_2]}_{R_1}}}J_\mu m^\mu dS_{\Sigma_{R_1}} - \int_{{\Sigma^{[T_1,T_2]}_{R_2}}}J_\mu m^\mu dS_{\Sigma_{R_2}}.
\end{split}
\end{equation}
Providing the limits exist we can extend this to the spacetime slab $\mathcal{M}_{[T_1,T_2]}$ by sending $R_1\to r_+$ and $R_2 \to \infty$.\\ \\
To simplify our dealings with tangential terms we will denote the connection of the induced metric on tori of constant $t$ and $r$ by $\slashed{\gr}$.  We note that in our coordinate system for a function $u\in C^1(\mf)$ we have
\begin{equation*}
|\su|^2 = \frac{1}{r^2}\left(u_x^2+u_y^2 \right). 
\end{equation*}
\section{The Klein-Gordon Equation}
\subsection{Klein-Gordon Equation}
The Klein-Gordon for an asymptotically Anti de-Sitter spacetime is given as 
\begin{equation}\label{KGE}
\Box_g u + \frac{\alpha}{l^2}u = 0,
\end{equation}
where $\alpha < \frac{9}{4}$ obeys the Breitenlohner-Freedman bound \cite{bfb}. \\
Using the coordinate expansion 
\[
\Box_g u = \frac{1}{\sqrt{|g|}}\pa_\mu\left(\sqrt{|g|}g^{\mu\nu}\pa_\nu u \right), 
\]
for the wave operator, \eq{KGE} takes the form:
\[
 -\left( \frac{2ml^4}{r^5}+\frac{l^2}{r^2}\right) u_{tt}+\frac{1}{r^2}\pa_r\left(r^2\left( \frac{-2m}{r}+\frac{r^2}{l^2}\right) u_r\right) +\frac{4ml^2}{r^3}u_{rt} -\frac{2ml^2}{r^4}u_t + \frac{1}{r^2}\Delta_{(x,y)}u + \frac{\alpha}{l^2}u =0.
\]
We note that second order radial derivatives degenerate as $r\to r_+$. If we express this PDE in the form
\[
-u_{tt}+ B u_t+Lu=0,
\] 
for some spatial differential operators $B$, $L$ of degree one and two respectively, we find $L$ is not strongly elliptic at the horizon and that the standard energy methods are insufficient to prove boundedness of the full $H^1$ norm. We overcome this by exploiting the redshift effect for black holes. The details on how this can be done for general black holes can be found in \cite{DBH}. We must also confront the issue that standard energy fluxes of this PDE diverge as $r\to\infty$ this is fixed by a renormalization process, in particular by reformulating the problem in terms of the twisted derivative.  
\subsection{The Twisted derivative}
Due to the fact that asymptotically anti de-Sitter spaces do not admit a Cauchy hypersurface, in order to have a well-posed problem it is necessary to prescribe boundary data on $\mathcal{I}$. While the standard $\pa_t$ energy currents are sufficient to establish well-posedness for Dirichlet type data, for more general boundary conditions the solution has less radial decay and the standard energy diverges. This was resolved in \cite{warn1} for the range $\alpha \in (\frac{5}{4}, \frac{9}{4})$ by treating the well-posedness of the problem in asymptotically AdS spaces with the use of a re-normalisation scheme. The core idea is to reformulate the energies in terms of `twisted' derivatives
\[
\grt_\mu u = f\gr_\mu \left( f^{-1} u\right), 
\] 
for a `twisting' function $f>0$ which captures the decay of the field near $\mathcal{I}$. 

Defining $\grt^\dagger$ as the formal adjoint of $\grt$ with respect to the spacetime $L^2$ inner product
\[
\grt^\dagger_\mu u = -\frac{1}{f}\gr_\mu(f u),
\]
we note we can rewrite the Klein-Gordon equation in the form
\[
-\grt^\dagger_\mu \grt^\mu u - Vu = 0,
\]
where
\[
V = -\left( \frac{\gr_\mu\gr^\mu f}{f} + \frac{\alpha}{l^2}\right). 
\]

\subsection{Boundary conditions}
It will be helpful to define $\kappa > 0$ by $\alpha = \frac{9}{4} - \kappa^2$. We say that $u \in C^1(\mf, \mathbb{R})$ obeys Dirichlet, Neumann or Robin boundary conditions if the following hold
\begin{itemize}
\item Dirichlet, $\kappa >0$ and
\[
r^{\frac{3}{2}-\kappa}u \to 0, \text{ as } r \to \infty, 
\]
\item Neumann, $\kappa \in (0, 1)$ and
\[
r^{\frac{5}{2}+\kappa} \grt_ru \to 0, \text{ as } r \to \infty, 
\]
\item Robin, $\kappa \in (0, 1)$ and
\[
r^{\frac{5}{2}+\kappa} \grt_ru + \beta r^{\frac{3}{2}-\kappa}u \to 0, \text{ as } r \to \infty ,
\]
where $\beta \in C^\infty(\mathcal{I})$.\\
We will throughout this paper assume as part of our definition of Robin boundary conditions that $\pa_t\beta=0$ and $\beta \ge 0$.
\end{itemize}
\subsection{Well-posedness and asymptotics}
In this section we state the well-posedness results and the asymptotic behaviour of the solution as found in \cite{warn1}.\\
Firstly let $\Sigma$ be a smooth spacelike hypersurface which extends to $\mathcal{I}$ and meets it orthogonally\footnote{with respect to any conformal regularisation of the boundary.}. We
let $n_\Sigma$ be the future directed unit normal of $\Sigma$ and define
\[
\hat{n}_\Sigma = rn_\Sigma,
\]
then let $\mathcal{D}^+(\Sigma)$ denote the future Cauchy development of $\Sigma$ together with the portion of $\mathcal{I}$ lying in the future of $\Sigma$.\\
We choose our twisting function $f$ such that $fr^{\frac{3}{2}-\kappa}= 1+\mo(r^{-2})$ as $r \to \infty$ in order to define the norms
\[
\begin{split}
\|u\|_{\Lu^2(\Sigma)}^2 &= \int_\Sigma \frac{u^2}{r} dS_\Sigma,\\
\|u\|_{\Hu^1(\Sigma, \kappa)}^2 &= \int_\Sigma\left( |\grt u|^2 + \frac{u^2}{r^2}\right) rdS_\Sigma,
\end{split}
\]
and the space $\Hu_0^1(\Sigma,\kappa)$ as the completion of the smooth functions supported away from $\mi$. Different choices of twisting function $f$ satisfying the same asymptotic condition give rise to equivalent norms.
\begin{Theorem}[Well-posedness and asymptotics]
$ $\newline
\begin{itemize}
\item
Let $u_0 \in \Hu_0^1(\Sigma,\kappa),\hspace{2pt} u_1 \in \Lu^2(\Sigma)$. Then there exists a unique $u$ such that $u|_\Sigma = u_0, \hspace{2pt} \hat{n}_\Sigma u|_{\Sigma} = u_1$ which weakly solves
\[
\Box_gu + \frac{1}{l^2}\left(\frac{9}{4} - \kappa^2 \right) u = 0,
\]
in $\mathcal{D}^+(\Sigma)$ with Dirichlet boundary conditions on $\mi$. If $\mathcal{S}$ is any space like surface in $\mathcal{D}^+(\Sigma)$ meeting $\mi$ orthogonally then $u|_{\mathcal{S}} \in \Hu_0^1(\mathcal{S},\kappa) , \hspace{2pt} \hat{n}_\Sigma u|_{\mathcal{S}} \in \Lu^2(\mathcal{S})$.
\item
Let $u_0 \in \Hu^1(\Sigma,\kappa),\hspace{2pt} u_1 \in \Lu^2(\Sigma)$ and $0<\kappa<1$. Then there exists a unique $u$ such that $u|_\Sigma = u_0, \hspace{2pt} \hat{n}_\Sigma u|_{\Sigma} = u_1$ which weakly solves
\[
\Box_gu + \frac{1}{l^2}\left(\frac{9}{4} - \kappa^2 \right) u = 0,
\]
in $\mathcal{D}^+(\Sigma)$ with Neumann or Robin boundary conditions (for given $\beta$) on $\mi$. If $\mathcal{S}$ is any space like surface in $\mathcal{D}^+(\Sigma)$ meeting $\mi$ orthogonally then $u|_{\mathcal{S}} \in \Hu^1(\mathcal{S},\kappa),$ $\hat{n}_\Sigma u|_{\mathcal{S}} \in \Lu^2(\mathcal{S})$.
\end{itemize}
If the initial conditions satisfy stronger regularity and asymptotic conditions, then \\$u|_{\mathcal{S}} \in H_{\text{loc.}}^k(\mathcal{S}), $ $ \hat{n}_{\mathcal{S}}u|_{\mathcal{S}}\in H_{\text{loc.}}^{k-1}(\mathcal{S})$ for any integer $k\ge 2$ and we obtain an asymptotic expansion
\[
 u = \frac{1}{r^{\frac{3}{2}-\kappa}}\left(u^- + \mo(r^{-1-\kappa}) \right) +\frac{1}{r^{\frac{3}{2}+\kappa}}\left(u^+ + \mo(r^{\kappa-1}) \right).  
\]
Where the functions $u^- \in H^{k-1}(\mi), \hspace{2pt} u^+ \in H^{k-2}(\mi)$ satisfy
\[
\begin{split}
 u^- &= 0 \text{ if } u \text{ satisfies Dirichlet data}, \\
 u^+ &= 0  \text{ if } u \text{ satisfies Neumann data}, \\
 2\kappa u^+ - \beta u^- &= 0 \text{ if } u \text{ satisfies Robin data}.
 \end{split}
\]
\end{Theorem}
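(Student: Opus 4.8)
The plan is to follow the renormalisation strategy of \cite{warn1}, adapted to the present metric. Since the slices $\Sigma_t$ are spacelike up to and including $\mh$ and the metric coefficients are smooth and non-degenerate there, all the genuinely new analytic work occurs at $\mi$; near the horizon the problem is that of a standard regularly hyperbolic Cauchy problem and no special care is needed beyond keeping track of the (non-degenerate) energy. So I would first recast \eqref{KGE} in the twisted form $-\grt^\dagger_\mu\grt^\mu u - Vu = 0$ with $f$ chosen so that $fr^{\frac{3}{2}-\kappa} = 1 + \mo(r^{-2})$, and rewrite the evolution as a first-order system for $(u,\hat{n}_\Sigma u)$.

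The core analytic input is a coercivity (twisted Hardy) inequality: on any slice $\Sigma$ meeting $\mi$ orthogonally, the twisted Dirichlet energy $\int_\Sigma |\grt u|^2 r\, dS_\Sigma$ together with the lower-order term controls $\|u\|_{\Hu^1(\Sigma,\kappa)}^2$, with the boundary contribution at $\mi$ either absent (Dirichlet, where $r^{\frac{3}{2}-\kappa}u\to 0$ kills it), vanishing by the natural boundary condition (Neumann), or carrying a favourable sign (Robin, using $\beta\ge 0$). The Breitenlohner--Freedman condition $\alpha < \frac{9}{4}$, i.e.\ $\kappa$ real, is precisely what makes this inequality hold, and in the Neumann/Robin case $\kappa<1$ is what keeps the relevant boundary term finite. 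With this energy identity in hand, existence of a weak solution follows from a Galerkin approximation using eigenfunctions of the self-adjoint realisation of the twisted spatial operator on $\Sigma$ (with the relevant boundary condition built into its domain), passing to the limit via the uniform energy bound; uniqueness and continuous dependence follow by applying the energy estimate to differences, using $\pa_t\beta=0$ so that the Robin boundary term is conserved. Propagation of $u|_{\mathcal S}$ into $\Hu_0^1(\mathcal S,\kappa)$ (resp.\ $\Hu^1(\mathcal S,\kappa)$) for later slices $\mathcal S$ is read off from the same flux identity via the divergence theorem \eqref{DT}, the fluxes through $\Sigma_r$ being controlled in the limit $R_2\to\infty$ by the boundary condition.

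For the improved regularity and the asymptotic expansion, I would commute the equation with $\pa_t$ (Killing, and preserving the boundary conditions since $\pa_t\beta=0$) to gain $H^k_{\mathrm{loc}}$ control in $t$, then invoke elliptic estimates for the twisted spatial operator on fixed slices to trade this for radial regularity. Near $\mi$ the equation degenerates to an Euler-type ODE in $r$ whose indicial roots are $\frac{3}{2}\pm\kappa$; feeding in the regularity just obtained and integrating this ODE inward from the boundary yields the two-term expansion $u = r^{-\frac{3}{2}+\kappa}(u^- + \mo(r^{-1-\kappa})) + r^{-\frac{3}{2}-\kappa}(u^+ + \mo(r^{\kappa-1}))$ with $u^\mp$ of the claimed Sobolev regularity on $\mi$, and the relations $u^- = 0$, $u^+ = 0$, $2\kappa u^+ - \beta u^- = 0$ are then immediate consequences of the three boundary conditions.

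The main obstacle is the coercivity step: establishing the twisted Hardy inequality with the right constant and, uniformly across the three boundary conditions (and across $\kappa\in(0,1)$ for Neumann/Robin), controlling or disposing of the boundary term at $\mi$. Once that is secured, the remainder is a routine if lengthy adaptation of the standard energy and elliptic machinery, together with the near-boundary ODE analysis for the expansion.
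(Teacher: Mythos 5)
Your proposal is, in approach, exactly what stands behind this theorem: the paper itself gives no proof but imports the statement from \cite{warn1}, whose argument proceeds precisely by the twisted (renormalised) reformulation, coercivity of the twisted energy via a Hardy-type inequality tied to the Breitenlohner--Freedman bound (with $\kappa<1$ ensuring finiteness of the Neumann/Robin boundary terms), an energy-based existence and uniqueness argument, commutation with $\partial_t$ plus elliptic estimates for higher regularity, and the indicial-root analysis $\tfrac32\pm\kappa$ at $\mathcal{I}$ for the asymptotic expansion. So your outline is consistent with the paper's (cited) proof and I see no gap worth flagging beyond the fact that it is a sketch whose details are carried out in \cite{warn1}.
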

\begin{rem}
We will for the remainder of this paper assume our solutions are smooth, and admit asymptotic expansions to all orders. (Such solutions can be constructed from sufficiently smooth initial data.) This assumption can later be removed by a density argument. We state the asymptotics for the solution of the Dirichlet, Neumann and Robin problems, when $\kappa \in (0, 1)$:\\
Dirichlet:
\[
\begin{split}
u = \O{r^{-\frac{3}{2} - \kappa}},&\hspace{6pt}
\gr_tu =  \O{r^{-\frac{3}{2} - \kappa}},\\
|\su| =  \O{r^{-\frac{5}{2} - \kappa}},&\hspace{6pt}
\grt_ru = \O{r^{-\frac{5}{2} - \kappa}}, \\
\end{split}
\]
Neumann/Robin:
\begin{equation}\label{DRN}
\begin{split}
u = \O{r^{-\frac{3}{2} + \kappa}},&\hspace{6pt}
\gr_tu =  \O{r^{-\frac{3}{2} + \kappa}},\\
|\su| =  \O{r^{-\frac{5}{2} + \kappa}},&\hspace{6pt}
\grt_ru = \O{r^{-\frac{5}{2} + \kappa}}. \\
\end{split}
\end{equation}

\end{rem}
\subsection{The Twisted Energy Momentum Tensor} \label{TEMT}
When deriving energy estimates for wave equations on the exterior of black holes one typically considers the energy momentum tensor. By choosing suitable vector fields as multipliers and commutators one can prove boundedness and decay of solutions (for a discussion of the history of this approach in the black hole setting see \cite{DBH}). In the case of AdS spaces with Neumann and Robin boundary data one finds that for the standard energy estimate obtained by using the vector field $T=\pa_t$ as a multiplier, the energy is no longer finite. This is resolved by the introduction of the twisted energy momentum tensor defined as
\[
\temt_{\mu\nu}[u] = \grt_\mu u \grt_\nu u - \hf g_{\mu\nu}\left(\grt_\sigma u \grt^\sigma u + Vu^2 \right), 
\]       
where again:
\[
V = -\left( \frac{\gr_\mu\gr^\mu f}{f} + \frac{\alpha}{l^2}\right). 
\]
Unfortunately the divergence of the twisted energy-momentum tensor (in contrast to the untwisted version) is no longer vanishing for solutions to \eqref{KGE}. However, it does have some useful properties
\begin{Lemma}[taken from \cite{warn2}]\ \\
\begin{itemize}
\item For $\psi \in C^2(\mf)$
\[
\gr_\mu\temt^\mu{}_\nu[\psi] = \left( -\grt^\dagger_\mu \grt^\mu \psi - V\psi\right)\grt_\nu \psi + \tilde{S}_\nu[\psi], 
\]
where
\[
\tilde{S}_\nu[\psi] = \frac{\grt^\dagger_\nu(fV)}{2f}\psi^2 + \frac{\grt^\dagger_\nu f}{2f}\grt_\sigma \psi\grt^\sigma \psi.
\]
\item For $u$ a solution to the PDE and $X$ a smooth vector field. Defining
\[
\tilde{J}^X_\mu[u]=\temt_{\mu\nu}[u]X^\nu, \hspace{10pt} \tilde{K}^X[u] = {}^X\pi_{\mu\nu}\temt^{\mu\nu}[u] + X^\nu\tilde{S}_\nu[u],
\]
where ${}^X\pi_{\mu\nu}$ is the deformation tensor 
\[
{}^X\pi_{\mu\nu} =  \hf \left( \gr_\mu X_\nu + \gr_\nu X_\mu\right)= \hf\left( \mathcal{L}_X g\right)_{\mu\nu}, 
\] 
we have
\[
\gr^\mu\tilde{J}^X_\mu[u] = \tilde{K}^X[u].
\]
\item If $f$ is chosen such that $V\ge 0$ then $\temt_{\mu\nu}$ satisfies the dominant energy condition, i.e. for a future direct causal vector field $Y$ we have that $-\temt^\mu{}_{\nu}Y^\nu$ is future directed and timelike. If $Y$ is timelike then $\temt_{\mu\nu}Y^\mu Y^\nu$ controls all the first order derivatives of $u$. 
\end{itemize}
\end{Lemma}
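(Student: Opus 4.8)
The plan is to establish the three bullet points by direct computation, exploiting the structure of $\temt_{\mu\nu}$ as a ``twisted'' analogue of the usual stress tensor. For the first item, I would compute $\gr_\mu\temt^\mu{}_\nu[\psi]$ term by term. Writing $\temt_{\mu\nu} = \grt_\mu\psi\,\grt_\nu\psi - \hf g_{\mu\nu}(\grt_\sigma\psi\,\grt^\sigma\psi + V\psi^2)$, the divergence produces $(\gr^\mu\grt_\mu\psi)\grt_\nu\psi + \grt_\mu\psi\,\gr^\mu\grt_\nu\psi - \grt^\mu\psi\,\gr_\nu\grt_\mu\psi - \hf\gr_\nu(V)\psi^2 - V\psi\,\gr_\nu\psi$. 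The key subtlety, absent in the untwisted case, is that $\grt_\mu$ and $\gr_\nu$ do \emph{not} commute and $\grt_\mu$ is not symmetric in the way $\gr_\mu$ is; the ``middle'' two terms $\grt_\mu\psi\,\gr^\mu\grt_\nu\psi - \grt^\mu\psi\,\gr_\nu\grt_\mu\psi$ therefore fail to cancel and instead leave a residual involving $\gr_\mu\gr_\nu f$ and $\gr_\mu f$. The clean way to organise this is to substitute $\grt_\mu\psi = \gr_\mu\psi - (\gr_\mu f/f)\psi$ everywhere, expand, and collect: one should recover $(-\grt^\dagger_\mu\grt^\mu\psi - V\psi)\grt_\nu\psi$ as the ``equation-of-motion'' piece (using $\grt^\dagger_\mu\grt^\mu\psi = -\tfrac1f\gr^\mu(f\grt_\mu\psi)$), plus precisely the error term $\tilde S_\nu[\psi] = \tfrac{\grt^\dagger_\nu(fV)}{2f}\psi^2 + \tfrac{\grt^\dagger_\nu f}{2f}\grt_\sigma\psi\,\grt^\sigma\psi$. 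I expect this bookkeeping — keeping track of which derivatives are twisted and correctly identifying the $\psi^2$ coefficient as $\grt^\dagger_\nu(fV)/2f$ rather than something cruder — to be the main technical obstacle; everything else is routine.

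For the second item I would simply set $\psi = u$ a solution, so the equation-of-motion term in item one vanishes, leaving $\gr_\mu\temt^\mu{}_\nu[u] = \tilde S_\nu[u]$. Then for the current $\tilde J^X_\mu[u] = \temt_{\mu\nu}[u]X^\nu$ I compute $\gr^\mu\tilde J^X_\mu = (\gr^\mu\temt_{\mu\nu})X^\nu + \temt_{\mu\nu}\gr^\mu X^\nu = \tilde S_\nu[u]X^\nu + \temt^{\mu\nu}\,{}^X\pi_{\mu\nu}$, where in the last step I use that $\temt_{\mu\nu}$ is symmetric so only the symmetric part ${}^X\pi_{\mu\nu} = \hf(\gr_\mu X_\nu + \gr_\nu X_\mu)$ of $\gr_\mu X_\nu$ contributes. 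This is exactly $\tilde K^X[u]$ as defined, so the identity follows immediately.

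For the third item, suppose $f$ is chosen with $V\ge 0$. The statement that $\temt_{\mu\nu}$ satisfies the dominant energy condition reduces to the same linear-algebra fact used for a scalar field: at a point, choose coordinates in which $\grt_\mu u$ is either timelike, null, spacelike or zero, and check that $-\temt^\mu{}_\nu Y^\nu$ is future-directed causal for every future causal $Y$. Concretely, $\temt_{\mu\nu}Y^\mu Y^\nu = (\grt_Y u)^2 + \hf|Y|^2\big(|\grt u|^2 + Vu^2\big)$ wait — I would instead write $-\temt^\mu{}_\nu Y^\nu Z_\mu$ for two future causal $Y,Z$ and show it is nonnegative by splitting $\grt u$ into components parallel and orthogonal to the timelike plane spanned by $Y,Z$, exactly as in the classical argument, the extra term $\hf g_{\mu\nu}Vu^2$ only helping since $V\ge0$ and $-g_{\mu\nu}Y^\nu Z^\mu \ge 0$ for future causal vectors. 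When $Y$ is timelike one normalises $Y$ and reads off that $\temt_{\mu\nu}Y^\mu Y^\nu$ bounds $(\grt_t u)^2$, $|\su|^2$, $(\grt_r u)^2$ and $u^2$ (the last via the $Vu^2$ term together with the Hardy-type control built into the $\Hu^1$ norm), giving control of all first-order twisted derivatives of $u$. I anticipate no real difficulty here beyond citing the standard pointwise argument; the only thing to be careful about is that it is the \emph{twisted} gradient $\grt u$, not $\gr u$, that appears, so ``control of first derivatives'' is understood in the renormalised sense consistent with the $\Hu^1(\Sigma,\kappa)$ norm.
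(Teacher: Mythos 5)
Your proposal is correct and takes essentially the same route as the source the paper cites for this lemma (the paper itself gives no proof, deferring to \cite{warn2}, where the argument is exactly the direct computation you outline). Carrying out your expansion does close up as claimed — the cross terms collapse, using $\grt^\dagger_\nu f = -2\gr_\nu f$ and $\grt^\dagger_\nu(fV) = -2V\gr_\nu f - f\gr_\nu V$, to give precisely $\tilde{S}_\nu[\psi]$ — and your handling of the second and third bullets (symmetry of $\temt_{\mu\nu}$ under the product rule, and the standard pointwise dominant-energy argument for a tensor of the form $\omega_\mu\omega_\nu - \hf g_{\mu\nu}\left(\omega_\sigma\omega^\sigma + V\psi^2\right)$ with $V\ge 0$, with first-derivative control read off after decomposing $\grt u$ along and orthogonal to the timelike $Y$) is sound.
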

As we have that $\tilde{S}_\mu$ and thus $\tilde{K}^X[u]$ depend only on the $1-$jet of $u$ we get that $\tilde{J}^X_\mu[u]$ is a compatible current in the sense of Christodoulou \cite{CCC}.  \\
Importantly if we have that $Z$ is a Killing field that preserves $f$, i.e. $\mathcal{L}_{Z}(f) = 0$ then $\tilde{J}^Z_\mu[u]$ is a conserved current.\\ \\
We remark that the renormalization by the twisted derivative encompasses the holographic renormalization of \cite{BS1} and \cite{BS2}. 
\section{Energy Decay}
With the notions of the previous sections in mind we are now in a position to prove an energy decay result for the Klein-Gordon equation in this setting. This follows the same method as in \cite{warn2} which deals with the spherical case.\\
Firstly we choose our twisting function. Experimentation suggests the following function 
\[
f(r) = r^{-\frac{3}{2} + \kappa}.
\]
This choice gives 
\[
V(r) = \frac{(3-2\kappa)^2M}{2r^3} > 0, 
\]
and we can easily verify that for the timelike vector field
\[
T= \pa_t,
\]
we have
\[
\mathcal{L}_T(f) =0,
\]
so that
\[
\gr^\mu \tilde{J}^T_\mu = 0.
\]
We now integrate over $B$ to get the following identity
\[
\int_{{\Sigma^{[R_1,R_2]}_{T_2}}}\tilde{J}^T_\mu n^\mu dS_{\Sigma_{T_2}} - \int_{{\Sigma^{[R_1,R_2]}_{T_1}}}\tilde{J}^T_\mu n^\mu dS_{\Sigma_{T_1}}=  \int_{{\Sigma^{[T_1,T_2]}_{R_2}}}\tilde{J}^T_\mu m^\mu dS_{\Sigma_{R_2}}-\int_{{\Sigma^{[T_1,T_2]}_{R_1}}}\tilde{J}^T_\mu m^\mu dS_{\Sigma_{R_1}},
\]
Through a long but straight forward calculation one arrives at 
\[
\begin{split}
\mathcal{E}_t(u;[R_1,R_2]) :&=\int_{{\Sigma^{[R_1,R_2]}_{t}}}\tilde{J}^T_\mu n^\mu dS_{\Sigma_{t}}\\&= \hf\int_{{\Sigma^{[R_1,R_2]}_{t}}}\left( -g^{tt}(\gr_t u)^2 + g^{rr}(\grt_ru)^2 + |\su|^2 +V(r)u^2\right)r^2drdxdy,\\
\mathcal{F}_r(u;[T_1,T_2]) :&= \int_{{\Sigma^{[T_1,T_2]}_{r}}}\tilde{J}^T_\mu m^\mu dS_{\Sigma_{r}} \\&=\int_{{\Sigma^{[T_1,T_2]}_{r}}}\left( g^{rt}(\gr_tu)^2+g^{rr}(\gr_t u)(\grt_r u) \right)r^2 dtdxdy.
\end{split}
\]
We now wish to take the limits $R_1 \to r_+$ and $R_2 \to \infty$ in the fluxes so that energy is defined across the whole exterior of the black hole. Approaching the event horizon we get
\[
\lim_{R_1 \to r_+} \mathcal{F}_{R_1}(u;[T_1,T_2]) = \int_{\mathcal{H}_{[T_1,T_2]}}g^{rt}(\gr_t u)^2 r_+^2 dtdxdy =: F(u;[T_1,T_2]),
\]  
we observe that $g^{rt}$ is positive on the horizon making this flux a positive quantity. \\
As for the contribution on $\mi$ we can quickly see from the asymptotics that for Dirichlet and Neumann data that
\[
\lim_{R_2 \to \infty} \mathcal{F}_{R_2}(u;[T_1,T_2])= 0,
\]   
however for time independent Robin data
\[
\begin{split}
\lim_{R_2 \to \infty} \mathcal{F}_{R_2}(u;[T_1,T_2])&= - \lim_{r \to \infty} \int_{{\Sigma^{[T_1,T_2]}_{r}}} \frac{\beta}{2l^2}\gr_t((r^{\frac{3}{2}-\kappa}u)^2)dtdxdy \\
& = \frac{1}{2l^2}\int_{T^2_{T_1,\infty}}(r^{\frac{3}{2}-\kappa}u)^2\beta dxdy - \frac{1}{2l^2}\int_{T^2_{T_2,\infty}}(r^{\frac{3}{2}-\kappa}u)^2\beta dxdy,
\end{split}
\] 
where we understand the terms in the integral at $r=\infty$ to mean $\lim_{r\to\infty}  r^{\frac{3}{2}-\kappa}u$.
We may thus define the renormalised energy for a function $u$ as 
\[
\begin{split}
E_{t}[u] &= \hf\int_{{\Sigma_{t}}}\left( -g^{tt}(\gr_t u)^2 + g^{rr}(\grt_ru)^2 + |\su|^2 +\frac{(3-2\kappa)^2M}{2r^3}u^2\right)r^2drdxdy\\ &+ \frac{1}{2l^2}\int_{T^2_{t,\infty}}(r^{\frac{3}{2}-\kappa}u)^2\beta dxdy,
\end{split}
\]
where we include the latter term for Robin data.
This energy is positive definite and finite for our boundary data. It also satisfies the useful identity
\begin{equation}\label{DEI}
E_{T_2}[u] = E_{T_1}[u] - F(u;[T_1,T_2]),
\end{equation}  
so that $E_{t}[u]$ is a strictly non-increasing function of $t$.
Using the redshift effect \cite{DBH}, we can remove the degeneracy at the horizon to establish:
\begin{Theorem}\label{thm1}
Suppose $u$ is a solution to \eqref{KGE} satisfying suitable (Dirichlet, Neumann or Robin) boundary conditions at infinity. Define the non-degenerate renormalised energy density $\mathcal{E}[u]$ by
\ben{renormalized energy}
\mathcal{E}[u] :=  \frac{1}{r} u^2  + r^4 (\grt_ru)^2 + (\gr_tu)^2 + r^2|\su|^2.
\een
Then there exists a constant $C=C(M, l, \kappa)>0$ such that  for any $T_1<T_2$ we have:
\begin{align*}
\int_{\Sigma_{T_2}}\mathcal{E}[u]dr dx dy  \leq C\int_{\Sigma_{T_1}}\mathcal{E}[u]dr dx dy 
\end{align*}
\end{Theorem}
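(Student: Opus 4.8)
The plan is to upgrade the degenerate estimate \eqref{DEI}, which is already established, to a non-degenerate one by a red-shift argument in the spirit of \cite{DBH, warn2}. With the twisting function fixed as $f=r^{-\frac32+\kappa}$ one has $V=\frac{(3-2\kappa)^2M}{2r^3}>0$, so $\temt_{\mu\nu}[u]$ obeys the dominant energy condition and, for any future-directed timelike $\pa_t$-invariant vector field $N$, the flux $\int_{\Sigma_\tau}\tilde{J}^N_\mu n^\mu\,dS_{\Sigma_\tau}$ is non-negative and comparable to $\int_{\Sigma_\tau}\mathcal{E}[u]\,dr\,dx\,dy$: the two energies agree away from $\mh$ and differ only in that $\mathcal{E}[u]$ retains the weight $r^4$ in the radial term, whereas the degenerate flux carries $g^{rr}r^2$, which vanishes at $r=r_+$. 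It therefore suffices to establish $\int_{\Sigma_{T_2}}\tilde{J}^N_\mu n^\mu\,dS\lesssim\int_{\Sigma_{T_1}}\tilde{J}^N_\mu n^\mu\,dS$ for a suitable such $N$.

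First I would construct $N$ as the red-shift multiplier associated with the positive surface gravity of $\mh$: a $\pa_t$-invariant, future-directed timelike field with $N=\pa_t$ for $r\ge r_1$ (some $r_1>r_+$ close to $r_+$) and such that the bulk term $\tilde{K}^N[u]={}^N\pi_{\mu\nu}\temt^{\mu\nu}[u]+N^\nu\tilde{S}_\nu[u]$ satisfies $\tilde{K}^N[u]\ge b\,\mathcal{E}[u]$ on $\{r_+\le r\le r_0\}$, $|\tilde{K}^N[u]|\le B\,\mathcal{E}[u]$ on $\{r_0\le r\le r_1\}$, and $\tilde{K}^N[u]=\tilde{K}^{\pa_t}[u]=0$ on $\{r\ge r_1\}$, for constants $r_+<r_0<r_1$ and $0<b\le B$ depending only on $M,l,\kappa$. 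The twisting contributions $\tilde{S}_\nu$ are of lower order and, since $f$ and $V$ are smooth and positive at $r_+$, do not disturb this near-horizon structure. Applying the divergence theorem \eqref{DT} to $\tilde{J}^N_\mu[u]$ on $\{T_1\le t\le T_2,\ r_+\le r\le R_2\}$ and letting $R_2\to\infty$, the boundary contribution at $\mi$ coincides with the one appearing in the degenerate estimate (it vanishes for Dirichlet and Neumann data and equals the Robin boundary term otherwise, which is absorbed into the renormalised energy), since $N=\pa_t$ there; the flux through $\mh$ is non-negative by the dominant energy condition; and the bulk term on $\{r\le r_0\}$ is non-negative. Discarding these good terms leaves
\[
\int_{\Sigma_{T_2}}\tilde{J}^N_\mu n^\mu\,dS_{\Sigma_{T_2}}\ \le\ \int_{\Sigma_{T_1}}\tilde{J}^N_\mu n^\mu\,dS_{\Sigma_{T_1}}\ +\ B\int_{\mathcal{M}_{[T_1,T_2]}\cap\{r_0\le r\le r_1\}}\mathcal{E}[u]\,d^4Vol.
\]

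The hard part will be to bound the error term on the right with a constant independent of $T_2-T_1$. On the compact annulus $\{r_0\le r\le r_1\}$, which stays away from both $\mh$ and $\mi$, the density $\mathcal{E}[u]$ is comparable to the degenerate density $\tilde{J}^{\pa_t}_\mu n^\mu$, so at each fixed time it is bounded by $E_t[u]\le E_{T_1}[u]$ by \eqref{DEI}; inserting this crude bound only produces a factor $T_2-T_1$, which is too lossy. To remove it I would follow the scheme of \cite{DBH, warn2} and combine the red-shift inequality above with the monotonicity \eqref{DEI} and with a local integrated energy estimate, coming from a Morawetz multiplier supported in $\{r\le r_1\}$, of the form $\int_{\mathcal{M}_{[T_1,T_2]}\cap\{r\le r_1\}}\mathcal{E}[u]\,d^4Vol\lesssim\int_{\Sigma_{T_1}}\mathcal{E}[u]\,dr\,dx\,dy$. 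The crucial feature of the toric geometry is that this local estimate entails no loss of weight or derivative: the region $\{r\le r_1\}$ contains no trapped null geodesics — and, as emphasised in the introduction, the toric black hole has no photon sphere at all — so the Morawetz bulk term is coercive there without degeneration. Feeding this back into the displayed inequality yields $\int_{\Sigma_{T_2}}\tilde{J}^N_\mu n^\mu\,dS\lesssim\int_{\Sigma_{T_1}}\tilde{J}^N_\mu n^\mu\,dS$, which by the first paragraph is the assertion of Theorem \ref{thm1}. The main obstacle is precisely this interplay between the red-shift estimate (indispensable near $\mh$) and the local integrated energy estimate (needed to absorb the near-horizon bulk); the remaining ingredients — the sign of $\tilde{K}^N[u]$ near the horizon, the comparability of the various energy densities, and the treatment of the boundary terms at $\mi$ — are routine, if lengthy, computations.
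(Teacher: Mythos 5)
Your setup (redshift vector field $N$, dominant energy condition, comparability of the $N$-flux with $\int\mathcal{E}[u]$, good sign of the horizon flux and of the bulk for $r\le r_0$) is the right starting point and matches the paper's strategy, but the step you yourself identify as the hard part is where the proposal breaks down. You discard the correct mechanism too quickly: the way to avoid the $(T_2-T_1)$ loss is \emph{not} an integrated decay input, but the differential form of the redshift estimate. Since the bulk term satisfies $\tilde{K}^N[u]\ge b\,\mathcal{E}[u]$ on $\{r\le r_0\}$ and, on $\{r\ge r_0\}$, the non-degenerate density is comparable to the degenerate $T$-flux density, one can trade the coercive near-horizon bulk for the flux itself and obtain $\frac{d}{dt}\mathbb{E}_t[u]\le-\varkappa\,\mathbb{E}_t[u]+C\,E_t[u]$ (this is exactly Lemma \ref{RSE}, quoted from \cite{Warnick:2013hba}). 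Combining with the monotonicity $E_t[u]\le E_{T_1}[u]$ from \eqref{DEI} and Gronwall gives $\mathbb{E}_t[u]\le \mathbb{E}_{T_1}[u]e^{-\varkappa(t-T_1)}+\frac{C}{\varkappa}E_{T_1}[u]$, hence Theorem \ref{thm1}, with no spacetime integral ever appearing. This is the paper's proof, and it needs no Morawetz estimate at all.

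By contrast, the local integrated estimate you invoke, $\int_{\mf_{[T_1,T_2]}\cap\{r\le r_1\}}\mathcal{E}[u]\,d^4Vol\lesssim\int_{\Sigma_{T_1}}\mathcal{E}[u]$, is asserted rather than proved, and the mechanism you propose for it (a Morawetz multiplier ``supported in $\{r\le r_1\}$'') does not work as stated: cutting the multiplier off produces bulk terms proportional to derivatives of the cutoff in the transition region $r_0\le r\le r_1$, which are first-order in $u$ with no smallness and can only be bounded by the very spacetime integral you are trying to control -- absence of trapping does not rescue a compactly supported multiplier. To make this route honest you would have to import the paper's \emph{global} Morawetz estimate (Theorem \ref{thm2}), which is logically heavier than the boundedness statement, is proved later, and -- crucially -- is only established for $\kappa\in(0,\kappa^*)$ with $\kappa^*<1$; Theorem \ref{thm1} is claimed for the full admissible range (in particular Dirichlet data with $\kappa\ge1$), so your argument would prove a strictly weaker statement even if the absorption step were carried out. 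So the gap is concrete: the absorption of the near-horizon bulk error is left to an unproven (and, as formulated, unobtainable) local Morawetz estimate, where the intended argument instead closes via the redshift Gronwall inequality together with the monotonicity of the degenerate energy.
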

\begin{rem}
If enough regularity is assumed on the initial data of \eq{KGE} one may extend this result: by applying $T$ and  the red shift vector field as commutators, together with elliptic estimates and a Sobolev embedding we can extract pointwise boundedness. In this setting one can prove a result similar to that found in \cite{warn2}.  
\end{rem}
\section{Decay Rates}
Now that we have established energy decay for solutions to $\eqref{KGE}$ we turn to the problem of establishing integrated decay. Due to limitations of our energy current we show the existence of a $\kappa^*$ with $\frac{3}{4}<\kappa^*<1,$ such that our results hold for $\kappa \in \left( 0,\kappa^*\right)$. We believe the results hold for all $\kappa \in (0,1)$ however the energy current will be far more complex.  We will examine all the previously listed boundary conditions. We remark that this range of $\kappa$ includes the conformally coupled case $(\kappa = \hf)$. The methods we will use were first established by Morawetz \cite{mora2} and \cite{mora1} for the obstacle problem for waves and also seen in \cite{DRS}, \cite{DBH} and \cite{labso} for the Schwarzschild black hole. The core idea is to repeat a similar argument as in the energy decay proof, only rather than using $\pa_t$ as a multiplier we examine vector fields of the form $h(r)\pa_r$ for some radial function $h$. 
\subsection{The Morawetz Estimate} 
In this section, we shall establish the following integrated decay estimate
\begin{Theorem}\label{thm2}[Precise Version]
Suppose $u$ is a smooth solution to \eqref{KGE} with $\kappa \in (0,\kappa^*),$ satisfying Dirichlet, Neumann or Robin boundary conditions at infinity. Then for any $T_1<T_2$ we have:
\begin{align} \nonumber
\int_{\mf_{[T_1,T_2]}} &\left(\frac{1}{r} u^2  + r^4 (\grt_ru)^2 +\frac{1}{r^3}(\gr_tu)^2 + \frac{1}{r}|\su|^2\right) d\eta   \leq C \int_{\Sigma_{T_1}}\mathcal{E}[u] dr dx dy\label{ME1}
\end{align}
with $C=C(M, l, \kappa)>0$.
\end{Theorem}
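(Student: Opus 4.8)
The plan is to construct a spacetime current of the form $\tilde J^X_\mu[u] + \text{(lower order terms)}$ where $X = h(r)\pa_r$ is a Morawetz-type multiplier, integrate its divergence over $\mf_{[T_1,T_2]}$, and arrange the bulk term $\tilde K^X[u]$ to be coercive in the weighted quantity on the left-hand side of \eqref{ME1}. First I would compute the deformation tensor ${}^X\pi_{\mu\nu}$ for $X = h(r)\pa_r$ and the resulting $\tilde K^X[u] = {}^X\pi_{\mu\nu}\temt^{\mu\nu}[u] + X^\nu \tilde S_\nu[u]$; because the background is toric and there is no photon sphere, one should be able to choose $h$ (growing appropriately in $r$, something like $h \sim r^3$ or a comparable power dictated by the asymptotics \eqref{DRN} and the weights appearing in $\mathcal E[u]$) so that each term $(\gr_t u)^2$, $(\grt_r u)^2$, $|\su|^2$ and $u^2$ appears with a \emph{definite} sign and the claimed $r$-weights, with no critical degeneration in $r$ other than the one exhibited on the left of \eqref{ME1}. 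A standard refinement will be needed to handle the zeroth-order term: one typically adds a current built from a function $w(r)$ times $u\,\grt u$ (i.e. a term like $\gr^\mu(w\, u\, \grt_\mu u)$), which produces a good $u^2$ term plus a contribution that must be absorbed into the first-order terms.

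The second block of steps is the boundary analysis. Applying the divergence theorem \eqref{DT} and sending $R_1 \to r_+$, $R_2 \to \infty$, one gets: (i) boundary terms on $\Sigma_{T_1}$ and $\Sigma_{T_2}$, which must be bounded by $\mathcal E[u]$ on the respective slices — this is where I would invoke Theorem \ref{thm1} to replace the $\Sigma_{T_2}$ data by the $\Sigma_{T_1}$ data, so the whole right-hand side collapses to $\int_{\Sigma_{T_1}}\mathcal E[u]$; (ii) the horizon term at $r = r_+$, where one needs the sign of the flux of $\tilde J^X$ through $\mh$ to be favourable (or at worst controlled by the already-established horizon flux $F(u;[T_1,T_2])\ge 0$ from \eqref{DEI}) — choosing $h(r_+) \ge 0$ with the right sign should do this; (iii) the boundary term at $\mi$, which is the delicate one. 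Here the restriction $\kappa \in (0,\kappa^*)$ enters: using the asymptotic expansions from the Remark after the well-posedness theorem (Dirichlet vs. Neumann/Robin decay rates), one checks that the growth of $h$ is slow enough that $\tilde J^X_\mu m^\mu\, dS_{\Sigma_r}$ has a finite, sign-definite (or at least controllable) limit as $r\to\infty$ for all three boundary conditions, including the Robin boundary term $\tfrac{1}{2l^2}\int (r^{3/2-\kappa}u)^2\beta$, using $\beta \ge 0$, $\pa_t\beta = 0$.

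The main obstacle I expect is exactly step (iii) together with the compatibility of the $\mi$-behaviour and the coercivity of the bulk: one wants $h$ to grow as fast as possible to get strong $r$-weights in the bulk estimate, but the faster $h$ grows the worse the boundary term at $\mi$ behaves, and for the Neumann/Robin cases the field only decays like $r^{-3/2+\kappa}$, so there is a genuine competition that forces $\kappa < \kappa^*$ for some $\kappa^* < 1$. Concretely, after fixing the form of $h$ and $w$, the inequality that pins down $\kappa^*$ should come from requiring that the coefficient of $u^2$ (or of $|\su|^2$) in the bulk $\tilde K^X$ stays nonnegative once the cross-terms from the $w$-current and the potential $V = \tfrac{(3-2\kappa)^2 M}{2r^3}$ have been absorbed — a one-parameter algebraic condition in $\kappa$. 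A secondary technical point is that, because $L$ is not strongly elliptic at $r=r_+$, one must be careful that the $r^4(\grt_r u)^2$ term on the left does not secretly require control one does not have near the horizon; but since $g^{rr} = -\tfrac{2M}{r}+\tfrac{r^2}{l^2}$ vanishes at $r_+$, the weight $r^4(\grt_r u)^2$ degenerates in a way compatible with what the energy current naturally gives, so a redshift-type argument at the horizon (as in \cite{DBH}, already used for Theorem \ref{thm1}) closes this gap. Once the bulk coercivity, the three boundary contributions, and the reduction of the right-hand side via Theorem \ref{thm1} are in place, \eqref{ME1} follows by collecting terms.
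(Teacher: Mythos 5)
Your overall strategy (a Morawetz multiplier $X=h(r)\pa_r$, a modified current with $w\,u\,\grt^\mu u$ corrections, the divergence theorem, and separate flux analysis at $\mh$, $\mi$ and the $t$-slices) is the paper's strategy, but there is a genuine gap at the heart of your plan: you assume a single current can be arranged so that $(\gr_tu)^2$, $(\grt_ru)^2$, $|\su|^2$ and $u^2$ \emph{all} appear with good signs. The explicit divergence formula (Lemma \ref{DMEC}) shows this is obstructed: the coefficient of $|\su|^2$ is $\left((1-\kappa)+w_1(r)\right)$ while the coefficient of $(\gr_tu)^2$ contains $-\frac{l^2}{r^2}\left((1-\kappa)+w_1(r)\right)$ as its leading part at large $r$, so demanding a favourable sign for the tangential term forces the wrong sign (to leading order) for the time-derivative term, and vice versa. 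No choice of $h$ fixes this; the paper's resolution is a two-step argument you do not have: first take $w_1=\kappa-1$ so the $|\su|^2$ term is cancelled identically, absorb the cross terms $\grt_ru\,\gr_tu$ and $u\,\grt_ru$ by Young's inequality, and obtain \eqref{NRMor1} controlling $\frac{1}{r}u^2$, $r^4(\grt_ru)^2$, $\frac{1}{r^3}(\gr_tu)^2$; then use a \emph{second}, purely lower-order current $\tilde J_2^\mu=-\left(\frac{1}{r^3}+(1-\kappa)\right)u\grt^\mu u$, whose divergence produces $+\frac{1}{r}|\su|^2$ plus terms already controlled by the first step. Your mention of adding a $w\,u\,\grt u$ current only "to handle the zeroth-order term" does not supply this idea, and it is exactly why the theorem's left-hand side carries the degenerate weights $\frac{1}{r^3}(\gr_tu)^2$ and $\frac{1}{r}|\su|^2$.

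Secondary points: the multiplier is $X=r\pa_r$ (linear $h$), not $h\sim r^3$; with faster growth the Neumann/Robin fluxes at $\mi$ and the slice fluxes would not close, and the weight $r^4(\grt_ru)^2$ already comes from $h=r$ after multiplying by the volume factor $r^2$. Also, the restriction $\kappa<\kappa^*$ does not arise, as you suggest, from a competition between the growth of $h$ and the boundary term at $\mi$. It comes from the bulk/horizon algebra: the $u^2$-coefficient in the bulk is proportional to $\epsilon$, where $w_2=\frac{(3-2\kappa)M}{2r^3}\left(\frac12+\frac{\epsilon}{2}\right)$, and the horizon flux in Lemma \ref{MECB} forces $k_2<\frac{(3-2\kappa)^2}{4}M$, i.e. $\epsilon<2(1-\kappa)$, while the Young's-inequality absorption of the cross terms requires $\frac{3}{10}(1+\epsilon)^2<2\epsilon$; combining these gives $\kappa^*=\frac16\left(2\sqrt{10}-1\right)$. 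Finally, you do not need Theorem \ref{thm1} to handle the $\Sigma_{T_2}$ and intermediate slice fluxes: the monotonicity of the degenerate energy from \eqref{DEI}, together with a Hardy inequality, already bounds all flux terms by $E_{T_1}[u]\lesssim\int_{\Sigma_{T_1}}\mathcal{E}[u]$, and for Robin data the $\mi$ contribution appears with a favourable sign thanks to $w_1<0$ and $\beta\ge0$ and can be moved to the left-hand side.
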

To prove this we make use of the following two lemmas

\begin{Lemma}[Divergence of the Modified Energy Current]\label{DMEC}
Let $u$ be a solution to \eqref{KGE}  with Dirichlet, Neumann or Robin boundary conditions. For an energy current defined as
\begin{equation*}
\tilde{J}^\mu[u] = \tilde{\mathbb{T}}_\nu{}^\mu[u] X^\nu + w_1(r) u\grt^\mu u + w_2(r) u^2 X^\mu,
\end{equation*}
where 
\begin{equation*}
X = r\pa_r,
\end{equation*}
we then have
\begin{equation} \label{Gdiv}
\begin{split}
\gr_\mu\tilde{J}^\mu[u] &= \left(2  r w_2(r)+\frac{r^2 w_1'(r)}{l^2}-\frac{2 M w_1'(r)}{r}\right)u\grt_r u+\frac{2 l^2 M  w_1'(r)}{r^3}u\gr_tu\\&+ \left( r w_2'(r)+2 \kappa 
   w_2(r)+\frac{(3-2\kappa)^3  M}{4r^3}+\frac{2 (3-2\kappa)^2 M w_1(r)}{4 r^3}\right)u^2\\& + 4 \left(\frac{(2-\kappa)  l^2 M}{r^3}+\frac{ l^2 M w_1(r)}{r^3}\right)\grt_ru \gr_tu+\left((1-\kappa)+{w_1(r)} \right)|\su|^2 \\&+ 
   \left(-\frac{ \kappa r^2}{ l^2}-\frac{(3-2\kappa)  M}{r}+\frac{r^2 w_1(r)}{l^2}-\frac{2 M w_1(r)}{r}\right)(\grt_ru)^2\\&+ \left(-\frac{(5-2\kappa)  l^4 M}{r^5}-\frac{2 l^4 M w_1(r)}{r^5}-\frac{l^2}{r^2}\left( 
   (1-\kappa)+w_1(r)\right) \right)(\gr_tu)^2.
\end{split}
\end{equation}
\end{Lemma}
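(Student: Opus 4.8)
The plan is a direct computation, organised so that the individual pieces are manageable. First I would expand the divergence of the current $\tilde{J}^\mu[u]$ term by term, writing $\tilde{J}^\mu[u] = \tilde{J}^X_\mu[u] + w_1(r)\,u\,\grt^\mu u + w_2(r)\,u^2 X^\mu$ with $X = r\pa_r$. For the first piece, the Lemma ``taken from \cite{warn2}'' gives $\gr^\mu \tilde{J}^X_\mu[u] = \tilde{K}^X[u] = {}^X\pi_{\mu\nu}\temt^{\mu\nu}[u] + X^\nu \tilde{S}_\nu[u]$, using that $u$ solves \eqref{KGE} so the inhomogeneous term $(-\grt^\dagger_\mu\grt^\mu u - Vu)\grt_\nu u$ vanishes. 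So the first task is to compute the deformation tensor ${}^X\pi_{\mu\nu}$ of $X = r\pa_r$ in the given coordinates, contract it with $\temt^{\mu\nu}[u]$, and separately evaluate $X^\nu \tilde{S}_\nu[u]$ using the explicit $f(r) = r^{-3/2+\kappa}$ and $V(r) = (3-2\kappa)^2 M/(2r^3)$ recorded in Section 4. This is where the $\kappa$-dependent coefficients in \eqref{Gdiv} originate; one must be careful that $\grt_r u = \pa_r u - (f'/f)u = \pa_r u + \tfrac{3/2-\kappa}{r}u$ and keep the metric components $g^{tt}, g^{rt}, g^{rr}$ exact (not just leading-order), since the Lemma is an exact identity.

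Next I would handle the two lower-order terms. For $w_1(r)\,u\,\grt^\mu u$, I compute $\gr_\mu(w_1 u \grt^\mu u) = w_1'(r)(\pa_r^\sharp \cdot)u\grt_r u$-type terms plus $w_1\,\grt_\mu u\,\grt^\mu u + w_1\,u\,\gr_\mu\grt^\mu u$; here I rewrite $\gr_\mu\grt^\mu u$ using $-\grt^\dagger_\mu\grt^\mu u = Vu + (\text{wave-operator terms})$ together with the relation between $\grt^\dagger\grt$ and $\gr\grt$, so that invoking the PDE again eliminates second derivatives and produces the $u^2$ and $u\grt_r u$, $u\gr_t u$ contributions. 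For $w_2(r)u^2 X^\mu$ the computation is $\gr_\mu(w_2 u^2 X^\mu) = (X^\mu\pa_\mu w_2)u^2 + w_2 u^2 \gr_\mu X^\mu + 2w_2 u X^\mu \pa_\mu u$; since $X = r\pa_r$, $X^\mu\pa_\mu w_2 = r w_2'$, and $\gr_\mu X^\mu = $ (divergence of $r\pa_r$) which in the volume element $r^2\,dr\,dt\,dx\,dy$ gives the $2\kappa$-type shift once one re-expresses $\pa_r u$ in terms of $\grt_r u$ — this is the source of the $r w_2'(r) + 2\kappa w_2(r)$ coefficient. Adding the three contributions and collecting coefficients of the six quadratic monomials $u^2$, $u\grt_r u$, $u\gr_t u$, $(\grt_r u)^2$, $(\gr_t u)^2$, $\grt_r u\,\gr_t u$, $|\su|^2$ should reproduce \eqref{Gdiv}.

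The main obstacle is purely one of bookkeeping: correctly converting every $\pa_r u$ into $\grt_r u$ (which generates $\kappa$-dependent cross terms and $u^2$ terms), keeping the exact rational-in-$r$ metric coefficients rather than their asymptotics, and using the equation \eqref{KGE} in its twisted form $-\grt^\dagger_\mu\grt^\mu u = Vu$ exactly once more in the $w_1$ term to kill the Hessian of $u$. A secondary subtlety is that $\tilde{J}^X$ is a \emph{compatible current} — $\tilde{K}^X$ depends only on the $1$-jet of $u$ — so no boundary or regularity issues arise at this stage; the boundary conditions listed in the hypothesis are not actually needed for the pointwise identity \eqref{Gdiv} itself and will only enter when this Lemma is integrated against the divergence theorem \eqref{DT} in the proof of Theorem \ref{thm2}. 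I would therefore treat the statement of Lemma \ref{DMEC} as an exact algebraic identity, verify it by a careful (if lengthy) symbolic expansion, and sanity-check it against the special case $w_1 = w_2 = 0$, where it must reduce to $\gr^\mu\tilde{J}^X_\mu[u] = {}^X\pi_{\mu\nu}\temt^{\mu\nu}[u] + X^\nu\tilde{S}_\nu[u]$ with the coefficients of $(\grt_r u)^2$, $(\gr_t u)^2$, $|\su|^2$, $\grt_r u\,\gr_t u$ matching the trace of $^X\pi$ against $\temt$.
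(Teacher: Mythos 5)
Your proposal is correct and follows essentially the same route as the paper: split the current into the $\temt_\nu{}^\mu X^\nu$, $w_1 u\grt^\mu u$ and $w_2 u^2 X^\mu$ pieces, use the compatible-current identity $\gr^\mu(\temt_{\mu\nu}X^\nu)={}^X\pi_{\mu\nu}\temt^{\mu\nu}+X^\nu\tilde{S}_\nu$ together with an explicit computation of the deformation tensor of $X=r\pa_r$, use the equation to reduce $\gr_\mu\grt^\mu u$ to first-order (twisted) quantities in the $w_1$ term, and convert $\pa_r u$ to $\grt_r u$ in the $w_2$ term to produce the $rw_2'+2\kappa w_2$ coefficient. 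Your remarks that the identity is exact and pointwise, and that the boundary conditions only matter later when integrating, are also consistent with how the paper uses the lemma.
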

The proof of this may be found in the technical lemmas section of the paper.
\begin{Lemma} [Bounded Integrated Divergence] \label{MECB}
For a current defined as
\begin{equation}
\tilde{J}^\mu[u] = \tilde{\mathbb{T}}_\nu{}^\mu[u] X^\nu + w_1(r) u\grt^\mu u + w_2(r) u^2 X^\mu,
\end{equation}	
where
\begin{itemize}
\item $u$ solves \eqref{KGE} with Dirichlet, Neumann or Robin data. 
\item $X = r\pa_r$,
\item $w_1 = -k_1 + f(r),$\\
  with  $k_1> 0$ and $f\in \mo(r^{-3}),$
\item $w_2 = \frac{k_2}{r^3},$\\
with $
0\le k_2< \frac{(3-2\kappa)^2}{4}M.$
\end{itemize}
We have that
\begin{align*}
\int_{\mf_{[T_1,T_2]}} -\gr_\mu \tilde{J}^\mu[u] d^4Vol \le C \int_{\Sigma_{T_1}}\mathcal{E}[u]dr dx dy
\end{align*}
for a constant $C$ independent of $T_1$ and $T_2$.
\end{Lemma}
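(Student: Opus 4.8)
The plan is to combine the divergence identity of Lemma \ref{DMEC} with the divergence theorem \eqref{DT}, and then show that every term on the right-hand side of \eqref{Gdiv} has a definite sign (up to lower-order corrections absorbable by the energy) and that every boundary term is either controlled by $\int_{\Sigma_{T_1}}\mathcal{E}[u]$ or has a favourable sign. First I would apply \eqref{DT} to $\tilde{J}^\mu[u]$ on the box $B$ and send $R_1\to r_+$, $R_2\to\infty$, producing $\int_{\mf_{[T_1,T_2]}}-\gr_\mu\tilde J^\mu\,d^4Vol$ on the left and four boundary fluxes on the right: the $\Sigma_{T_1}$, $\Sigma_{T_2}$ terms on constant-$t$ slices, the horizon flux on $\Sigma_{r_+}$, and the flux at $\mi$. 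The first step is therefore to check that the limits at $\mi$ exist and vanish (or have a good sign) for each of Dirichlet, Neumann, Robin, using the asymptotics \eqref{DRN}: here the choice $w_1=-k_1+f$ with $f\in\mathcal{O}(r^{-3})$ and $w_2=k_2/r^3$ is designed precisely so that $w_1u\grt^\mu u$ and $w_2u^2X^\mu$ decay fast enough near $\mi$ — this is the analogue of the renormalisation check done for $\tilde J^T$ in Section 4, and for Robin data one expects a boundary term with a sign fixed by $\beta\ge 0$.

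Next I would handle the two constant-$t$ fluxes $\int_{\Sigma_{T_i}}\tilde J^\mu n_\mu\,dS_{\Sigma_{T_i}}$. Since $X=r\pa_r$ is uniformly timelike-or-spacelike-bounded relative to $n$ on $\Sigma_t$, the main piece $\tilde{\mathbb{T}}_{\mu\nu}X^\nu n^\mu$ is pointwise bounded by $\mathcal{E}[u]$ (using the dominant-energy-condition property in the third bullet of the Lemma from \cite{warn2}, together with the fact that $V\ge 0$ for our $f$), and the zeroth-order corrections $w_1u\grt^\mu u\,n_\mu$ and $w_2u^2X^\mu n_\mu$ are bounded by $u^2/r$ times bounded factors plus a Cauchy–Schwarz absorption of $|\grt u|^2$ — all controlled by $\int_{\Sigma_{T_i}}\mathcal{E}[u]$. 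The $\Sigma_{T_2}$ flux is then disposed of by Theorem \ref{thm1}, which bounds $\int_{\Sigma_{T_2}}\mathcal{E}[u]$ by $\int_{\Sigma_{T_1}}\mathcal{E}[u]$; the horizon flux $\int_{\mathcal{H}_{[T_1,T_2]}}\tilde J^\mu m_\mu$ should be shown to have a favourable sign, coming from $-k_1<0$ acting on $|\su|^2$-type and $(\grt_r u)^2$-type terms at $r=r_+$ where $g^{rr}\to 0$ and $g^{rt}>0$ — so it either vanishes in the combination $m^\mu dS_{\Sigma_r}$ or contributes with the right sign; if not, it too is bounded by $\mathcal{E}[u]$ on a bounded-$r$ region and hence by the energy.

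The main obstacle — and the reason for the constant $\kappa^*$ — is the coefficient algebra on the right side of \eqref{Gdiv}: one needs $k_1>0$, $0\le k_2<\tfrac{(3-2\kappa)^2}{4}M$, and the perturbation $f\in\mathcal{O}(r^{-3})$ chosen so that the coefficients of $(\grt_r u)^2$, $|\su|^2$, $(\gr_t u)^2$ and $u^2$ are each of the right sign on the whole range $r\ge r_+$, while the indefinite cross terms $u\grt_r u$, $u\gr_t u$, $\grt_r u\,\gr_t u$ are absorbed by Cauchy–Schwarz into the diagonal terms with room to spare. Concretely, the coefficient $\bigl((1-\kappa)+w_1(r)\bigr)=1-\kappa-k_1+f(r)$ of $|\su|^2$ wants to be negative-ish near the horizon (to feed the horizon flux) but one simultaneously needs the bulk terms to dominate; balancing these competing demands forces $\kappa$ away from $1$, and a careful tracking of where the inequalities first fail will pin down $\kappa^*\in(\tfrac34,1)$. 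The claimed estimate $\int_{\mf_{[T_1,T_2]}}-\gr_\mu\tilde J^\mu\,d^4Vol\le C\int_{\Sigma_{T_1}}\mathcal{E}[u]$ then follows immediately by rearranging the divergence theorem, and I would isolate the sign bookkeeping for the coefficients into a short self-contained computation (or defer it, as the paper does, to the technical lemmas section). Note that this Lemma only gives the \emph{upper} bound on the bulk integral of the divergence; producing the coercive lower bound that yields Theorem \ref{thm2} is a separate matter handled by choosing $w_1,w_2$ within the admissible range so that $-\gr_\mu\tilde J^\mu$ additionally dominates the weighted density in \eqref{ME1}, which is not part of the present statement.
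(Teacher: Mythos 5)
Your overall strategy --- apply the divergence theorem \eqref{DT} to $\tilde J^\mu$ and bound each of the four boundary fluxes by $C\int_{\Sigma_{T_1}}\mathcal{E}[u]\,dr\,dx\,dy$ --- is exactly the paper's, and your treatment of the flux at $\mi$ (the asymptotics kill most terms; Robin data leaves a boundary term whose sign is fixed by $\beta\ge 0$) is correct in spirit. However, two of your flux bounds have genuine gaps. First, on the constant-$t$ slices the correction $w_1 u\,\grt^\mu u$ produces the cross term $-k_1 l^2\, u\,\gr_t u$ with a coefficient that does not decay in $r$; no pointwise Cauchy--Schwarz can split this into quantities controlled by $\mathcal{E}[u]$, since $\mathcal{E}[u]$ contains only $u^2/r$ and $(\gr_t u)^2$ (any weighted splitting produces either $u^2$ with no decay or $r(\gr_t u)^2$, neither of which is controlled). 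The paper closes this by invoking the Hardy estimate of Lemma \ref{HEL}, converting $\int u^2\,dr$ into $\int u^2/r\,dr + \int r^2(\grt_r u)^2\,dr$, together with a smallness condition on the Young constants at $r=r_+$ so that the resulting $(\grt_r u)^2$ contributions are absorbed; your sketch omits this ingredient entirely. Second, your fallback for the horizon flux (``if it is not signed, it is bounded by $\mathcal{E}[u]$ on a bounded-$r$ region'') does not work: that flux is an integral over the whole time slab $[T_1,T_2]$, so such a bound would grow with $T_2-T_1$. The actual mechanism is that after Young's inequality the $u^2$ coefficient at the horizon is negative precisely because $k_2<\frac{(3-2\kappa)^2}{4}M$ --- this is where that hypothesis enters --- and the remaining $(\gr_t u)^2$ term is bounded by the conserved $T$-energy horizon flux $F(u;[T_1,T_2])\le E_{T_1}[u]$ from \eqref{DEI}, which is uniform in $T_2$.

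There is also a misconception in your third paragraph: for this lemma no sign condition on the bulk divergence \eqref{Gdiv} is needed at all, because the divergence theorem converts the bulk integral exactly into boundary fluxes, and only those require bounding. The restriction $\kappa<\kappa^*$ is not forced here by making each coefficient in \eqref{Gdiv} signed; it arises in the coercivity argument of Theorem \ref{thm2}, which, as you correctly note at the end, is a separate matter. Using Theorem \ref{thm1} to dispose of the $\Sigma_{T_2}$ flux is acceptable (the paper simply uses monotonicity of the degenerate energy), but the two gaps above must be filled for the proof to go through.
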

The proof of this may be found in the technical lemmas section of the paper.
\begin{proof}[Proof of Theorem \ref{thm2}]
We split the estimate into two parts as inspection of \eqref{Gdiv} reveals that the coefficient of the tangential derivatives appears with the opposite sign to the higher order terms of the time derivative's coefficient. This is problematic when trying to get a signed divergence. So we initially cancel this term off to control a positive quantity. This establishes part of the Morawetz estimate and is then used to control a divergence that includes poorly signed terms of lower order but crucially correctly signed tangential terms.     

We define our first current as 
\[
\tilde{J}_1^\mu[u] = \tilde{\mathbb{T}}_\nu{}^\mu[u] X^\nu + (\kappa-1) u\grt^\mu u + \frac{(3 - 2\kappa)M}{2r^3}\left(\hf +\frac{\epsilon}{2} \right)  u^2 X^\mu,
\]
with 
\[
X = r\pa_r,
\]
as our multiplier. It can be easily checked that this current satisfies Lemma \ref{MECB} providing $\epsilon<2(1-\kappa)$ and by using the divergence formula from Lemma \ref{DMEC} we get
\[
\begin{split}
-\gr_\mu\tilde{J}_1^\mu[u] \cdot r^2 &= \frac{3 l^4 M }{r^3}(\gr_tu)^2-\frac{4 l^2 M }{r}\grt_ru \gr _tu+\frac{(3-2\kappa)^2 M\epsilon
   }{4 r}  u^2\\&-\hf(3-2\kappa)(1+\epsilon) M u\grt_ru+\left(\frac{r^4}{l^2}+ M r\right) (\grt_ru)^2.
\end{split}
\]
which we can re-write as:
\ben{first estimate}
\begin{split}
-\gr_\mu\tilde{J}_1^\mu[u] \cdot r^2 &= \frac{3 l^4 M }{r^3}(\gr_tu)^2-\frac{4 l^2 M }{r}\grt_ru \gr _tu+\frac{r^4}{l^2} (\grt_ru)^2\\&+ \frac{(3-2\kappa)^2 M\epsilon
   }{4 r}  u^2-\hf(3-2\kappa)(1+\epsilon) M u\grt_ru+ M r (\grt_ru)^2.
\end{split}
\een
Note that our choice of $w_1$ ensures that there is no term involving $\su$.
We now seek to bound a positive quantity from above by this divergence. First, note that
\be
\abs{\frac{4 l^2 M }{r}\grt_ru \gr _tu} \leq \frac{4 l^4 M }{ \delta r^3}(\gr_tu)^2 + \delta  M r (\grt_ru)^2,
\ee
for some $\delta>0$ to be later determined. We thus have
\begin{align*}
 \frac{3 l^4 M }{r^3}(\gr_tu)^2-\frac{4 l^2 M }{r}\grt_ru \gr _tu+ \frac{r^4}{l^2} (\grt_ru)^2 &\geq  \frac{ l^4 M }{ r^3}\left(3-\frac{4}{\delta} \right) (\gr_tu)^2 +   \left( \frac{r^4}{l^2} -\delta Mr \right)(\grt_ru)^2.
\end{align*}
This deals with the first line of \eq{first estimate}. Now, for the second line
\[
\begin{split}
 &\frac{(3-2\kappa)^2 M\epsilon
   }{4 r}  u^2-\hf(3-2\kappa)(1+\epsilon) M u\grt_ru+ M r (\grt_ru)^2\\
 &\ge \frac{M}{r}\left(\epsilon\left(\frac{3-2\kappa}{2}\right)^2u^2 - \frac{\beta}{2}\left(\frac{3-2\kappa}{2}\right)^2u^2- \frac{(1+\epsilon)^2}{2\beta}r^2(\grt_ru)^2 + r^2(\grt_ru)^2   \right),   
\end{split}
\]
where $\beta>0$ is a constant from Young's inequality that we shall later determine.
Factoring the above we deduce
\[
\begin{split}
 &\frac{(3-2\kappa)^2 M\epsilon
   }{4 r}  u^2-\hf(3-2\kappa)(1+\epsilon) M u\grt_ru+ M r (\grt_ru)^2\\
 &\ge \frac{M}{r}\left(\left( \epsilon-\frac{\beta}{2}\right) \left(\frac{3-2\kappa}{2}\right)^2u^2 +\left( 1- \frac{(1+\epsilon)^2}{2\beta}\right) r^2(\grt_ru)^2  \right),   
\end{split}
\]
now combining everything we have
\[
\begin{split}
-\gr_\mu\tilde{J}_1^\mu[u] \cdot r^2 \ge& \frac{ l^4 M }{ r^3}\left(3-\frac{4}{\delta} \right)(\gr_tu)^2 +  \left( \frac{r^4}{l^2} -Mr\left(\delta +\frac{(1+\epsilon)^2}{2\beta}-1\right)   \right)(\grt_ru)^2\\&+ \frac{M}{r}\left( \epsilon-\frac{\beta}{2}\right) \left(\frac{3-2\kappa}{2}\right)^2u^2,
\end{split}
\]
so providing that 
\begin{equation}\label{eq1}
\beta < 2\epsilon, \qquad \delta> \frac{4}{3},
\end{equation}
and
\begin{equation}\label{eq2}
\delta +\frac{(1+\epsilon)^2}{2\beta}-1 <2,
\end{equation}
we have a bounded positive quantity.\\ We now show for which range of $\epsilon$ this can hold and then convert to the mass range this method works for.\\Firstly note \eqref{eq2} is equivalent to
\[
\delta <3-\frac{(1+\epsilon)^2}{2\beta},
\]
but for consistency we then need
\[
\frac{4}{3}<3-\frac{(1+\epsilon)^2}{2\beta},
\]
which is equivalent to
\[
\frac{3}{10}(1+\epsilon)^2 < \beta,
\]
again, for consistency we then also need
\[
\frac{3}{10}(1+\epsilon)^2 <2\epsilon,
\]
which is equivalent to
\[
3\epsilon^2-14\epsilon+3<0,
\]
which holds for the range
\[
\epsilon\in\left( \frac{1}{3}(7-2\sqrt{10}),\frac{1}{3}(7+2\sqrt{10})\right), 
\]
now we also have the restraint that
\[
\epsilon<2-2\kappa,
\]
that is
\[
\kappa < 1-\frac{\epsilon}{2},
\]
which is extremized for
\[
\kappa < \frac{1}{6} \left( 2 \sqrt{10}-1\right) \approx 0.887,
\]
thus providing an upper bound for the $\kappa$ ranges we can currently prove for.
If we also make the choice
\[
\epsilon = 1-\kappa,
\]
we can easily then show the positivity and boundedness for the range
\[
\kappa \in \left( 0,\frac{1}{6} \left( 2 \sqrt{10}-1\right)\right). 
\]
So for $\kappa$ in this range we may find all the constants to deduce that
\be
-\gr_\mu\tilde{J}_1^\mu[u] \cdot r^2 \geq c\left(  \frac{ 1 }{ r^3}(\gr_tu)^2 + r^4 (\grt_ru)^2+  \frac{1}{r}u^2\right), 
\ee
for some $c = c(M, l, \kappa)>0$. We remark that we have used $r\ge (2Ml^2)^{\frac{1}{3}}$.

We now invoke Lemma \ref{MECB} to deduce
\begin{equation} \label{NRMor1}
\int_{\mf_{[T_1,T_2]}}\left(  \frac{ 1 }{ r^3}(\gr_tu)^2 + r^4 (\grt_ru)^2+  \frac{1}{r}u^2\right) dtdrdxdy  \leq C \int_{\Sigma_{T_1}}\mathcal{E}[u] dr dx dy.
\end{equation}

Now that we have control of $u$, $\gr_t u$ and $\grt_r u$ in an integrated sense, we can return to establish an estimate involving $\su$. To obtain this, we consider the current:
\begin{equation*}
\tilde{J}_2^\mu =- \left( \frac{1}{r^3}+(1-\kappa)\right)  u\grt^\mu u .
\end{equation*}
The divergence of this current is readily calculated to give:
\begin{align*}
-\gr_\mu\tilde{J}_2^\mu[u] \cdot r^2 &=  \left(\frac{6 M}{r^3}-\frac{3}{l^2}\right)u\grt_ru-\frac{6 l^2 M }{r^5}u\gr_tu-\frac{4 l^2 M \left(r^3-1\right)}{r^4}\grt_ru \gr_tu\\&+ \left(\frac{r^4+r}{l^2}+M \left(r-\frac{2}{r^2}\right)\right)(\grt_ru)^2+\frac{ \left(l^4 M \left(3
   r^3-2\right)-l^2 r^3\right)}{r^6}(\gr_tu)^2\\&-\frac{(3-2 \kappa )^2 M \left(r^3-2\right)}{4 r^4}u^2+\frac{1}{r}|\su|^2
\end{align*}
Examining the coefficients of the terms involving $u$, $\gr_t u$ and $\grt_r$ we see that we already control all of these terms with appropriate weights. In particular, it is clear that we can find a $C>0$ such that:
\be
 \frac{1}{r} \abs{\su}^2 \leq -\gr_\mu\tilde{J}_2^\mu[u] \cdot r^2  + C\left(  \frac{ 1 }{ r^3}(\gr_tu)^2 + r^4 (\grt_ru)^2+  \frac{1}{r}u^2\right).
\ee
Integrating this estimate, applying Lemma \ref{MECB} and estimate \eq{NRMor1}, we finally conclude
\begin{align*}
\int_{\mf_{[T_1,T_2]}} \left(\frac{1}{r} u^2  + r^4 (\grt_ru)^2 +\frac{1}{r^3}(\gr_tu)^2 + \frac{1}{r}|\su|^2\right) d\eta \leq C \int_{\Sigma_{T_1}}\mathcal{E}[u] dr dx dy
\end{align*}
for some $C>0$.
\end{proof}

\subsection{Integrated Decay Estimate without weight loss}

We can now restate our result regarding the integrated decay with no loss in the radial weights:
\begin{Theorem}\label{thm3} 
Suppose $u$ is a smooth solution to \eqref{KGE}  with $\kappa \in (0,\kappa^*),$ satisfying Dirichlet, Neumann or Robin boundary conditions at infinity. Let $\mathcal{E}[u]$ be the renormalised energy density of the field $u$, as in \eq{renormalized energy}. Then 
\begin{equation} \label{ME2}
\int_{\mf_{[T_1,T_2]}} \mathcal{E}[u] d\eta \leq C \int_{\Sigma_{T_1}}\left ( \mathcal{E}[u] +  \mathcal{E}[u_t]\right)dr dx dy
\end{equation}
for some $C = C(M, l, \kappa)>0$.
\end{Theorem}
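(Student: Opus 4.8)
The plan is to combine Theorem~\ref{thm2} with a single commutation by $T=\pa_t$ and two estimates which transfer the radial weights near $\mi$. Since $T=\pa_t$ is Killing with $\mathcal{L}_Tf=0$, and since $\pa_t\beta=0$ is part of our standing assumption on Robin data (so that $u_t$ again satisfies the Robin condition with the same $\beta$), $v:=u_t$ is a smooth solution of \eqref{KGE} with a boundary condition of the same type. Applying Theorem~\ref{thm2} to both $u$ and $u_t$ gives
\[
\int_{\mf_{[T_1,T_2]}}\left(\tfrac1r u^2+r^4(\grt_ru)^2+\tfrac1{r^3}(\gr_tu)^2+\tfrac1r|\su|^2\right)d\eta\ \lesssim\ \int_{\Sigma_{T_1}}\mathcal{E}[u]\,drdxdy,
\]
and the same with $u$ replaced throughout by $u_t$. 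Comparing with \eqref{renormalized energy}, the terms $\tfrac1ru^2$ and $r^4(\grt_ru)^2$ already carry the correct weights, so it remains to bound $\int_{\mf_{[T_1,T_2]}}(\gr_tu)^2\,d\eta$ and $\int_{\mf_{[T_1,T_2]}}r^2|\su|^2\,d\eta$. The degeneration in Theorem~\ref{thm2} occurs only near $\mi$: on $\{r\le R_0\}$ all four weights in $\mathcal{E}[u]$ are comparable to constants, so the full density is already controlled there by Theorem~\ref{thm2} for $u$. Fix such an $R_0>r_+$; it then suffices to estimate the two integrals over $\{r\ge R_0\}$, absorbing fluxes across $\{r=R_0\}$ and contributions from $\{r\le R_0\}$ into $\int_{\Sigma_{T_1}}(\mathcal{E}[u]+\mathcal{E}[u_t])\,drdxdy$ via trace inequalities and Theorem~\ref{thm2} on the compact region.

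Consider first the term $(\gr_tu)^2$. Here I would use a Hardy inequality adapted to the twisting function $f=r^{-3/2+\kappa}$. Writing $\Phi=r^{3/2-\kappa}g$, so that $\grt_rg=r^{-3/2+\kappa}\pa_r\Phi$, integrating $\tfrac{d}{dr}\big(r^{-2+2\kappa}\Phi^2\big)$ and using $\kappa<1$ together with the asymptotics \eqref{DRN} (the point being that $\grt_r$ removes the slowly-decaying part of the radial derivative, so $r\,g^2\to0$ at $\mi$ for $g$ equal to $u$ or $u_t$) yields, for $R\ge R_0$,
\[
\int_R^\infty g^2\,dr\ \lesssim\ R\,g(R)^2+\int_R^\infty r^2(\grt_rg)^2\,dr.
\]
Applying this with $g=\gr_tu$ at fixed $(t,x,y)$, integrating over $[T_1,T_2]\times\T^2$, and using $\grt_r(\gr_tu)=\gr_t(\grt_ru)$ and $r^2\le r_+^{-2}r^4$, gives
\begin{align*}
\int_{\mf_{[T_1,T_2]}\cap\{r\ge R_0\}}(\gr_tu)^2\,d\eta &\lesssim\ (\text{flux across }\{r=R_0\})+\int_{\mf_{[T_1,T_2]}}r^4(\grt_ru_t)^2\,d\eta\\
&\lesssim\ \int_{\Sigma_{T_1}}\big(\mathcal{E}[u]+\mathcal{E}[u_t]\big)\,drdxdy,
\end{align*}
the last inequality by Theorem~\ref{thm2} applied to $u_t$, and by Theorem~\ref{thm2} for $u$ and $u_t$ on the compact region surrounding $\{r=R_0\}$ (a trace inequality for the flux term). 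Hence $\int_{\mf_{[T_1,T_2]}}(\gr_tu)^2\,d\eta\lesssim\int_{\Sigma_{T_1}}(\mathcal{E}[u]+\mathcal{E}[u_t])\,drdxdy$.

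For the term $r^2|\su|^2$ I would use the zeroth-order current $\tilde{J}_3^\mu[u]=-\,u\,\grt^\mu u$, a variant of the current $\tilde{J}_2$ used in Theorem~\ref{thm2}. Using the twisted form $-\grt^\dagger_\mu\grt^\mu u-Vu=0$ of \eqref{KGE} and $\gr_\mu u=\grt_\mu u+\tfrac{\gr_\mu f}{f}u$, the $\tfrac{f'}{f}$-terms cancel and one obtains
\[
-\gr_\mu\tilde{J}_3^\mu[u]=\grt_\mu u\,\grt^\mu u+Vu^2=g^{tt}(\gr_tu)^2+2g^{tr}(\gr_tu)(\grt_ru)+g^{rr}(\grt_ru)^2+|\su|^2+Vu^2.
\]
Multiplying by $r^2$, solving for $r^2|\su|^2$, and integrating over $\mf_{[T_1,T_2]}$ (recall $d^4Vol=r^2d\eta$), I would argue term by term. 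One has $-g^{tt}r^2=\tfrac{2Ml^4}{r^3}+l^2\in[l^2,2l^2]$, so $-g^{tt}r^2(\gr_tu)^2\lesssim(\gr_tu)^2$ is controlled by the previous step; $g^{rr}r^2=\tfrac{r^4}{l^2}-2Mr\ge0$ and $Vr^2\ge0$ on $\{r\ge r_+\}$, so $-g^{rr}r^2(\grt_ru)^2$ and $-Vr^2u^2$ are discarded; $|2g^{tr}r^2(\gr_tu)(\grt_ru)|=\tfrac{4Ml^2}{r}|(\gr_tu)(\grt_ru)|\lesssim\tfrac1r(\gr_tu)^2+\tfrac1r(\grt_ru)^2$ is controlled by Theorem~\ref{thm2} for $u$ and $u_t$; and, by the divergence theorem \eqref{DT}, $\int_{\mf_{[T_1,T_2]}}(-\gr_\mu\tilde{J}_3^\mu)\,d^4Vol$ equals a sum of fluxes of $\tilde{J}_3$ across $\Sigma_{T_1}$, $\Sigma_{T_2}$, $\mh$ and $\mi$. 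The fluxes across $\Sigma_{T_1}$, $\Sigma_{T_2}$ and $\mh$ are bounded, via Cauchy--Schwarz, the Hardy inequality above applied on slices, and Theorem~\ref{thm1} (for $u$ and for $u_t$), by $\int_{\Sigma_{T_1}}(\mathcal{E}[u]+\mathcal{E}[u_t])\,drdxdy$; the flux across $\mi$ equals $\lim_{r\to\infty}\int(-u\grt_ru)\,g^{rr}r^2\,dtdxdy$, which vanishes for Dirichlet and Neumann data and, for Robin data, limits to $\tfrac1{l^2}\int\beta\,(r^{3/2-\kappa}u)^2\,dtdxdy\ge0$ by $2\kappa u^+=\beta u^-$ and $\beta\ge0$, so it enters with a favourable sign. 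Thus $\int_{\mf_{[T_1,T_2]}}r^2|\su|^2\,d\eta\lesssim\int_{\Sigma_{T_1}}(\mathcal{E}[u]+\mathcal{E}[u_t])\,drdxdy$.

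Adding the three estimates — for $\tfrac1ru^2+r^4(\grt_ru)^2$ (Theorem~\ref{thm2} for $u$), for $(\gr_tu)^2$, and for $r^2|\su|^2$ — proves \eqref{ME2}. The main obstacle is the analysis near $\mi$: establishing the twisted Hardy inequality with a usable boundary term (which uses $\kappa<1$ and the cancellation in $\grt_r$ of the slowly-decaying part of $\pa_ru$), identifying the sign of the $\mi$-flux of $\tilde{J}_3$ (the only point at which $\beta\ge0$ is used), and the bookkeeping of the boundary contributions at $\mh$, at $\mi$, and at the artificial radial cut $\{r=R_0\}$.
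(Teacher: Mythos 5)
Your overall strategy matches the paper's: commute with $\pa_t$ (using $\pa_t\beta=0$ so that $u_t$ solves the same problem), use a Hardy-type inequality in the twisted variable to upgrade the $\tfrac1r u_t^2$ control from Theorem \ref{thm2} to $\int(\gr_tu)^2\,d\eta$, and then recover the tangential term with a current of the form $\,c\,u\grt^\mu u$ whose divergence produces $|\su|^2$ with a good sign. Your Hardy step is essentially the paper's Lemma \ref{HEL} (the paper states it with the lower-order term $\int \phi^2/r\,dr$ in place of your boundary term $R\,g(R)^2$, which spares the trace/averaging argument at the artificial cut $\{r=R_0\}$, but both versions rest on the same integration of $\tfrac{d}{dr}(r g^2)$ and on $\kappa<1$). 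The genuine divergence from the paper is in the tangential step: you take the pure zeroth-order current $\tilde J_3^\mu=-u\grt^\mu u$ and discard the good-signed $-g^{rr}r^2(\grt_ru)^2$ and $-Vr^2u^2$ terms, whereas the paper keeps the multiplier part, $\tilde J_3^\mu=\temt_\nu{}^\mu X^\nu-(2-\kappa)u\grt^\mu u$ with $X=r\pa_r$, precisely so that the flux estimates fall under Lemma \ref{MECB}. Your identification of the $\mi$-flux (zero for Dirichlet/Neumann, $+\tfrac{\beta}{l^2}(r^{3/2-\kappa}u)^2\ge 0$ with the favourable sign for Robin) and your treatment of the $\Sigma_{T_1},\Sigma_{T_2}$ fluxes via slice Hardy and Theorem \ref{thm1} are correct.

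The one step that is not justified as stated is the horizon flux of your current. On $\mh$ one has $g^{rr}=0$, so $\tilde J_{3\mu}m^\mu dS_{\Sigma_r}\to -\tfrac{2Ml^2}{r_+}\,u\,\gr_tu\,dt\,dx\,dy$, and after Cauchy--Schwarz you are left with $\int_{\mh_{[T_1,T_2]}}u^2\,dt\,dx\,dy$: this is a spacetime integral along the horizon over an arbitrarily long time interval, and neither the slice Hardy inequality nor Theorem \ref{thm1} bounds it by the data at $T_1$ with a constant independent of $T_2-T_1$. (The $(\gr_tu)^2$ part is fine, since $\int_{\mh}(\gr_tu)^2$ is essentially the $T$-flux controlled through \eqref{DEI}.) The gap is repairable: either integrate the cross term by parts in $t$, so that only $\int_{T^2}u^2(T_i,r_+,\cdot)\,dx\,dy$ appears, which is bounded via a radial trace inequality and Theorem \ref{thm1}; or use a one-dimensional trace estimate $u^2(t,r_+,x,y)\lesssim\int_{r_+}^{2r_+}\bigl(\tfrac{u^2}{r}+r^4(\grt_ru)^2\bigr)dr$ and then Theorem \ref{thm2} on the region $\{r_+\le r\le 2r_+\}$. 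The paper avoids the issue altogether because the $\temt_\nu{}^\mu X^\nu$ part of its current contributes a term $-\tfrac{(3-2\kappa)^2}{4}M\,u^2$ at the horizon with a favourable sign into which the cross term is absorbed in the proof of Lemma \ref{MECB}. With this point patched, your argument goes through and yields \eqref{ME2}.
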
   

In order to prove Theorem \ref{thm3}, we shall require the following result:
\begin{Lemma}[Hardy Estimate] \label{HEL}
Let $\phi:[r_+, \infty) \to \R$ be a smooth function such that $lim_{r\to\infty}r^\hf \phi=0$. Then the following inequality holds
\begin{equation}\label{HE}
\int_{r_+}^\infty \phi^2 dr \le C\left( \int_{r_+}^\infty \frac{\phi^2}{r}dr+\int_R^\infty(\grt_r\phi)^2 r^2 dr \right). 
\end{equation}
Where $C=C(R, r_+)>0$.
\end{Lemma}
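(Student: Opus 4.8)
The inequality is a weighted Hardy estimate, and the plan is to split $\int_{r_+}^\infty\phi^2\,dr = \int_{r_+}^R\phi^2\,dr + \int_R^\infty\phi^2\,dr$ at the fixed radius $R$ (assume $R>r_+$; $R=r_+$ is a harmless limiting case). The inner piece is trivial: since $r$ is bounded on $[r_+,R]$, $\int_{r_+}^R\phi^2\,dr \le R\int_{r_+}^R r^{-1}\phi^2\,dr$, which is already dominated by the first term on the right of \eqref{HE}. So everything reduces to the outer piece $\int_R^\infty\phi^2\,dr$, and this is where the twisted derivative must do the work.

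For the outer region I would run a weighted integration by parts in which $\grt_r$ plays the role of $\partial_r$. With the twisting function $f(r)=r^{-3/2+\kappa}$ one has $\grt_r\phi = \phi' + (3/2-\kappa)r^{-1}\phi$. Integrating by parts on a finite interval $[R,S]$,
\[
\int_R^S \phi^2\,dr = S\phi(S)^2 - R\phi(R)^2 - 2\int_R^S r\phi\phi'\,dr ,
\]
then substituting $\phi' = \grt_r\phi - (3/2-\kappa)r^{-1}\phi$ and collecting the two copies of $\int_R^S\phi^2\,dr$ leaves, since $1-2(3/2-\kappa) = -(2-2\kappa)$ and $\gamma := 2-2\kappa > 0$ (the only place $\kappa<1$ is used),
\[
\gamma\int_R^S \phi^2\,dr = -S\phi(S)^2 + R\phi(R)^2 + 2\int_R^S r\phi\,\grt_r\phi\,dr .
\]
The boundary term at $S$ has the favourable sign (and in any case vanishes as $S\to\infty$ by the hypothesis $r^{1/2}\phi\to 0$), so it may be discarded; by Young's inequality $2r\phi\,\grt_r\phi \le \tfrac{\gamma}{2}\phi^2 + \tfrac{2}{\gamma}r^2(\grt_r\phi)^2$ one absorbs half of $\int_R^S\phi^2$ on the left, and letting $S\to\infty$ gives
\[
\int_R^\infty \phi^2\,dr \;\le\; \frac{2R}{\gamma}\,\phi(R)^2 \;+\; \frac{4}{\gamma^2}\int_R^\infty (\grt_r\phi)^2 r^2\,dr .
\]

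It remains to bound the trace term $\phi(R)^2$ by the right-hand side of \eqref{HE}. This is the one genuinely delicate point, and the main obstacle: the hypotheses give no control on $\phi'$ near $r_+$, so $\phi(R)^2$ cannot be estimated from $[r_+,R]$ alone; one must reach slightly into $r>R$, where both hypotheses supply information. Concretely, for $r\in[R,R+1]$ one has $\phi(R)^2 \le 2\phi(r)^2 + 2\int_R^{R+1}(\phi')^2\,ds$, and averaging in $r$,
\[
\phi(R)^2 \;\le\; 2\int_R^{R+1}\phi^2\,dr \;+\; 2\int_R^{R+1}(\phi')^2\,dr ,
\]
where $\int_R^{R+1}\phi^2\,dr \le (R+1)\int_{r_+}^\infty r^{-1}\phi^2\,dr$, while $(\phi')^2 \le 2(\grt_r\phi)^2 + 2(3/2-\kappa)^2 r^{-2}\phi^2$ together with $(\grt_r\phi)^2 \le R^{-2}r^2(\grt_r\phi)^2$ and $r^{-2}\phi^2 \le r_+^{-1}r^{-1}\phi^2$ on $[R,R+1]$ shows $\int_R^{R+1}(\phi')^2\,dr$ is also bounded by those same two quantities. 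Feeding this back into the previous display and adding the inner-region estimate yields \eqref{HE} with $C$ depending only on $R$, $r_+$ (and the fixed parameter $\kappa$). The rest of the argument is routine weighted-Hardy bookkeeping; the only care needed is to carry out the integration by parts on a finite interval and pass to the limit, the decay hypothesis ensuring the boundary contribution at infinity is harmless.
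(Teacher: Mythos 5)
Your proposal is correct: the splitting at $R$, the integration by parts on $[R,S]$ with the substitution $\phi'=\grt_r\phi-(\tfrac{3}{2}-\kappa)\phi/r$, the sign of the boundary term at $S$, the Young absorption (valid precisely because $\gamma=2-2\kappa>0$, i.e. $\kappa<1$), and the averaging argument on $[R,R+1]$ for the trace term $\phi(R)^2$ all check out, and you correctly identify the trace term as the one point needing care since the hypotheses give no derivative control below $R$. The paper implements the same weighted-Hardy idea with a different decomposition: instead of a sharp split it uses a smooth cutoff $\chi$ with $\chi=0$ for $r\le R$, $\chi=1$ for $r\ge 2R$ and $|\chi'|\le C/r$, writes $\phi=\chi\phi+(1-\chi)\phi$, bounds the near piece exactly as you do, and for the far piece integrates by parts against the weight $\partial_r\left(r^{-2+2\kappa}/(2\kappa-2)\right)$, which produces $\grt_r(\chi\phi)$ directly; because $\chi$ vanishes at $R$ there is no boundary term at the splitting radius, so no trace estimate is needed, the price being a commutator term $r^2(\phi\,\chi')^2$ that is absorbed into $\int\phi^2/r\,dr$ via $|\chi'|\le C/r$, after which Cauchy--Schwarz (rather than Young) closes the estimate. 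The two routes are essentially equivalent in strength: yours is more self-contained (no cutoff construction) at the cost of the extra trace/averaging step, the paper's avoids that step by design of the cutoff; both constants depend on $\kappa$ as well as $R,r_+$ (the paper's through the factor $1/(1-\kappa)$), which is consistent with $\kappa$ being fixed in $(0,\kappa^*)\subset(0,1)$ where the lemma is applied.
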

The proof of this may be found in the technical lemmas section of the paper.
\begin{rem}
For solutions to \eqref{KGE} we have from our earlier asymptotics for Robin data that $r^\hf u \in \mo\left(r^{-1+\kappa} \right)$ so the above Hardy estimate holds.
\end{rem}
\begin{proof}[Proof of Theorem \ref{thm3}]
In order to improve the weights in Theorem \ref{thm2} we first apply the Hardy estimate of Lemma \ref{HEL} to establish that
\ben{improved est}
\int_{\mf_{[T_1,T_2]}} u^2d\eta \leq C \int_{\Sigma_{T_1}}\mathcal{E}[u] dr dx dy,
\een
holds under the same conditions for $u$ as in Theorem \ref{thm2}.
Next we exploit some of the spacetime's symmetry. We note that 
\begin{equation}
\left[ \pa_t, \Box_g + \frac{\alpha}{l^2}\right]  = 0,
\end{equation}
this tells us that $\left(\gr_tu\right)$ satisfies \eqref{KGE} (provided the initial data is regular enough) and so \eq{improved est} applies. This yields an estimate of the form
\begin{equation*}
\int_{\mf_{[T_1,T_2]}} (\gr_tu)^2 d\eta \le C \int_{\Sigma_{T_1}}\mathcal{E}[u_t] dr dx dy.
\end{equation*}
we may then recombine with the estimate in Theorem \ref{thm2} and see
\begin{equation}\label{MENT}
\int_{\mf_{[T_1,T_2]}} (\gr_tu)^2+r^4(\grt_ru)^2 +u^2 d\eta \le C \int_{\Sigma_{T_1}}\left ( \mathcal{E}[u] +  \mathcal{E}[u_t]\right)dr dx dy
\end{equation}
it now only remains to recover the tangential derivatives. We will proceed with a more robust estimate here but refer the reader to the remark after this proof if they want a faster but less frugal route.\\
We now define the current 
\[
\tilde{J}_3^\mu[u] = \tilde{\mathbb{T}}_\nu{}^\mu[u] X^\nu -(2-\kappa) u\grt^\mu u 
\]
where
\[
X = r\pa_r
\]
from examining \eqref{Gdiv} we can see that this current won't pick up any cross terms. More explicitly we get 
\[
-\gr_\mu\tilde{J}_3^\mu \cdot r^2 =\left( \frac{l^4 M }{r^3}-l^2\right) (\gr_tu)^2+\left(\frac{2 r^4}{l^2} -Mr\right)(\grt_ru)^2+\frac{M(3-2\kappa)^2 }{4r}u^2+r^2|\su|^2
\]
this current clearly satisfies the conditions of lemma \ref{MECB} and as we control all the non-tangential terms we have that 
\begin{equation*}
\int_{\mf_{[T_1,T_2]}} r^2|\su|^2 d\eta  \le C \int_{\Sigma_{T_1}}\left ( \mathcal{E}[u] +  \mathcal{E}[u_t]\right)dr dx dy
\end{equation*}  
which we combine to get
\begin{equation*}
\begin{split}
\int_{\mf_{[T_1,T_2]}} (\gr_tu)^2+r^4 (\grt_ru)^2 +u^2+r^2|\su|^2 d\eta \le C \int_{\Sigma_{T_1}}\left ( \mathcal{E}[u] +  \mathcal{E}[u_t]\right)dr dx dy
\end{split}
\end{equation*}
and finally
\begin{equation}\label{IDE}
\int_{\mf_{[T_1,T_2]}} \mathcal{E}[u] d\eta \leq C \int_{\Sigma_{T_1}}\left ( \mathcal{E}[u] +  \mathcal{E}[u_t]\right)dr dx dy.
\end{equation}
\end{proof}
\begin{rem}
If we aren't concerned about the robustness of this result we can also exploit
\begin{equation}
\left[ \pa_x, \Box_g+\frac{\alpha}{l^2}\right]  = 0,
\end{equation} 
\begin{equation}
\left[ \pa_y, \Box_g+\frac{\alpha}{l^2}\right]  = 0.
\end{equation} 
These would then manifest as $\gr_xu$ and $\gr_y u$ control. This would give us an estimate of the form
\be
\int_{\mf_{[T_1,T_2]}} \mathcal{E}[u] d\eta \leq C \int_{\Sigma_{T_1}}\left ( \mathcal{E}[u] +  \mathcal{E}[u_t]+  \mathcal{E}[u_x] +  \mathcal{E}[u_y]\right)dr dx dy
\ee 
and this would be enough for a decay estimate. With nonlinear applications in mind, avoiding explicit use of the toric symmetry is preferable.
\end{rem}

In order to establish a decay statement, we require the following straightforward corollary of Gronwall's Lemma:
\begin{Lemma}Let $k>0$, and suppose that $f\in C^1([T_1, \infty))$ satisfies \label{newgronlem}
\ben{newgron}
f'(t) \leq -\varkappa f(t) + \frac{A}{\left(1+t-T_1\right)^k},
\een
then there exists $C=C(k, \varkappa)>0$ such that
\be
f(t) \leq f(T_1) e^{-\varkappa (t-T_1)}+  \frac{C A}{\left(1+t-T_1\right)^k}.
\ee
\end{Lemma}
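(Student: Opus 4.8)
The plan is to treat \eqref{newgron} as a scalar differential inequality, integrate it with an integrating factor, and then bound the resulting Duhamel-type integral. After translating so that $T_1=0$ (replace $t$ by $t-T_1$ everywhere; this changes nothing), multiply \eqref{newgron} by $e^{\varkappa t}$ to get $\frac{d}{dt}\bigl(e^{\varkappa t}f(t)\bigr)\le A\,e^{\varkappa t}(1+t)^{-k}$. Since $f\in C^1$, the fundamental theorem of calculus applies; integrating from $0$ to $t$ and multiplying back by $e^{-\varkappa t}$ gives
\be
f(t)\le f(0)\,e^{-\varkappa t}+A\int_0^t \frac{e^{-\varkappa(t-s)}}{(1+s)^k}\,ds.
\ee
It therefore suffices to prove that the kernel integral $I(t):=\int_0^t e^{-\varkappa(t-s)}(1+s)^{-k}\,ds$ obeys $I(t)\le C(k,\varkappa)\,(1+t)^{-k}$.

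To estimate $I(t)$ I would split the integration domain at $s=t/2$. On $s\in[0,t/2]$, bound $(1+s)^{-k}\le 1$ and $e^{-\varkappa(t-s)}\le e^{-\varkappa t/2}$, so this contribution is at most $\frac{t}{2}e^{-\varkappa t/2}$; since $\sup_{t\ge 0}\frac{t}{2}e^{-\varkappa t/2}(1+t)^k<\infty$ (a polynomial times a decaying exponential is bounded), this piece is $\le C_1(k,\varkappa)(1+t)^{-k}$. On $s\in[t/2,t]$, use $1+s\ge 1+t/2\ge \frac{1}{2}(1+t)$ to get $(1+s)^{-k}\le 2^k(1+t)^{-k}$, while $\int_{t/2}^t e^{-\varkappa(t-s)}\,ds\le \int_0^\infty e^{-\varkappa\sigma}\,d\sigma=\varkappa^{-1}$; hence this piece is $\le 2^k\varkappa^{-1}(1+t)^{-k}$. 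Adding the two contributions yields $I(t)\le C(k,\varkappa)(1+t)^{-k}$, and inserting this into the Duhamel formula above gives the claimed bound with $C$ depending only on $k$ and $\varkappa$.

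There is no genuine obstacle here: the argument is entirely elementary. The only point requiring a little care is that the constant be uniform in $t$, which reduces to the boundedness of $\tau\mapsto \tau^{\,k+1}e^{-\varkappa\tau/2}$ on $[0,\infty)$. An alternative would be to exhibit $t\mapsto f(0)e^{-\varkappa t}+CA(1+t)^{-k}$ as a supersolution of \eqref{newgron} and invoke a comparison principle, but checking the supersolution inequality at $t=0$ forces $\varkappa>k$, so the direct integration above is preferable as it works for all $k,\varkappa>0$.
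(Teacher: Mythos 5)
Your proof is correct, and it follows the same overall route as the paper: multiply by the integrating factor $e^{\varkappa t}$, integrate, and reduce the lemma to a bound on the weighted exponential integral. The only difference is how that integral is estimated: you split the Duhamel kernel $\int_0^t e^{-\varkappa(t-s)}(1+s)^{-k}\,ds$ at $s=t/2$, controlling the near piece by $2^k\varkappa^{-1}(1+t)^{-k}$ and the far piece by the boundedness of $\tau\mapsto \tau^{k+1}e^{-\varkappa\tau/2}$, whereas the paper bounds $\int_0^t e^{\varkappa s}(1+s)^{-k}\,ds$ by integrating by parts and estimating the resulting term with a maximum over $(0,t)$. Both are elementary and valid for all $k,\varkappa>0$; your splitting argument is somewhat more transparent than the paper's rather terse max bound, and your remark that the supersolution/comparison alternative would force $\varkappa>k$ correctly explains why the direct integration is the right choice.
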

The proof of this may be found in the technical lemmas section of the paper.\\

We also require the following quantitative version of the redshift effect, which can be found in Theorem 3.8 of \cite{Warnick:2013hba}.
\begin{Lemma}\label{RSE}
There exists a modified energy $\mathbb{E}_t[u]$, the \emph{redshift energy} such that
\begin{itemize}
\item $\mathbb{E}_t[u]$  is equivalent to the non-degenerate energy at time $t$. That is, for any smooth $u$ satisfying appropriate boundary conditions we have:
\be
C^{-1} \int_{\Sigma_{t}} \mathcal{E}[u] dr dx dy \leq  \mathbb{E}_t[u] \leq C \int_{\Sigma_{t}} \mathcal{E}[u] dr dx dy
\ee
for some $C>0$.
\item If $u$ solves \eq{KGE} subject to Dirichlet, Neumann or Robin boundary conditions, then $\mathbb{E}_t[u]$ satisfies 
\ben{redshift}
\frac{d}{dt} \mathbb{E}_t[u]  \leq - \varkappa\mathbb{E}_t[u] + C E_t[u] 
\een
for some $\varkappa>0$.
\end{itemize}
\end{Lemma}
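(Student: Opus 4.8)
The plan is to realise $\mathbb{E}_t[u]$ as the twisted energy flux associated to a carefully chosen timelike multiplier $N$, the \emph{redshift vector field}, following the construction of Dafermos--Rodnianski adapted to the twisted-derivative framework as in \cite{DBH} and \cite{Warnick:2013hba}. First I would construct $N$: a future-directed timelike, $\pa_t$-invariant vector field on $\mf$ which coincides with $T = \pa_t$ for $r \ge r_0$ (some $r_0 > r_+$) and whose defining feature is that on the horizon $\mh$ the bulk density
\[
\tilde{K}^N[u] = {}^N\pi_{\mu\nu}\temt^{\mu\nu}[u] + N^\nu \tilde{S}_\nu[u]
\]
controls the full first jet of $u$, i.e.\ $\tilde{K}^N[u] \gtrsim \mathcal{E}[u]$ on $\{r = r_+\}$. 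Such an $N$ exists because $\mh$ is a non-degenerate Killing horizon: the surface gravity is positive (equivalently $g^{rt}|_{\mh} > 0$, as was noted in defining the horizon flux $F$), so adjoining to $T$ a suitably weighted, compactly supported $\pa_r$-component produces a positive ``redshift'' contribution in ${}^N\pi_{\mu\nu}\temt^{\mu\nu}$ which dominates the remaining terms there; the twisting corrections, entering through $\tilde{S}_\nu$ and through $V = \frac{(3-2\kappa)^2M}{2r^3}$, are bounded near $r_+$ and hence contribute only lower-order, absorbable terms. By continuity $\tilde{K}^N[u] \ge c\,\mathcal{E}[u]$ for $r \in [r_+, r_1]$ for some $r_1 \in (r_+, r_0]$ and $c>0$.

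Next I would define, writing $N = T + Y$ with $Y$ supported in $[r_+, r_0]$,
\[
\mathbb{E}_t[u] := E_t[u] + \int_{\Sigma_t^{[r_+,r_0]}} \temt_{\mu\nu}[u]\, Y^\nu n^\mu\, dS_{\Sigma_t} = \int_{\Sigma_t}\temt_{\mu\nu}[u]\, N^\nu n^\mu\, dS_{\Sigma_t} + \frac{1}{2l^2}\int_{T^2_{t,\infty}}(r^{\frac{3}{2}-\kappa}u)^2\beta\, dx\,dy,
\]
the last term being present only for Robin data. Since the twisting function $f = r^{-\frac{3}{2}+\kappa}$ gives $V \ge 0$, the dominant energy condition from the twisted-energy-momentum lemma applies; as $N$ is timelike and future-directed, $\temt_{\mu\nu}N^\nu n^\mu$ is pointwise comparable to all first derivatives of $u$, and a weight comparison identical to the one already performed for the $T$-energy density (using also that the Robin trace term is non-negative and itself controlled by $\int_{\Sigma_t}\mathcal{E}[u]$) yields the equivalence $C^{-1}\int_{\Sigma_t}\mathcal{E}[u]\,dr\,dx\,dy \le \mathbb{E}_t[u] \le C\int_{\Sigma_t}\mathcal{E}[u]\,dr\,dx\,dy$.

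For the differential inequality I would apply the divergence theorem \eqref{DT} to $\tilde{J}^N_\mu[u] = \temt_{\mu\nu}[u]N^\nu$ on $\mf_{[t,t+h]}$ and let $R_1 \to r_+$, $R_2 \to \infty$. The contribution at $\mi$ is handled exactly as in the derivation of \eqref{DEI}: it vanishes for Dirichlet and Neumann data and for Robin data is absorbed into the boundary term of $\mathbb{E}_t[u]$ (using $\pa_t\beta = 0$, $\beta \ge 0$); the horizon flux $\lim_{R_1\to r_+}\int_{\Sigma^{[t,t+h]}_{R_1}}\temt_{\mu\nu}N^\nu m^\mu\, dS_{\Sigma_{R_1}}$, since $N$ is future timelike, has the same favourable (draining) sign as for the $T$-energy and so only helps. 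Letting $h\to 0$ gives $\frac{d}{dt}\mathbb{E}_t[u] \le -\int_{\Sigma_t}\tilde{K}^N[u]\, r^2\, dr\,dx\,dy$. I would then decompose $\Sigma_t = \Sigma_t^{[r_+,r_1]}\cup\Sigma_t^{[r_1,r_0]}\cup\Sigma_t^{[r_0,\infty)}$: for $r\ge r_0$ we have $N=T$, which is Killing and preserves $f$, so $\tilde{K}^N[u]=0$; on the compact region $[r_1,r_0]$, bounded away from $\mh$ and $\mi$, one has $|\tilde{K}^N[u]|\lesssim\mathcal{E}[u]$ with $\int_{\Sigma_t^{[r_1,r_0]}}\mathcal{E}[u]\lesssim E_t[u]$; and on $[r_+,r_1]$ the first step gives $\tilde{K}^N[u]\,r^2 \ge c\,\mathcal{E}[u]$. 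Since away from the horizon $\mathcal{E}[u]$ is moreover comparable to the renormalised $T$-energy density, $\int_{\Sigma_t^{[r_1,\infty)}}\mathcal{E}[u]\lesssim E_t[u]$, whence $\int_{\Sigma_t^{[r_+,r_1]}}\mathcal{E}[u] \ge c'\mathbb{E}_t[u] - C E_t[u]$ by the equivalence; combining these gives $\frac{d}{dt}\mathbb{E}_t[u] \le -\varkappa\mathbb{E}_t[u] + C E_t[u]$ for suitable $\varkappa,C>0$.

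The main obstacle is the first step: one must verify, within the twisted-derivative framework, that the redshift multiplier can be engineered so that $\tilde{K}^N$ is positive definite at $\mh$ --- this is exactly where the non-degeneracy of the horizon enters --- and that none of the corrections introduced by the twisting function upset this sign. Everything else reduces to weight comparisons of the type already carried out for the $T$-energy and to routine applications of the divergence theorem.
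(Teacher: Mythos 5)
Your proposal is correct and follows essentially the same route as the paper's source for this statement: the paper does not prove Lemma \ref{RSE} itself but quotes it from Theorem 3.8 of \cite{Warnick:2013hba}, whose proof is precisely the twisted redshift construction you describe (a multiplier $N=T+Y$ with $Y$ supported near the non-degenerate horizon, positivity of $\tilde{K}^N$ there from the positive surface gravity, and the Gronwall-type inequality obtained by splitting $\Sigma_t$ into a near-horizon region and its complement where the degenerate energy is coercive). The only point worth making explicit at your flagged ``main obstacle'' is that the twisting correction $\frac{\grt^\dagger_Y f}{2f}\,\grt_\sigma u\grt^\sigma u$ is not small but is harmless at $\mathcal{H}$ because $g^{rr}|_{r=r_+}=0$ removes the $(\grt_r u)^2$ part of the Lorentzian square, leaving only a cross term absorbable by Cauchy--Schwarz into the good redshift term and the degenerate-energy terms permitted on the right-hand side of \eqref{redshift}.
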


\begin{Theorem}\label{thm3.5}
Suppose $u$ is a solution to \eqref{KGE} with $\kappa \in (0,\kappa^*),$ with Dirichlet, Neumann or Robin data. Then 
\begin{equation}
 \int_{\Sigma_{t}} \mathcal{E}[u] dr dx dy \le \frac{C}{(1+t)^n} \sum_{k=0}^n   \int_{\Sigma_{0}} \mathcal{E}[\pa_t^k u] dr dx dy
\end{equation}
for some $C = C(n, M, l,\kappa)>0$.
\end{Theorem}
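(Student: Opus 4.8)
The plan is to induct on $n$. The base case $n=0$ is precisely Theorem \ref{thm1} applied with initial time $0$, so I fix $n\ge 1$, assume the estimate with $n-1$ in place of $n$ holds for \emph{every} solution of \eqref{KGE} subject to the admissible boundary conditions, and upgrade it to rate $n$ for a fixed such solution $u$. Throughout I will use two elementary facts about the degenerate renormalised energy $E_t[u]$ of Section~4: first, that $E_t[u]\le C\int_{\Sigma_t}\mathcal{E}[u]\,dr\,dx\,dy$, since on $\{r\ge r_+\}$ each weight in the integrand of $E_t$ is dominated by the corresponding weight in the non-degenerate density $\mathcal{E}[u]$ of \eqref{renormalized energy} (for Robin data the boundary term at $\mi$ is absorbed using a Hardy inequality of the type of Lemma \ref{HEL}); second, that $E_t[u]$ is non-increasing in $t$, by the flux identity \eqref{DEI} and the positivity of the horizon flux $F$.

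The core step is to turn the derivative-losing integrated estimate into pointwise decay of $E_t[u]$. Applying Theorem \ref{thm3} on $\mf_{[T_1,T_2]}$ and letting $T_2\to\infty$ — legitimate by monotone convergence, the integrand being nonnegative and the constant $T_2$-independent — gives, for every $T_1\ge0$,
\[
\int_{T_1}^\infty\int_{\Sigma_t}\mathcal{E}[u]\,dr\,dx\,dy\,dt\;\le\;C\int_{\Sigma_{T_1}}\bigl(\mathcal{E}[u]+\mathcal{E}[u_t]\bigr)\,dr\,dx\,dy .
\]
Since $\pa_t$ commutes with $\Box_g+\frac{\alpha}{l^2}$ and preserves all the boundary conditions (recall $\pa_t\beta=0$ for Robin data), the inductive hypothesis applies to both $u$ and $u_t$; together with the first fact above this gives $\int_{T_1}^\infty E_t[u]\,dt\le C(1+T_1)^{-(n-1)}\mathcal{Q}$, where $\mathcal{Q}:=\sum_{k=0}^n\int_{\Sigma_0}\mathcal{E}[\pa_t^k u]\,dr\,dx\,dy$. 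Using that $E_t[u]$ is non-increasing, $E_t[u]\le(t-T_1)^{-1}\int_{T_1}^tE_s[u]\,ds\le C(t-T_1)^{-1}(1+T_1)^{-(n-1)}\mathcal{Q}$, and taking $T_1=t/2$ (with bounded $t$ handled directly via Theorem \ref{thm1}) yields $E_t[u]\le C(1+t)^{-n}\mathcal{Q}$.

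Finally I transfer this to the non-degenerate energy via the quantitative redshift estimate of Lemma \ref{RSE}: it gives $\frac{d}{dt}\mathbb{E}_t[u]\le-\varkappa\mathbb{E}_t[u]+C E_t[u]\le-\varkappa\mathbb{E}_t[u]+C\mathcal{Q}(1+t)^{-n}$, so Lemma \ref{newgronlem} with $k=n$ produces $\mathbb{E}_t[u]\le\mathbb{E}_0[u]e^{-\varkappa t}+C\mathcal{Q}(1+t)^{-n}$. Since $\mathbb{E}_0[u]\le C\mathcal{Q}$, since $e^{-\varkappa t}\le C_n(1+t)^{-n}$, and since $\mathbb{E}_t[u]$ is equivalent to $\int_{\Sigma_t}\mathcal{E}[u]\,dr\,dx\,dy$, we obtain $\int_{\Sigma_t}\mathcal{E}[u]\,dr\,dx\,dy\le C(1+t)^{-n}\mathcal{Q}$, which closes the induction; the general case then follows by a density argument.

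The main obstacle is the gain of one power of $t$ at each inductive stage. That gain is not supplied by the redshift ODE, which merely propagates whatever rate appears in its forcing term, but by the interaction between the loss-of-derivative integrated decay estimate of Theorem \ref{thm3} and the monotonicity of $E_t[u]$: monotonicity is exactly what converts a bound of order $(1+T_1)^{-(n-1)}$ on the time-integral of $E_t$ into a pointwise bound of order $(1+t)^{-n}$, after which Lemma \ref{newgronlem} hands the improved rate to $\mathbb{E}_t[u]$. A secondary technical point, which is why the statement is phrased for smooth solutions admitting a full asymptotic expansion, is that each $\pa_t^k u$ must itself have finite renormalised energy and satisfy the hypotheses of Theorems \ref{thm1} and \ref{thm3}.
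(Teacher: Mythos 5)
Your proposal is correct and follows essentially the same route as the paper: Theorem \ref{thm3} plus monotonicity of the degenerate energy $E_t[u]$ to get pointwise decay, then the redshift estimate of Lemma \ref{RSE} combined with Lemma \ref{newgronlem} to upgrade to the non-degenerate energy. The only difference is that you carry out explicitly the conversion to pointwise decay and the induction on $n$ (via the midpoint choice $T_1=t/2$ and commuting with $\pa_t$), which the paper delegates to a citation of \cite{warn3}, Lemma 5.8.
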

\begin{proof}
We can easily see that 
\[
E_t[u] \le C\int_{\Sigma_t}\mathcal{E}[u]drdxdy
\]
as the quantities are equivalent norms for $u$ away from the horizon. Integrating in time we get
\be
\int_{T_1}^{T_2} E_s[u] ds \leq C \int_{\mf_{[T_1, T_2]}} \mathcal{E}[u] d\eta  
\ee
so by theorem \ref{thm3} we have that
\be
\int_{T_1}^{T_2} E_s[u] ds \leq C \int_{\Sigma_{T_1}}\left ( \mathcal{E}[u] +  \mathcal{E}[u_t]\right)dr dx dy.
\ee
Since $E_t[u]$ is non-increasing, a slight adaptation of \cite{warn3}, Lemma 5.8 immediately implies that for any $t>T_1$ we have:
\be
E_t[u]   \leq \frac{C}{1+t-T_1} \int_{\Sigma_{T_1}}\left ( \mathcal{E}[u] +  \mathcal{E}[u_t]\right)dr dx dy.
\ee
This gives us decay of the degenerate energy. To establish decay of the full non-degenerate energy we apply Lemma \ref{newgronlem} to \eq{redshift}, and we deduce that
\begin{align*}
\mathbb{E}_t[u] &\leq \mathbb{E}_{T_1}[u]e^{-\varkappa(t-T_1)} + \frac{C}{1+t-T_1} \int_{\Sigma_{T_1}}\left ( \mathcal{E}[u] +  \mathcal{E}[u_t]\right)dr dx dy. \\
& \leq C e^{-\varkappa(t-T_1)} \int_{\Sigma_{T_1}} \mathcal{E}[u] dr dx dy + \frac{C}{1+t-T_1} \int_{\Sigma_{T_1}}\left ( \mathcal{E}[u] +  \mathcal{E}[u_t]\right)dr dx dy \\
&\leq  \frac{C}{1+t-T_1} \int_{\Sigma_{T_1}}\left ( \mathcal{E}[u] +  \mathcal{E}[u_t]\right)dr dx dy
\end{align*}
Finally, we obtain:
\be
 \int_{\Sigma_{t}} \mathcal{E}[u] dr dx dy  \leq \frac{C}{1+t-T_1} \int_{\Sigma_{T_1}}\left ( \mathcal{E}[u] +  \mathcal{E}[u_t]\right)dr dx dy.
\ee
Which gives the result for $n=1$ on setting $T_1 = 0$. For higher $n$, the induction argument follows precisely as in  \cite{warn3}, Lemma 5.8.
\end{proof}

\subsection{Gaussian Beam and Derivative Loss}
We now have an integrated decay estimate but with derivative loss. While we have not quantified precisely how much derivative loss is required we will show that it is necessary. That is 
\begin{Theorem}\label{thm4} 
There exists no constant $C>0$, independent of $T$, such that the estimate 
\begin{equation}\label{GBE}
\int_{\mf_{[0,T]}} \mathcal{E}[u]  d\eta \le C \int_{\Sigma_{0}} \mathcal{E}[u] dr dx dy
\end{equation}
holds for all smooth solutions $u$ of \eqref{KGE}.
\end{Theorem}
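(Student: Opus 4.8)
The plan is to exhibit a one-parameter family of solutions for which the left-hand side of \eqref{GBE} grows in $T$ while the right-hand side stays bounded, thereby contradicting any $T$-independent constant. The mechanism is the ``trapping at infinity'' alluded to in the introduction: there exist null geodesics that start far from the horizon, move essentially orthogonally to $\pa_r$, and take arbitrarily long coordinate time to fall through $\mh$. The strategy is to localise a solution of \eqref{KGE} along such a geodesic using the Gaussian beam (quasimode) construction of \cite{GBS}, obtaining an approximate solution concentrated in a tube about the geodesic, and then upgrade it to an exact solution by solving away the error with the energy estimates already available (Theorem \ref{thm1}).

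First I would analyse the null geodesic flow. Writing the geodesic Lagrangian from $g^{-1}$ and using that $\pa_t, \pa_x, \pa_y$ are Killing, the conserved quantities are the energy $E = -\pair{\gd}{\pa_t}$ and momenta $p_x, p_y$. Setting $\rd = 0$ at some large radius $r_0$ and balancing against the tangential momentum gives a geodesic that hovers near $r = r_0$ with $\td \sim l^2/r_0^2 \cdot (\text{const})$, so that the coordinate time to traverse a fixed affine length diverges as $r_0 \to \infty$; equivalently, for fixed $T$ one can choose $r_0 = r_0(T)$ large so that the geodesic segment $\gamma_T$ remains in $\{r \ge r_+\}$ for all $t \in [0,T]$ with $\dot r$ small. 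Crucially one must check this geodesic is not literally trapped (it does eventually cross $\mh$) but is ``almost trapped'' on the time scale $[0,T]$, and that it meets $\Sigma_0$ transversally. Along $\gamma_T$ one then runs the WKB/Gaussian beam ansatz $u_{\mathrm{app}} = e^{i\lambda \Phi} \sum_{j} \lambda^{-j} a_j$ with $\Phi$ solving the eikonal equation $g^{\mu\nu}\pa_\mu\Phi\pa_\nu\Phi = 0$ to high order on $\gamma_T$ (so $\mathrm{Im}\,\Phi \ge c\,\mathrm{dist}(\cdot,\gamma_T)^2$) and the $a_j$ solving transport equations; the Klein--Gordon mass term $\alpha u/l^2$ is lower order in $\lambda$ and is absorbed into the transport hierarchy, which is why the result is insensitive to the mass. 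One then cuts off $u_{\mathrm{app}}$ to a tube of width $\lambda^{-1/2}$ around $\gamma_T$, staying away from $\mi$ so that the boundary conditions are satisfied trivially and the renormalisation/twisting plays no role. This produces $\norm{\Box_g u_{\mathrm{app}} + \tfrac{\alpha}{l^2}u_{\mathrm{app}}}{L^2(\mf_{[0,T]})}$ small, say $\le C(T)\lambda^{-N}$, while $\int_{\mf_{[0,T]}}\mathcal{E}[u_{\mathrm{app}}]\,d\eta \gtrsim T$ (the beam deposits a fixed amount of energy per unit $t$ along its length) and $\int_{\Sigma_0}\mathcal{E}[u_{\mathrm{app}}]\lesssim 1$, uniformly after normalising.

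Next I would pass from the approximate to an exact solution. Let $u$ solve \eqref{KGE} with the same data on $\Sigma_0$ as $u_{\mathrm{app}}$ and the same boundary conditions; then $v = u - u_{\mathrm{app}}$ has vanishing data on $\Sigma_0$ and solves the inhomogeneous equation with source $-(\Box_g + \tfrac{\alpha}{l^2})u_{\mathrm{app}}$. By a Duhamel/energy argument using Theorem \ref{thm1} (and its higher-order version) applied to the inhomogeneous problem, $\int_{\Sigma_t}\mathcal{E}[v]\,dr\,dx\,dy \le C(T)\,\big(\text{size of the source}\big)^2$, which can be made smaller than any prescribed $\varepsilon$ by taking $\lambda$ large enough (for that fixed $T$). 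Hence $\int_{\mf_{[0,T]}}\mathcal{E}[u]\,d\eta \ge \tfrac12\int_{\mf_{[0,T]}}\mathcal{E}[u_{\mathrm{app}}]\,d\eta - C\varepsilon \gtrsim T$ while $\int_{\Sigma_0}\mathcal{E}[u] \lesssim 1$. If \eqref{GBE} held with a uniform $C$, we would get $T \lesssim C$ for all $T$, a contradiction. Letting $T \to \infty$ (with $\lambda = \lambda(T)$ chosen accordingly) completes the argument.

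The main obstacle, and the step requiring care, is the geodesic analysis together with the uniformity bookkeeping: one must verify that the ``hovering'' null geodesics genuinely exist in this toric background (this is where the absence of a centrifugal/photon-sphere term helps — the geodesic is not stably trapped, merely slowly infalling, so the beam construction must be carried out along a non-stationary geodesic), that the energy $\mathcal{E}[u_{\mathrm{app}}]$ it carries is bounded below by a fixed constant per unit coordinate time (so the spacetime integral really is $\gtrsim T$ rather than, say, $O(1)$), and that the constant $C(T)$ multiplying the Gaussian-beam error in the energy estimate, though $T$-dependent, can be beaten by choosing the concentration parameter $\lambda$ large at the end — i.e. the order of quantifiers ``for each $T$, choose $\lambda$'' must be respected. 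The renormalisation/twisted-derivative machinery is a non-issue here precisely because the beam is supported in a compact $r$-range bounded away from $\mi$, so $\mathcal{E}[u]$ is just the ordinary $H^1$ energy density there.
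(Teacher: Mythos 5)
Your proposal is correct and follows essentially the same route as the paper: slowly-infalling null geodesics at large radius ("trapping at infinity"), a Gaussian beam localised along such a geodesic away from $\mi$, comparison with an exact solution sharing its data via the inhomogeneous energy estimate (with the $\lambda$-after-$T$ quantifier order handled as you describe), and the resulting contradiction with a $T$-independent constant. The only cosmetic differences are that the paper normalises the beam so the fixed $O(1)$ error becomes negligible relative to the divergent beam energy (rather than invoking a higher-order WKB expansion to make the error $O(\lambda^{-N})$), and it closes the contradiction by fixing $T=2C$ and using monotonicity of $E_t[u]$ together with the constancy of $-g(\partial_t,\dot\gamma)$ along the geodesic, which is precisely the justification for your "fixed energy per unit time" claim.
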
  
The proof of this comes from the Gaussian beam construction as seen in \cite{GBS}. The core idea is to show that there exists null geodesics that can remain outside the event horizon for arbitrary lengths of coordinate time. We then can construct approximate solutions to \eqref{KGE} (Gaussian beams) supported in a tubular neighbourhood of these geodesics such that they lose arbitrary small amounts of energy along them and remain close to true solutions in the energy norm. As we can find solutions which lose arbitrary small amounts of energy for any fixed time interval we cannot have an estimate of the form in Theorem \ref{thm4}.   
\begin{Lemma}\label{TNG}
For any given $T>0$, there exists a null geodesic $\gamma$ such that $Im(\gamma)\subset \mf_{[0,T]}\cap \left \{\frac{3}{2} r_+<r<R(T)\right\}$ for some large $R(T)$. Furthermore $\gamma$ is a smooth embedding.
\end{Lemma}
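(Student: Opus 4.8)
The plan is to exhibit explicitly a family of null geodesics that move ``mostly tangentially'' to the tori of constant $r$ and which, by choosing the ratio of their tangential momentum to their energy large, take arbitrarily long coordinate time to reach the horizon. First I would write down the geodesic equations using the conserved quantities afforded by the Killing fields $\pa_t, \pa_x, \pa_y$: set $E = -g(\dot\gamma, \pa_t)$, $p_x = g(\dot\gamma, \pa_x)$, $p_y = g(\dot\gamma, \pa_y)$, all constant along $\gamma$, and write $p^2 = p_x^2 + p_y^2$. Using the cometric given in the excerpt, the null condition $g^{-1}(\dot\gamma, \dot\gamma) = 0$ becomes, after substituting the conserved momenta, a relation of the schematic form
\[
\left(\frac{2Ml^4}{r^5}+\frac{l^2}{r^2}\right)E^2 - \frac{4Ml^2}{r^3}E\,\dot r \cdot(\text{factor}) + \left(-\frac{2M}{r}+\frac{r^2}{l^2}\right)\dot r^2(\text{factor}) = \frac{p^2}{r^2},
\]
which I would clean up by working in the coordinate $t$ rather than the affine parameter — i.e. study $\frac{dr}{dt}$ — so that the $\dot t$ normalisations drop out. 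The upshot is an equation of the form $\left(\frac{dr}{dt}\right)^2 = \mathcal{R}(r; E, p)$ with $\mathcal{R}$ an explicit rational function; the ``trapping at infinity'' is visible in that, for $p/E$ large, $\mathcal{R}$ is small on a long interval of $r$ near some large radius.

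Next I would make this quantitative. The key point is that near $r = \infty$ the radial ``potential'' degenerates: a null geodesic launched at radius $r_0$ purely in the $(x,y)$-directions (i.e. with $\frac{dr}{dt}\big|_{r_0} = 0$, which fixes $p^2/E^2$ in terms of $r_0$) will begin to fall inward, but the rate $\big|\frac{dr}{dt}\big|$ stays $O(r_0^{-1})$-small over an $O(r_0)$ range of radii. Concretely, I expect that the coordinate time to travel from $r_0$ down to, say, $\frac{3}{2} r_+$ grows like a positive power of $r_0$ as $r_0 \to \infty$. So given $T$, I choose $r_0 = R(T)$ large enough that this travel time exceeds $T$; then the geodesic segment run for coordinate time $T$ stays in $\{\frac32 r_+ < r < R(T)\}$. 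To produce an honest null \emph{geodesic} (not just a curve with the right $\frac{dr}{dt}$) I would take the solution of the full geodesic ODE system with these initial data — existence and smoothness are standard ODE theory since the metric is smooth on $\{r > r_+\}$ — and the conserved quantities guarantee it is null and has the computed $\frac{dr}{dt}$. Smoothness of the embedding follows because $\dot\gamma$ never vanishes (e.g. $\dot t \neq 0$ throughout, as $\pa_t$ is timelike in this region and $\dot\gamma$ is null hence non-zero), so $\gamma$ is an immersion; restricting to the compact parameter interval corresponding to $t \in [0,T]$ and noting $\gamma$ is injective in $t$ (since $t$ is monotonic along it) gives a smooth embedding. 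One should also choose the $x, y$ components of the initial momentum generic so that the projection to $\T^2$ does not create self-intersections from the periodic identification, or simply observe that $\gamma$ lifts to an embedded curve in the universal cover $\R^2$ and its image downstairs over the bounded time interval can be kept embedded by a small rotation of $(p_x, p_y)$.

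The main obstacle I anticipate is not conceptual but bookkeeping: extracting from the explicit rational function $\mathcal{R}(r; E, p)$ a clean lower bound of the form ``coordinate time from $r_0$ to $\frac32 r_+$ is $\gtrsim r_0^{\sigma}$ for some $\sigma > 0$''. This requires understanding $\mathcal{R}$ both in the asymptotic region $r \sim r_0 \to \infty$ (where the competition is between the $r^2/l^2$ and $p^2/r^2$ terms, giving the slow infall) and in the intermediate region down to $\frac32 r_+$ (where one just needs $\mathcal{R}$ to stay bounded so that $\big|\frac{dr}{dt}\big|$ does not blow up, making the time finite but not short). A convenient device is to split $\int \frac{dr}{\sqrt{\mathcal{R}}}$ at an intermediate radius $\lambda r_0$: on $[\lambda r_0, r_0]$ one has $\sqrt{\mathcal{R}} = O(r_0^{-1})$ forcing that piece of the integral to be $\gtrsim r_0$, which already suffices with $\sigma = 1$, while on $[\frac32 r_+, \lambda r_0]$ one only needs the crude bound that $\sqrt{\mathcal{R}}$ is bounded above by a constant times $r/l$ (so $\mathcal{R}$ does not force the geodesic through instantaneously) — and in fact on that inner range the travel time is $O(r_0)$ as well. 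Assembling these gives $R(T) \asymp T$ up to constants, and the lemma follows.
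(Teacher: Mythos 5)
Your overall strategy coincides with the paper's: use the conserved quantities of $\partial_t,\partial_x,\partial_y$ together with the null condition, launch the geodesic tangentially at a large radius $r_0$ (so $\dot r=0$ there), show that the coordinate time to fall inward diverges as $r_0\to\infty$, and then choose $r_0=R(T)$ accordingly; the paper merely runs the estimate in the affine parameter (bounding the affine fall time and $\dot t$ separately) rather than through $dr/dt$. However, your central quantitative claim is false and the step would fail as stated. Writing $E$ for the energy and $p$ for the magnitude of tangential momentum, the turning point condition at $r_0$ fixes $E=p\left(1/l^2-2M/r_0^3\right)^{1/2}\approx p/l$, and (in the static form of the metric, $f(r)=r^2/l^2-2M/r$; the paper's $t$ differs from static time only by a bounded function of $r$ away from the horizon) one finds $\left|\tfrac{dr}{dt}\right|=\tfrac{f(r)}{E}\sqrt{E^2-\tfrac{p^2f(r)}{r^2}}=\tfrac{f(r)}{E}\,p\sqrt{2M}\,\sqrt{r^{-3}-r_0^{-3}}$. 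At $r=\lambda r_0$ with $\lambda\in(0,1)$ fixed this is $\approx \tfrac{\sqrt{2M}}{l}\,\lambda^2\sqrt{\lambda^{-3}-1}\;r_0^{1/2}$: the radial coordinate speed is not $O(r_0^{-1})$ but \emph{grows} like $r_0^{1/2}$ (it is only small in a window of width $O(r_0^{-2})$ about the turning point). Consequently the time spent on $[\lambda r_0,r_0]$ is not $\gtrsim r_0$: substituting $r=r_0y$ gives $\int_{\lambda r_0}^{r_0}\frac{dr}{|dr/dt|}\approx \frac{l}{\sqrt{2M}}\,r_0^{1/2}\int_\lambda^1 y^{-1/2}(1-y^3)^{-1/2}\,dy\asymp r_0^{1/2}$, so the correct exponent is $\sigma=\tfrac12$ and $R(T)\asymp T^2$, not $R(T)\asymp T$. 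This is precisely what the paper obtains: $\tau_{R/2}\sim R^{5/2}/d$, $\dot t\gtrsim d/R^2$, hence $\Delta_{R/2}t\gtrsim K_F R^{1/2}$, and $R$ is then taken of order $T^2$. The mechanism of ``trapping at infinity'' is thus not a pointwise-slow infall rate but the fact that the distance to be covered ($\sim r_0$) outruns the infall speed ($\sim r^{1/2}$).

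The lemma itself survives your approach, since any diverging power of $r_0$ suffices, so the repair is simply to redo the outer estimate with the formula above; but as written the ``key point'' you isolate contradicts the explicit expression for $(dr/dt)^2$ and a reader following your recipe would get stuck. Two minor remarks: the inner-region time is in fact also $O(r_0^{1/2})$ and no lower bound on it is needed; and your worry about self-intersections under the toric identification is unnecessary, since $t$ is strictly monotone along the geodesic, so the curve is injective regardless of its projection to $\T^2$ --- this monotonicity argument is exactly how the paper establishes the smooth embedding.
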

The proof of this may be found in the technical lemmas section of the paper.\\

We now use a modified theorem from \cite{GBS}. These modifications are due to the fact that we are not working on a globally hyperbolic manifold but due to our geodesic being a smooth embedding an analogous result holds.
\begin{Theorem}[Gaussian Beam]\label{GBT}
Let $(\mf,g)$ be a time oriented Lorentzian manifold with time function $t$, foliated by the level sets $\Sigma_\tau = \{t=\tau\}$. Furthermore, let $\gamma$ be a smooth geodesic embedding that intersects $\Sigma_0$ and $N$ a timelike, future directed vector field.\\
For any neighbourhood $\mathcal{N}$ of $\gamma$, for any $T>0$ with $\Sigma_T\cap Im(\gamma)\ne \emptyset$ there exists a Gaussian beam $u_\lambda$ of the form $u_\lambda(x) = a_{\mathcal{N}}(x)e^{i\lambda\phi(x)}$ with the following properties
\[
\norm{\Box_gu_\lambda}{L^2(\mf_{[0,T]})}\le C(T),
\]
where the constant $C(T)$ depends on $a_{\mathcal{N}},\phi$ and $T$, but not on $\lambda$,
\[
E_t[{u_\lambda}]\to \infty \hspace{10pt} \text{for } \lambda\to\infty,
\]
and
\[
u_\lambda \text{ is supported in } \mathcal{N},
\] 
provided we have on $\mf_{[0,T]}\cap J^+(\mathcal{N}\cap\Sigma_0),$
\begin{equation*}
\begin{split}
&\frac{1}{|n_{\Sigma_\tau}(t)|}\le C,\hspace{10pt} g(N,N) \le -c <0, \hspace{10pt} -g(N,n_{\Sigma_\tau}(t))\le C,\\
&{} \hspace{130pt}\text{and}\\
&|g(\gr_{n_{\Sigma_\tau}}N,n_{\Sigma_\tau})|, |g(\gr_{n_{\Sigma_\tau}} N,e_i)|, |g(\gr_{e_i} N,e_j)|\le C, \hspace{10pt} \text{for $1\le i,j \le 3$},
\end{split}
\end{equation*}
where $c$ and $C$ are positive constants and $\{n_{\Sigma_\tau},e_1,e_2,e_3\}$ is an orthonormal frame.
\end{Theorem}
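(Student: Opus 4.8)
My plan to establish Theorem \ref{GBT} is to run the standard geometric-optics (Gaussian beam) construction of Ralston in the form of \cite{GBS}, and to isolate the one place where that argument uses global hyperbolicity and replace it by the confinement of $\gamma$ provided by Lemma \ref{TNG} together with the hypotheses on $N$ and the foliation in the statement. First I would fix, for the given $T$ and $\mathcal{N}$, a slightly smaller tubular neighbourhood of $\mathrm{Im}(\gamma)$; since $\gamma$ is a smooth \emph{embedding} and, by Lemma \ref{TNG}, stays in the region $\{\frac{3}{2}r_+<r<R(T)\}$ on which $g$ and the slices $\Sigma_\tau$ are uniformly regular, Fermi-type coordinates $(s,x^1,x^2,x^3)$ with $\gamma=\{x=0\}$ are globally defined along $\mathrm{Im}(\gamma)$ and every metric quantity appearing below is bounded there. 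This is the step that substitutes for global hyperbolicity in \cite{GBS}.

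Next I would construct the phase. I would seek a complex function $\phi$ with $\mathrm{Im}\,\phi\ge 0$ vanishing to second order on $\gamma$, prescribing $\phi|_\gamma=0$, $d\phi|_\gamma=\gd^\flat$ (so the eikonal equation $g^{\mu\nu}\pa_\mu\phi\,\pa_\nu\phi=0$ holds on $\gamma$ by nullity of $\gd$), and the transverse Hessian $H(s)$ of $\phi$ along $\gamma$ as the solution of the matrix Riccati ODE obtained by differentiating the eikonal equation twice. The crucial analytic input — identical to \cite{GBS} — is that writing $H=PQ^{-1}$ with $(Q,P)$ solving the associated linear Jacobi system along $\gamma$, the imaginary part of $H(s)$ stays positive definite for all $s$ once it is so initially; this is what makes $e^{i\lambda\phi}$ a Gaussian concentrated in a tube of transverse width $\sim\lambda^{-1/2}$ about $\gamma$, and with this choice $g^{\mu\nu}\pa_\mu\phi\,\pa_\nu\phi=\O{|x|^3}$ near $\gamma$. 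Then I would solve the transport equation $2g^{\mu\nu}\pa_\mu\phi\,\pa_\nu a+(\Box_g\phi)\,a=\O{|x|}$ for a smooth amplitude $a$: restricted to $\gamma$ this is a linear first-order ODE in $s$, solved by quadrature from any nonzero datum, and the transverse derivatives of $a$ on $\gamma$ are fixed recursively.

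I would then set $u_\lambda=\chi\,a\,e^{i\lambda\phi}$, with $\chi$ a cut-off equal to $1$ near $\gamma$ and supported in $\mathcal{N}$, which makes the support statement immediate. On $\mathrm{supp}\,d\chi$ one has $\mathrm{Im}\,\phi\ge c_0>0$, so the commutator terms are $\O{e^{-c_0\lambda}}$ in every norm; on the rest, inserting the ansatz into $\Box_g$ and collecting powers of $\lambda$ gives $|\Box_g u_\lambda|\lesssim(\lambda^2|x|^3+\lambda|x|+1)\,e^{-c\lambda|x|^2}+\O{e^{-c_0\lambda}}$, and integrating over the three transverse variables and over $s\in[0,T]$ — using the uniform bounds on $g$ on $\mathcal{N}\cap\mf_{[0,T]}$ secured by Lemma \ref{TNG} — bounds $\norm{\Box_g u_\lambda}{L^2(\mf_{[0,T]})}$ by a constant depending on $a$, $\phi$, $T$ but not on $\lambda$. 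For the energy lower bound I would use the listed hypotheses on $N$ and $n_{\Sigma_\tau}$ to see that $E_\tau[u_\lambda]$ is comparable, up to a fixed constant, to the geometric energy of $u_\lambda$ on $\Sigma_\tau$ measured with $N$; evaluated on the Gaussian profile the dominant contribution comes from differentiating the rapidly oscillating factor and is of size $\lambda^2 a^2 e^{-2\lambda\,\mathrm{Im}\,\phi}$ integrated over the $\lambda^{-1/2}$-tube, which scales like $\lambda^{1/2}\to\infty$ for every $\tau$ with $\Sigma_\tau\cap\mathrm{Im}(\gamma)\ne\emptyset$.

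The routine parts are the eikonal and transport recursions and the Gaussian integrals, which go exactly as in \cite{GBS}. The point that needs genuine care — and the reason a ``modified'' theorem is needed at all — is that the original argument lives on a globally hyperbolic spacetime, whereas $\mf$ has the timelike boundary $\mi$ and the horizon $\mh$. The resolution, which I would spell out, is that the whole construction is local around $\gamma$; Lemma \ref{TNG} keeps $\gamma$, and hence a sufficiently small $\mathcal{N}$, strictly away from both $\mi$ and $\mh$ inside a region of uniform regularity; and embeddedness of $\gamma$ prevents the Fermi coordinates from degenerating. Thus the hypotheses displayed in the statement are precisely what remains of global hyperbolicity in this setting, and they suffice to carry the argument of \cite{GBS} through unchanged.
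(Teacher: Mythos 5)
Your proposal is correct and follows essentially the same route as the paper: the paper does not reprove this theorem but imports it from \cite{GBS}, observing exactly as you do that the construction is entirely local in a tubular neighbourhood of the embedded geodesic supplied by Lemma \ref{TNG}, so that the confinement away from $\mi$ and $\mh$ (together with the stated hypotheses on $N$ and the foliation) replaces global hyperbolicity. Your sketch of the eikonal/Riccati, transport, cut-off and scaling steps is a faithful rendering of the cited construction, so there is nothing further to add.
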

In our case the vector field $N$ is $\pa_t$ and the time function is simply the $t$ coordinate. With the null geodesic from lemma \ref{TNG} all the conditions can be easily seen to hold provided we bound $\mathcal{N}$ away from $\mi$.\\
We will also need the following result about $\phi$ from \cite{GBS} (2.14)
\begin{Lemma}\label{GBL0}
$\Im(\phi|_{\gamma})= \Im(\gr\phi|_{\gamma})=0$ and $\Im(\gr\gr\phi|_\gamma)$ is positive definite on a $3$-dimensional subspace transversal to $\dot{\gamma}$.
\end{Lemma}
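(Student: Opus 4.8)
The plan is to recover the three asserted properties directly from the eikonal/WKB construction of $\phi$ along $\gamma$. Since Lemma \ref{TNG} guarantees that $\gamma$ is a smooth embedding, I would first introduce Fermi-type coordinates $(s,y^1,y^2,y^3)$ in a tubular neighbourhood $\mathcal{N}$ of $\gamma$, with $s$ an affine parameter along $\gamma$ and $\gamma = \{y^1=y^2=y^3=0\}$. In these coordinates the metric and its first derivatives take a normalised form on $\gamma$ (in particular the Christoffel symbols vanish along $\gamma$), which is exactly the setting of the construction in \cite{GBS}. The phase is sought as a formal Taylor expansion in the transverse variable,
\[
\phi(s,y) = \phi_0(s) + \phi_a(s)\,y^a + \frac{1}{2} M_{ab}(s)\,y^a y^b + \O{|y|^3},
\]
with $\phi_0,\phi_a \in \C$ and $M_{ab}$ a complex symmetric matrix, the coefficients being fixed by imposing the eikonal equation $g^{\mu\nu}\pa_\mu\phi\,\pa_\nu\phi = 0$ to successive orders in $y$ along $\gamma$ (the Klein--Gordon mass term, being lower order, does not enter).

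Matching orders then produces the claimed structure. At zeroth order one obtains an ODE for $\phi_0'(s)$ whose solvability is precisely the condition that $\gd$ be null; choosing real initial data gives $\phi_0$ real, so $\Im(\phi|_\gamma)=0$. At first order one finds a linear transport ODE for $\phi_a(s)$ along $\gamma$; with real initial data $\phi_a$ remains real, and since $\pa_s\phi|_\gamma = \phi_0'$ and $\pa_a\phi|_\gamma = \phi_a$ are then real this is exactly $\Im(\gr\phi|_\gamma)=0$. At second order the eikonal equation becomes a matrix Riccati equation of the schematic form
\[
\dot M + M\,A\,M + B\,M + M\,B^{\mathsf T} + C = 0,
\]
with real coefficient matrices $A,B,C$ determined by the metric and connection along $\gamma$. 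The remaining task is to prescribe $M_{ab}(0)$ with $\Im M(0)$ positive definite and to show that $\Im M(s)$ stays positive definite for all $s$ in the relevant range.

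The main obstacle is precisely this preservation of positivity, together with the non-degeneracy needed for $M(s)$ to remain finite. I would handle both by linearising the Riccati equation: writing $M = P Q^{-1}$, where the pair $(P,Q)$ solves the associated linear Jacobi/geodesic-deviation system, the imaginary part is controlled by a conserved Wronskian-type quantity $\frac{1}{2i}\left( P^\dagger Q - Q^\dagger P\right)$ arising from the symplectic structure of that system. One shows that if $Q(0)$ is invertible and this form is positive definite at $s=0$, then $Q(s)$ never degenerates and
\[
\Im M(s) = \left(Q^{-1}\right)^\dagger \frac{1}{2i}\left( P^\dagger Q - Q^\dagger P\right) Q^{-1}
\]
remains positive definite. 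This yields positive definiteness of $\Im(\gr\gr\phi|_\gamma)=\Im M$ on the $3$-dimensional transverse subspace $\mathrm{span}\{\pa_{y^1},\pa_{y^2},\pa_{y^3}\}$, which is transversal to $\gd = \pa_s$. Since $\gamma$ is a smooth embedding and $\mathcal{N}$ is bounded away from $\mi$, all metric quantities entering $A,B,C$ are smooth and bounded on $\mathcal{N}$, so the argument of \cite{GBS} applies without change; the only modification from the globally hyperbolic case is the restriction to the tubular neighbourhood, which the embedding property of Lemma \ref{TNG} supplies.
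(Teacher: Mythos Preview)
The paper does not supply its own proof of this lemma: it is stated as a direct quotation of the property (2.14) from \cite{GBS}, with no argument given beyond the citation. Your proposal is a correct sketch of precisely the standard WKB/Gaussian beam phase construction that underlies that reference (Fermi-type coordinates, Taylor expansion of $\phi$, eikonal hierarchy, Riccati equation linearised to a Jacobi system, symplectic Wronskian controlling $\Im M$), so in substance you are reconstructing the proof the paper defers to \cite{GBS} rather than offering an alternative.
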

Combining the fact that $a_{\mathcal{N}}$ is independent of $\lambda$ with \eqref{GBL0} we have that the $L^2$ norm of our Gaussian beam is independent of $\lambda$. We collect this observation in a lemma,
\begin{Lemma}
	Let $u_\lambda$ be the function constructed in \ref{GBT} then there exists a constant $C(T)>0$ independent of $\lambda$ such that the following bound holds:
	\begin{equation}
	\norm{u_\lambda}{L^2(\mf_{[0,T]})} \le C(T).
	\end{equation} 
\end{Lemma}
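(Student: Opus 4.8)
The plan is to exploit the elementary observation that the imaginary part of the phase $\phi$ is non-negative on the support of the amplitude $a_{\mathcal{N}}$, so that the oscillatory factor in $u_\lambda = a_{\mathcal{N}}e^{i\lambda\phi}$ never has modulus exceeding $1$; the whole estimate then reduces to the trivial bound for a fixed smooth function integrated over a set of finite volume. Concretely, the first step is to record the pointwise identity
\[
\abs{u_\lambda(x)} = \abs{a_{\mathcal{N}}(x)}\, e^{-\lambda\,\Im\phi(x)}, \qquad x\in\mf_{[0,T]},\ \lambda\ge 0,
\]
in which $\abs{a_{\mathcal{N}}}$ is a fixed function independent of $\lambda$ (this holds whether or not the transport amplitude $a_{\mathcal{N}}$ is real).

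The substantive point is to verify that $\Im\phi\ge 0$ on $\mathrm{supp}\,a_{\mathcal{N}}$, and for this I would use Lemma \ref{GBL0}. Fixing $x$ in a thin tube about $\gamma$ and choosing $\gamma(s)$ a nearest point of $\gamma$ — so that $x-\gamma(s)$ is transversal to $\dot\gamma(s)$ once the tube is thin enough — a Taylor expansion of $\Im\phi$ about $\gamma(s)$ has vanishing zeroth- and first-order terms because $\Im(\phi|_\gamma)=\Im(\gr\phi|_\gamma)=0$, leaving
\[
\Im\phi(x) = \hf\,\Im\!\big(\gr\gr\phi|_{\gamma(s)}\big)\big(x-\gamma(s),\,x-\gamma(s)\big) + \O{\abs{x-\gamma(s)}^3}.
\]
Differentiating $\Im(\phi|_\gamma)\equiv 0$ twice and using $\Im(\gr\phi|_\gamma)=0$ shows $\Im(\gr\gr\phi|_\gamma)(\dot\gamma,\dot\gamma)=0$, so the phase Hessian on $\gamma$ is positive semidefinite with kernel exactly the span of $\dot\gamma$; combined with its positive-definiteness on the transversal $3$-plane from Lemma \ref{GBL0}, this gives $\Im\phi(x)\ge c\abs{x-\gamma(s)}^2\ge 0$ for $x$ in a sufficiently thin tube, which one may take to contain $\mathrm{supp}\,a_{\mathcal{N}}$ (the construction behind Theorem \ref{GBT} anyway produces an amplitude supported in such a tube).

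Granting $\Im\phi\ge 0$ on $\mathrm{supp}\,a_{\mathcal{N}}$, one has $e^{-\lambda\,\Im\phi}\le 1$ there for all $\lambda\ge 0$, hence $\abs{u_\lambda}\le\abs{a_{\mathcal{N}}}$ pointwise, uniformly in $\lambda$. Finally I would note that $\mathcal{N}$ is a bounded neighbourhood of the compact geodesic segment $Im(\gamma)\subset\mf_{[0,T]}\cap\{\tfrac{3}{2} r_+<r<R(T)\}$ of Lemma \ref{TNG}, so $a_{\mathcal{N}}$ is smooth with compact support in $\mf_{[0,T]}$ and in particular $\int_{\mf_{[0,T]}}\abs{a_{\mathcal{N}}}^2\, d^4Vol<\infty$. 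Therefore
\[
\norm{u_\lambda}{L^2(\mf_{[0,T]})}^2 \le \int_{\mf_{[0,T]}}\abs{a_{\mathcal{N}}}^2\, d^4Vol =: C(T)^2,
\]
which is finite and independent of $\lambda$. The argument presents no real difficulty; the one place demanding care is the sign $\Im\phi\ge 0$, where one must use that the phase Hessian on $\gamma$ is positive semidefinite with kernel precisely $\dot\gamma$, rather than only positive definite on a transversal complement.
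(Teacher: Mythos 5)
Your strategy is exactly the one the paper has in mind: the authors record this lemma as an immediate observation, "combining the fact that $a_{\mathcal{N}}$ is independent of $\lambda$ with Lemma \ref{GBL0}", i.e.\ precisely your pointwise bound $\abs{u_\lambda}=\abs{a_{\mathcal{N}}}e^{-\lambda\Im\phi}\le\abs{a_{\mathcal{N}}}$ on the compact, finite-volume support of the beam. So you are filling in details the paper leaves implicit, and the skeleton (nonnegativity of $\Im\phi$ on the tube, plus $\lambda$-independence and compact support of $a_{\mathcal{N}}$, plus finiteness of the volume of the region $\{0\le t\le T,\ \tfrac32 r_+<r<R(T)\}$ from Lemma \ref{TNG}) is the right one.

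There is, however, one inference that does not hold as written. From $\Im(\gr\gr\phi|_\gamma)(\dot\gamma,\dot\gamma)=0$ together with positive definiteness on a $3$-plane transversal to $\dot\gamma$ you cannot conclude that the imaginary Hessian is positive semidefinite with kernel $\R\dot\gamma$: the mixed entries $\Im(\gr\gr\phi|_\gamma)(\dot\gamma,w)$ could be nonzero, and then $\Im(\gr\gr\phi|_\gamma)(\dot\gamma+tw,\dot\gamma+tw)$ is negative for small $t$ of a suitable sign (a two-variable model is the form $2\xi\eta+\eta^2$ with $\dot\gamma=\p_\xi$, which vanishes on $\p_\xi$ and is positive definite on the transversal line $\xi=0$, yet is indefinite). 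The repair uses the stronger information already contained in Lemma \ref{GBL0}: differentiate along $\gamma$ the identity $\Im\left(\p_\mu\phi\right)(\gamma(s))\equiv 0$ for the full gradient, rather than the scalar identity $\Im\phi(\gamma(s))\equiv 0$; since $\Im(\gr\phi|_\gamma)=0$ also kills the Christoffel corrections, this yields $\Im(\gr\gr\phi|_\gamma)(\dot\gamma,\cdot)=0$, i.e.\ $\dot\gamma$ lies in the null space of the imaginary Hessian, not merely in its isotropic cone. Combined with transversal positive definiteness this does give positive semidefiniteness with kernel exactly $\R\dot\gamma$, hence positive definiteness on \emph{any} complement of $\dot\gamma$ — in particular on the orthogonal complement with respect to the auxiliary Riemannian metric you implicitly use to define the "nearest point" (note the Lorentzian metric itself does not furnish one). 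With $c$ taken uniform along the compact beam segment, your bound $\Im\phi\ge c\abs{x-\gamma(s)}^2\ge 0$ on a sufficiently thin tube, and the rest of the argument, then go through unchanged.
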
 
\begin{Lemma}\label{GBL1}
For all $\epsilon>0$ there exists a solution $v$ of \eqref{KGE} for all boundary data types, initial data supported away from the horizon, with $E_0[v]=1$ and a Gaussian beam $\tilde{u_\lambda}$ such that
\[
|E_t[v]-E_t[{\tilde{u_\lambda}}]| < \epsilon, \hspace{10pt} \forall\hspace{2pt} 0\le t\le T.
\]
\begin{proof}
Firstly construct $u_\lambda$ from theorem \ref{GBT} using the geodesic in lemma \ref{TNG} ensuring that $\mathcal{N}$ is bounded away from $\mi$ and $\mathcal{H}$, and then define 
\begin{equation*}
\tilde{u}_\lambda := \frac{u_\lambda}{\sqrt{E_0[{u_\lambda}]}},
\end{equation*}
so we have (by the triangle inequality)
\[
\norm{\Box_g\tilde{u}_\lambda+\frac{\alpha}{l^2}\tilde{u}_\lambda}{L^2(\mf_{[0,T]})} \to 0,
\] 
as $\lambda \to \infty$.\\
Now set $v$ to be the solution to
\begin{equation*}
\begin{split}
\Box_g v + \frac{\alpha}{l^2}v &= 0,\\
v|_{\Sigma_0} &= \tilde{u}_\lambda|_{\Sigma_0},\\
n_{\Sigma_0}v|_{\Sigma_0} &= n_{\Sigma_0}\tilde{u}_\lambda|_{\Sigma_0},
\end{split}
\end{equation*}
with Dirichlet, Robin or Neumann boundary conditions.\\
We now apply the same energy estimates as in section 4 to ${\tilde{u}_\lambda}$, (we remark that as $\tilde{u}_\lambda$ doesn't solve \eqref{KGE} this introduces an inhomogeneity in the bulk integral) yielding the inequality
\[
E_t[{\tilde{u}_\lambda}] \le E_0[{\tilde{u}_\lambda}] +\norm{\left( \Box_g\tilde{u}_\lambda+\frac{\alpha}{l^2}\tilde{u}_\lambda\right)\gr_t{\tilde{u}_\lambda} }{L^2(\mf_{[0,T]})}^2 ,
\]
after a simple application of Cauchy-Schwartz we find that
\[
E_t[{\tilde{u}_\lambda}] \le C(T)\left(E_0[{\tilde{u}_\lambda}] +T\cdot\sup_{t\in [0,T]}\left( E_t[\tilde{u}_\lambda]\right) \norm{\Box_g\tilde{u}_\lambda+\frac{\alpha}{l^2}\tilde{u}_\lambda}{L^2(\mf_{[0,T]})} \right),
\]
for all $0\le t \le T$. (Note that $\mathcal{N}$ being bounded away from infinity controls the flux there).\\
Taking supremums on both sides and absorbing an application of Young's inequality we get 
\[
E_t[{\tilde{u}_\lambda}] \le C(T)\left(E_0[{\tilde{u}_\lambda}] +\norm{\Box_g\tilde{u}_\lambda+\frac{\alpha}{l^2}\tilde{u}_\lambda}{L^2(\mf_{[0,T]})} \right).
\] 
Applying this inequality to the difference $v-\tilde{u_\lambda}$, gives us the result
\begin{equation*}
|E_t[{v-\tilde{u_\lambda}}]| \le C(T) \norm{\Box_g\tilde{u}_\lambda+\frac{\alpha}{l^2}\tilde{u}_\lambda}{L^2(\mf_{[0,T]})} \hspace{10pt} \forall\hspace{2pt} 0\le t\le T,
\end{equation*} 
as they agree on $\Sigma_0$. So for fixed $\epsilon$ we simply choose $\lambda_0$ large enough and set $\tilde{u} := \tilde{u}_{\lambda_0}$ and $ v = v_{\lambda_0}$.
\end{proof}
\end{Lemma}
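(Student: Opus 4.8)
The plan is to construct a normalised Gaussian beam $\tilde u_\lambda$ that approximately solves \eqref{KGE}, launch the genuine solution $v$ from the beam's own Cauchy data, and then control the energy discrepancy entirely through a \emph{forced} version of the Section 4 energy estimate applied to the difference $w:=v-\tilde u_\lambda$. The point is that $w$ solves \eqref{KGE} with vanishing initial data and with a bulk forcing equal to minus the beam's inhomogeneity, and the latter can be driven to zero in $L^2$ by taking $\lambda$ large.

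First I would apply Theorem \ref{GBT} to the trapped null geodesic $\gamma$ of Lemma \ref{TNG}, choosing the tubular neighbourhood $\mathcal{N}$ so that it is bounded away from both $\mh$ and $\mi$. With $N=\pa_t$ and the $t$-coordinate as time function, the frame hypotheses of Theorem \ref{GBT} hold on $\mathcal{N}$, and keeping $\mathcal{N}$ away from the conformal boundary ensures no flux is generated at $\mi$. This produces $u_\lambda=a_{\mathcal{N}}e^{i\lambda\phi}$ with $\norm{\Box_g u_\lambda}{L^2(\mf_{[0,T]})}\le C(T)$ independent of $\lambda$, while $E_0[u_\lambda]\to\infty$. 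Setting $\tilde u_\lambda:=u_\lambda/\sqrt{E_0[u_\lambda]}$ normalises $E_0[\tilde u_\lambda]=1$ and forces $\norm{\Box_g\tilde u_\lambda+\frac{\alpha}{l^2}\tilde u_\lambda}{L^2(\mf_{[0,T]})}\to 0$ as $\lambda\to\infty$. I then define $v$ to be the solution of \eqref{KGE} with the prescribed (Dirichlet, Neumann or Robin) boundary conditions whose Cauchy data on $\Sigma_0$ coincides with that of $\tilde u_\lambda$; existence is guaranteed by the well-posedness theorem, the data is supported away from the horizon by construction, and since the data agree we automatically have $E_0[v]=E_0[\tilde u_\lambda]=1$.

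The heart of the argument is the forced energy inequality. Re-running the $T=\pa_t$ multiplier computation of Section 4 on a function $\psi$ that does not solve \eqref{KGE} leaves a bulk term $\int(\Box_g\psi+\frac{\alpha}{l^2}\psi)\,\gr_t\psi$; the horizon flux retains its good sign and the flux at $\mi$ vanishes because $\psi=\tilde u_\lambda$ is supported away from $\mi$. Bounding this term by Cauchy--Schwarz and then absorbing $\sup_{[0,T]}E_t[\tilde u_\lambda]$ from the right by Young's inequality closes the estimate and yields $\sup_{[0,T]}E_t[\tilde u_\lambda]\le C(T)\big(E_0[\tilde u_\lambda]+\norm{\Box_g\tilde u_\lambda+\frac{\alpha}{l^2}\tilde u_\lambda}{L^2(\mf_{[0,T]})}\big)$. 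Applying the identical inequality to $w=v-\tilde u_\lambda$, which has $E_0[w]=0$ and forcing $-(\Box_g\tilde u_\lambda+\frac{\alpha}{l^2}\tilde u_\lambda)$, gives $\sup_{[0,T]}E_t[w]\le C(T)\norm{\Box_g\tilde u_\lambda+\frac{\alpha}{l^2}\tilde u_\lambda}{L^2(\mf_{[0,T]})}\to 0$.

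Finally I would pass from smallness of $E_t[w]$ to smallness of $|E_t[v]-E_t[\tilde u_\lambda]|$. Because the renormalised energy is a positive quadratic form (the Robin boundary term is nonnegative for $\beta\ge0$), $\sqrt{E_t[\cdot]}$ is a seminorm, so the reverse triangle inequality gives $|\sqrt{E_t[v]}-\sqrt{E_t[\tilde u_\lambda]}|\le\sqrt{E_t[w]}$; combined with the uniform bound $E_t[v],E_t[\tilde u_\lambda]\le C(T)$ on $[0,T]$ this yields $|E_t[v]-E_t[\tilde u_\lambda]|\le C(T)\sqrt{E_t[w]}$, which tends to $0$ uniformly in $t\in[0,T]$. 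Choosing $\lambda=\lambda_0$ large enough makes this $<\epsilon$ and relabelling $\tilde u:=\tilde u_{\lambda_0}$, $v:=v_{\lambda_0}$ completes the proof. I expect the main obstacle to be the uniformity in $\lambda$ of the forced energy estimate: one must check that the Young-inequality absorption of $\sup_{[0,T]}E_t[\tilde u_\lambda]$ can be carried out with a constant depending only on $T$ and not on $\lambda$, and that the $\mi$-flux genuinely vanishes for the truncated beam --- both of which rest on holding $\mathcal{N}$ fixed and compactly away from the conformal boundary.
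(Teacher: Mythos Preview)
Your proposal is correct and follows essentially the same route as the paper: normalise the Gaussian beam, launch $v$ from its Cauchy data, run the forced $\partial_t$-energy estimate, absorb via Young's inequality, and apply the result to $w=v-\tilde u_\lambda$. The one place you go further than the paper is in explicitly bridging from smallness of $E_t[w]$ to smallness of $|E_t[v]-E_t[\tilde u_\lambda]|$ via the reverse triangle inequality for $\sqrt{E_t[\cdot]}$; the paper leaves this implicit, so your version is, if anything, slightly more careful.
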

We now invoke Theorem 2.36 from \cite{GBS} which tells us that the Gaussian beam energy is localised around geodesic energy. That is
\begin{Lemma}\label{GBL2}
For all $\epsilon >0$ there exists a neighbourhood $\mathcal{N}_0$ of $\mathcal{N}$ such that
\begin{equation*}
|E_t[{\tilde{u}_{\lambda}}|_{\mathcal{N}_0}] -\left( -g(T,\dot\gamma)|_{Im(\gamma)\cap\Sigma_t}\right) |<\epsilon,
\end{equation*}
for all $0\le t \le T$.
\end{Lemma}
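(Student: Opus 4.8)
\emph{Proof plan.} My plan is to obtain Lemma \ref{GBL2} as an instance of Theorem 2.36 of \cite{GBS}, so that the work lies in checking that theorem's hypotheses in the present setting and in recording why the failure of global hyperbolicity of $\mf$ is harmless, rather than in redoing the underlying geometric-optics computation. First I would observe that, by Lemma \ref{TNG}, the geodesic $\gamma$ -- and with it the tube $\mathcal{N}$, which we are free to shrink -- is contained in $\{\frac{3}{2}r_+ < r < R(T)\}\cap\mf_{[0,T]}$, hence in a compact region bounded away from both $\mi$ and $\mh$, so that $\mathcal{N}_0$ too may be taken bounded away from $\mi$ and $\mh$. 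On such a region $\partial_t$ is uniformly timelike, since $g(\partial_t,\partial_t)=-(r^2/l^2-2M/r)$ is negative and bounded away from zero for $r\ge\frac{3}{2}r_+$; the twisting function $f$ and the potential $V$ are smooth and bounded; the geometric quantities entering the hypotheses of Theorem \ref{GBT} (taking $N=\partial_t$ and $t$ as time function) are bounded, as remarked there; and the region sits inside a globally hyperbolic sub-region foliated by the $\Sigma_\tau$, so that the arguments of \cite{GBS} apply there with no boundary contribution from $\mi$, the beam being compactly supported in $r$.

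Next I would extract the leading order of the localised energy. Writing $u_\lambda=a_{\mathcal{N}}e^{i\lambda\phi}$ one has $\gr_\mu u_\lambda=i\lambda(\gr_\mu\phi)u_\lambda+O(1)$, so on $\mathcal{N}_0$ the twisted density $\tilde{J}^T_\mu[u_\lambda]n^\mu$ agrees to leading order with $\lambda^2|a_{\mathcal{N}}|^2\,\Re\!\big(\gr_\mu\phi\,\overline{\gr_\nu\phi}\big)T^\mu n^\nu$, the $u^2$-, $V$- and first-order amplitude terms being of lower order in $\lambda$. By Lemma \ref{GBL0}, $\Im(\gr\phi)$ and the eikonal quantity $g^{\mu\nu}\gr_\mu\phi\,\gr_\nu\phi$ vanish along $\gamma$ -- hence are $O(\lambda^{-1})$ on the Gaussian support of the beam -- while $\gr^\mu\phi=\dot\gamma^\mu$ there, so the $-\hf g_{\mu\nu}(\cdots)$ part of $\temt$ contributes only at lower order and
\[
E_t[u_\lambda|_{\mathcal{N}_0}]=\lambda^2\big(-g(T,\dot\gamma)|_{Im(\gamma)\cap\Sigma_t}\big)\int_{\Sigma_t\cap\mathcal{N}_0}|a_{\mathcal{N}}|^2\big(-g(\dot\gamma,n_{\Sigma_t})\big)\,dS_{\Sigma_t}+o(\lambda^2),
\]
where $-g(T,\dot\gamma)$, which is constant along $\gamma$ because $T$ is Killing, has been pulled out of the integral.

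It then remains to control the flux integral. The $O(\lambda)$ term in $\Box_g u_\lambda$ enforces the transport equation $2\gr^\mu\phi\,\gr_\mu a_{\mathcal{N}}+(\Box_g\phi)a_{\mathcal{N}}=0$ along $\gamma$, i.e.\ $\gr_\mu(|a_{\mathcal{N}}|^2\gr^\mu\phi)=0$ there, so $|a_{\mathcal{N}}|^2\gr^\mu\phi$ is a divergence-free current near $\gamma$ whose flux through $\Sigma_t$ is, to leading order, independent of $t$. Combined with the transverse Gaussian concentration of $|a_{\mathcal{N}}|^2$ (Lemma \ref{GBL0}: $\Im(\gr\gr\phi|_\gamma)$ positive definite on a $3$-plane transversal to $\dot\gamma$), this lets one choose, for any $\epsilon>0$, a neighbourhood $\mathcal{N}_0\supset\mathcal{N}$ capturing all but $\epsilon$ of that flux uniformly for $t\in[0,T]$ and $\lambda$ large. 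Dividing by $E_0[u_\lambda]$ -- equal, by the same computation, to $\lambda^2(-g(T,\dot\gamma)|_{\Sigma_0})$ times the same conserved flux, up to $o(\lambda^2)$ -- and using the normalisation $E_0[\tilde{u}_\lambda]=1$ (with $\gamma$ parametrised so that $-g(T,\dot\gamma)\equiv 1$ along it), one concludes that $E_t[\tilde{u}_\lambda|_{\mathcal{N}_0}]$ lies within $\epsilon$ of $-g(T,\dot\gamma)|_{Im(\gamma)\cap\Sigma_t}$ for all $0\le t\le T$ -- first for $\lambda$ large, and then for the fixed beam $\tilde{u}_\lambda$.

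The main obstacle is not any single estimate but assembling uniform-in-$t$ control of all the $o(\lambda^2)$ errors in the WKB expansion together with the stationary-phase and Gaussian-integral estimates for the flux -- precisely the machinery of \cite{GBS}; the one genuinely new input here is the observation, furnished by Lemma \ref{TNG}, that global hyperbolicity can be dispensed with because $\gamma$, and with it the whole construction, stays strictly inside the exterior region away from both $\mi$ and $\mh$.
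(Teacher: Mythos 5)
Your proposal is correct and follows essentially the same route as the paper, which proves this lemma simply by invoking Theorem 2.36 of \cite{GBS} with $N=\partial_t$, the coordinate time function, and the beam supported in a tube bounded away from $\mi$ and $\mh$ (as guaranteed by Lemma \ref{TNG} and the remarks after Theorem \ref{GBT}). Your additional sketch of the WKB/transport-equation mechanics behind that theorem is accurate but is material the paper leaves entirely to the citation.
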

\begin{Lemma}\label{GBL3}
Let $T >0$ then for all $\epsilon>0$ there exists a solution $u$ of \eqref{KGE} whose initial data is supported away from the horizon, that can satisfy any choice of the discussed boundary conditions and with $E_0[u]=1$ such that
\begin{equation*}
E_t[u] \ge 1-\epsilon,
\end{equation*}
for all $0\le t\le T.$
\end{Lemma}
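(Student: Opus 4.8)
The plan is to assemble Lemma \ref{GBL3} directly from the two approximation results that precede it, namely Lemma \ref{GBL1} (the true solution $v$ tracks the normalised Gaussian beam $\tilde u_\lambda$ in energy, uniformly on $[0,T]$) and Lemma \ref{GBL2} (the Gaussian beam's energy, restricted to a small neighbourhood $\mathcal{N}_0$, is close to the geodesic energy $-g(T,\dot\gamma)$ along $\gamma$). The key extra ingredient is a conservation-type statement for the quantity $-g(T,\dot\gamma)$ along the null geodesic $\gamma$ of Lemma \ref{TNG}: since $T=\pa_t$ is Killing, $g(T,\dot\gamma)$ is constant along $\gamma$, so $-g(T,\dot\gamma)|_{Im(\gamma)\cap\Sigma_t}$ takes one fixed positive value $E_\gamma$ for all $t\in[0,T]$. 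Thus the geodesic contributes a $t$-independent amount of energy.

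First I would fix $T>0$ and an error tolerance $\epsilon>0$, and choose intermediate tolerances $\epsilon_1,\epsilon_2,\epsilon_3$ (each a small multiple of $\epsilon$) to be calibrated at the end. Using Lemma \ref{GBL1} I obtain, for a suitably large $\lambda$, a genuine solution $v$ with $E_0[v]=1$ and a normalised Gaussian beam $\tilde u_\lambda$ with $|E_t[v]-E_t[\tilde u_\lambda]|<\epsilon_1$ for all $0\le t\le T$. Applying Lemma \ref{GBL2} (shrinking the neighbourhood if necessary, and noting the monotonicity $E_t[\tilde u_\lambda|_{\mathcal N_0}]\le E_t[\tilde u_\lambda]$ since the energy density is a nonnegative integrand) gives $E_t[\tilde u_\lambda]\ge E_t[\tilde u_\lambda|_{\mathcal N_0}] > E_\gamma - \epsilon_2$ for all $t$. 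Combining, $E_t[v] > E_\gamma - \epsilon_1 - \epsilon_2$ for all $0\le t\le T$. It remains to pin down $E_\gamma$: evaluating the same chain of inequalities at $t=0$ and using $E_0[v]=1$ gives $E_\gamma > 1 - \epsilon_1 - \epsilon_2$ as well, so $E_t[v] > 1 - 2(\epsilon_1+\epsilon_2)$; choosing $\epsilon_1+\epsilon_2 < \epsilon/2$ yields $E_t[v]\ge 1-\epsilon$ and we set $u=v$.

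The one genuinely delicate point is the uniformity in $t$ of all the estimates and the fact that the constants in Lemmas \ref{GBL1} and \ref{GBL2}, while depending on $T$, do not degenerate: Lemma \ref{GBL1} already supplies an estimate uniform on $[0,T]$ because the sup over $t\in[0,T]$ was absorbed there, and Lemma \ref{GBL2} is likewise stated uniformly on $[0,T]$. The only thing to verify is that the neighbourhood $\mathcal{N}$ (hence $\mathcal{N}_0$) used to construct $u_\lambda$ can be kept bounded away from both $\mi$ and $\mh$ for the whole time interval; this is exactly guaranteed by Lemma \ref{TNG}, which confines $Im(\gamma)$ to $\tfrac32 r_+<r<R(T)$, so a tubular neighbourhood of the compact arc $\gamma([0,T])$ of fixed small radius stays in the region $\{r_+ + \delta < r < R(T)+1\}$ where all hypotheses of Theorem \ref{GBT} hold. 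So the main obstacle is bookkeeping — ensuring that the choice of $\lambda$ (and neighbourhood) can be made after $T$ and $\epsilon$ are fixed, rather than needing $\lambda$ to grow with $t$ — and this is handled because every cited estimate is already uniform on the compact interval $[0,T]$. I would close the proof by remarking that the boundary-condition type never entered: $v$ can be taken to solve the Dirichlet, Neumann, or Robin problem, since its data is compactly supported away from $\mi$ and the energy estimates of Section 4 apply uniformly in all three cases.
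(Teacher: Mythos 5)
Your proposal is correct and is essentially the paper's argument: both proofs combine Lemma \ref{GBL1} and Lemma \ref{GBL2} by the triangle inequality together with the fact that $-g(T,\dot\gamma)$ is constant along $\gamma$ because $\partial_t$ is Killing, with the neighbourhood kept away from $\mh$ and $\mi$ via Lemma \ref{TNG}. The one (harmless) deviation is how the constant is identified: the paper simply normalises $-g(T,\dot\gamma)=1$ when constructing $\gamma$ in Lemma \ref{TNG}, whereas you recover $E_\gamma\approx 1$ from $E_0[v]=1$ at $t=0$ --- which also works, but note that this step needs the \emph{reverse} of your displayed chain (available because Lemmas \ref{GBL1} and \ref{GBL2} are two-sided and $\tilde u_\lambda$ is supported in $\mathcal{N}\subset\mathcal{N}_0$, so restricting to $\mathcal{N}_0$ loses nothing), not literally ``the same chain,'' which at $t=0$ only yields the upper bound on $E_\gamma$.
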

\begin{proof}
As $T$ is Killing we have that $-g(T,\dot\gamma)|_{Im(\gamma)\cap\Sigma_\tau}$ is constant. We may choose this constant to be $1$ when we solve for $\gamma$ (see the proof of lemma \ref{TNG} in the technical lemmas section for details). Applying the triangle inequality to the results in lemmas \ref{GBL1} and \ref{GBL2} yields the result.
\end{proof} 
\begin{proof}[Proof of Theorem \ref{thm4}]
We proceed by contradiction and assume there exists a constant $C$ independent of $T$ and $u$ such that 
\begin{equation}\label{degee}
\int_{0}^{T} E_t[u]dt \le C E_0[u],
\end{equation}
as the energy is decreasing we have
\begin{equation*}
\int_{0}^{T} E_T[u]dt \le \int_{0}^{T} E_t[u]dt \le C E_0[u],
\end{equation*}
this gives us the estimate
\begin{equation*}
TE_T[u] \le CE_0[u],
\end{equation*}
now let $T=2C$  and construct $u$ from lemma \ref{GBL3} where we choose $\epsilon= \f{1}{4}$ we then have
\begin{equation*}
\f{3}{2}C \le C,
\end{equation*}
which is clearly a contradiction. Now that we have established that \eqref{degee} cannot hold we may extend the result to the non-degenerate energy.\\
If $u$ is supported away from the horizon then there exists $C>0$ such that
\[
\int_{\Sigma_0}\mathcal{E}[u] \le C E_0[u],
\]
and clearly
\[
\int_0^T E_t[u]dt\le C\int_{\mf_{[0,T]}} \mathcal{E}[u]  d\eta.
\]
Assume by contradiction that the statement of Theorem \ref{thm4} holds true. Then the above inequalities entail equation \eqref{degee} contradicting the first part of the proof.
\end{proof}

\section{Technical Lemmas}
\subsection{Proof of Lemma \ref{HEL}}
\begin{proof}
Firstly define a cut-off function

\[ 
\chi(r) = \left\{ 
  \begin{array}{l l}
    0 & \quad \text{if $r \le R$},\\
    1 & \quad \text{if $r \ge 2R$}, \\
    \text{Smooth}& \quad \text{if $r\in [R,2R].$} 
  \end{array} \right.
\]
With the property $\chi'(r) \le \frac{C}{r}$ for some $C>0$ and monotone. \\
This can be achieved by a suitable bump function.\\
Now we write 
\[
  \|u\|_{L^2} =\|\chi u + (1-\chi)u\|_{L^2},
\]
After applying the triangle inequality we estimate the terms separately
\[
\begin{split}
\|(1-\chi) u \|^2_{L^2} = \int_{r_+}^\infty (1-\chi)^2 u^2 dr \le \int_{r_+}^{2R}r \cdot \frac{u^2}{r}dr \le 2R \int_{r_+}^\infty\frac{u^2}{r}dr. 
\end{split}
\]
And for the other term
\[
\begin{split}
\|\cu\|^2_{L^2(r_+,\infty)}&=\int_{r_+}^\infty (\cu)^2dr  = \int_{R}^\infty (\cu)^2 dr = \int_R^\infty \left( \cu r^{\frac{3}{2}-\kappa}\right) ^2\pa_r \left( \frac{r^{-2+2\kappa}}{2\kappa-2}\right)  dr \\
&= \underbrace{\left[ \frac{1}{2\kappa-2}(\cu)^2r \right] _R^\infty}_{=0} + \frac{2}{2-2\kappa}\int_R^\infty \cu (\grt_r\cu)rdr \le \frac{1}{1-\kappa}\|\cu\|_{L^2(r_+,\infty)}\|r\grt_r\cu\|_{L^2(R,\infty)}.
\end{split}
\]
Now looking at
\[
\begin{split}
\|r\grt_r\cu\|_{L^2(R,\infty)}^2 &= \int_{R}^\infty r^2(\grt_r\cu)^2 dr \\&\le C \int_{R}^\infty r^2(\chi \grt_ru)^2 + r^2(u \pa_r\chi)^2 dr \\&\le C\left( \int_{R}^\infty (\grt_ru)^2 r^2 dr + \int_{r_+}^\infty \frac{u^2}{r}dr\right), 
\end{split}
\]
Combining all these estimates yields \eqref{HE}.
\end{proof}
\subsection{Proof of Lemma \ref{DMEC}}
\begin{proof}\ \\
We recall that
\begin{equation*}
\gr^\mu\left( \temt_{\mu\nu}X^\nu\right)  = {}^X\pi_{\mu\nu}\temt^{\mu\nu} + X^\nu\tilde{S}_\nu.
\end{equation*}
We start with the latter term. A quick computation shows us
\begin{equation*}
\tilde{S}_r = \frac{1}{r}\left((3-\kappa)V(r)u^2 + \frac{(3-2\kappa)}{2}\grt_\mu u\grt^\mu u\right), 
\end{equation*}
and thus
\begin{equation*}
X^\nu\tilde{S}_\nu =(3-\kappa)V(r)u^2 + \frac{(3-2\kappa)}{2}\grt_\mu u\grt^\mu u
\end{equation*}
We now deal with the deformation tensor term. We can factor the metric out of the deformation tensor 
\begin{equation*}
{}^X\pi = g - \frac{3M}{r}dt^2 - \frac{8l^2M}{r^3} dtdr - \frac{l^2(5l^2M+r^3)}{r^5}dr^2.
\end{equation*}
Contracting $g$ into $\temt$ we get
\begin{equation*}
g_{\mu\nu}\temt^{\mu\nu} = - \grt_\mu u\grt^\mu u-2V(r)u^2,
\end{equation*}
then combining with the $\tilde{S}$ terms we get
\begin{equation*}
g_{\mu\nu}\temt^{\mu\nu} + X^\nu\tilde{S}_\nu = (1-\kappa)V(r)u^2 + \frac{(1-2\kappa)}{2}g^{\mu\nu}\grt_\mu u\grt_\nu u
\end{equation*}
Contracting the remaining deformation terms we get 
\begin{equation*}
\frac{6 l^2 M}{r^3}\grt_ru\gr_tu-\frac{l^2 \left(8 l^2 M+r^3\right)}{2 r^5}(\gr_tu)^2-\frac{ 4 l^2 M+r^3}{2
   l^2 r}(\grt_ru)^2+\frac{M(3-2\kappa)^2}{4r^3}u^2+\hf|\su|^2.
\end{equation*}
So for the first term we conclude
\begin{equation*}
\begin{split}
\gr^\mu\left( \temt_{\mu\nu}X^\nu\right)  &= -\frac{(2 \kappa -3)^3 M}{4 r^3}u^2+\left( \frac{(2 \kappa -3) M}{r}-\frac{\kappa  r^2}{l^2}\right) (\grt_r u)^2\\
&+ (1-\kappa)|\su|^2 -\frac{4 (\kappa -2) l^2 M}{r^3}\grt_r u\gr_tu +\left(\frac{(\kappa -1) l^2}{r^2} +\frac{(2 \kappa -5) l^4 M}{r^5} \right)(\gr_tu)^2 
  \end{split}
\end{equation*}
Now we compute the divergence of the second term. To do this we write
\begin{equation*}
\gr_\mu u = \left( \grt_\mu u - u\grt_\mu 1\right), 
\end{equation*}
then we can compute
\begin{equation*}
\begin{split}
\gr_\mu \left(w_1(r) u \grt^\mu u \right) &= \gr_\mu(w_1u)\grt^\mu u + w_1u \gr_\mu\grt^\mu u,
\end{split}
\end{equation*}
now we explore part of the second term
\begin{equation*}
\begin{split}
\gr_\mu\grt^\mu u  & = \Box_g u + \gr_\mu\left(  u\grt^\mu 1\right) \\
&=\Box_g u+ u\gr_\mu\grt^\mu1 + \gr_\mu u \grt^\mu 1\\
&= \Box_g u + u \gr_\mu \left( g^{\mu\nu}\grt_\nu 1\right)  + \gr_\mu u g^{\mu\nu}\grt_\nu 1 \\
& = \Box_g u + u\gr_\mu\left( \left( \frac{3}{2}-\kappa\right) \frac{g^{\mu r}}{r} \right) + \left( \frac{3}{2}-\kappa\right)\frac{g^{\mu r}}{r}\gr_\mu u \\
&= \Box_g u + \left( \frac{3}{2}-\kappa\right)u\gr_\mu \left( \frac{g^{\mu r}}{r}\right)  + \left( \frac{3}{2}-\kappa\right)\frac{g^{\mu r}}{r}\left( \grt_\mu u - u\grt_\mu 1\right)  \\
&= \Box_g u   + \left( \frac{3}{2}-\kappa\right)u\gr_\mu\left( \frac{g^{\mu r}}{r} \right) - \left( \frac{3}{2}-\kappa\right)\frac{u}{r}g^{\mu r}\grt_\mu 1  + \frac{\left( \frac{3}{2}-\kappa\right)}{r} \grt^ru\\
&= \Box_g u  + \left( \frac{3}{2}-\kappa\right)u\frac{1}{r^2}\gr_r\left( r^2g^{rr} \right) - \left( \frac{3}{2}-\kappa\right)^2\frac{g^{rr}}{r^2}u  + \frac{\left( \frac{3}{2}-\kappa\right)}{r} \grt^ru \\
&= -\frac{\alpha}{l^2}u + \frac{3\left( \frac{3}{2}-\kappa\right)}{l^2} u + \frac{2M\left( \frac{3}{2}-\kappa\right)^2}{r^3}u - \frac{\left( \frac{3}{2}-\kappa\right)^2}{l^2}u+ \frac{1}{r} \grt^ru\\
&= \frac{2M\left( \frac{3}{2}-\kappa\right)^2}{r^3}u + \frac{1}{r} \grt^ru.
\end{split}
\end{equation*}
Recombining we get
\begin{equation*}
\begin{split}
\gr_\mu(w_1u)\grt^\mu u + w_1u \gr_\mu\grt^\mu u &= \left( \gr_r w_1\right)u\grt^ru + w_1\gr_\mu u \grt^\mu u + w_1u\left(\frac{2M}{r^3} u + \frac{1}{r} \grt^ru\right) \\
&= \left( \gr_r w_1\right)u\grt^ru + w_1\grt_\mu u\grt^\mu u - w_1u\grt_\mu 1\grt^\mu u\\&+ w_1u\left(\frac{2M}{r^3} u + \frac{1}{r} \grt^ru\right) \\
&= g^{\mu r}\left( \gr_r w_1\right)u\grt_\mu u + w_1g^{\mu\nu}\grt_\mu u\grt_\nu u+ w_1\frac{2M}{r^3} u^2. 
\end{split}
\end{equation*}
Explicitly this is
\begin{equation*}
\begin{split}
&\frac{2 l^2 M  w_1'(r)}{r^3}u\gr_tu+ \left(\frac{r^2 w_1'(r)}{l^2}-\frac{2 M w_1'(r)}{r}\right)u\grt_ru+\frac{4 l^2
   M  w_1(r)}{r^3}\grt_ru \gr_tu\\&+ \left(\frac{r^2 w_1(r)}{l^2}-\frac{2 M w_1(r)}{r}\right)(\grt_ru)^2+ \left(-\frac{2 l^4 M w_1(r)}{r^5}-\frac{l^2 w_1(r)}{r^2}\right)(\gr_tu)^2\\&+w_1(r)\frac{2M}{r^3}u^2+ w_1(r)|\su|^2.
\end{split}
\end{equation*} 
For the last term
\begin{equation*}
\begin {split}
\gr_\mu(w_2(r)u^2X^\mu) &= \gr_\mu(w_2u^2)X^\mu + w_2u^2\gr_\mu X^\mu\\
&= \gr_\mu(w_2)u^2X^\mu + 2w_2 u \gr_\mu u X^\mu + 3 w_2u^2\\
& = \left( r w'_2(r) + 3w_2(r) \right) u^2 + 2rw_2(r)u\gr_r u \\
& = \left( r w'_2(r) + 2\kappa w_2(r) \right) u^2 +2rw_2(r)u\grt_r u 
\end{split}
\end{equation*}
Combining all of these results and tidying up the algebra we may now fully express the divergence for the current as
\begin{equation}
\begin{split}
\gr_\mu\tilde{J}^\mu[u] &= \left(2 r w_2(r)+\frac{r^2 w_1'(r)}{l^2}-\frac{2 M w_1'(r)}{r}\right)u\grt_r u+\frac{2 l^2 M  w_1'(r)}{r^3}u\gr_tu\\&+ \left( r w_2'(r)+2 \kappa 
   w_2(r)+\frac{(3-2\kappa)^3  M}{4r^3}+\frac{2 (3-2\kappa)^2 M w_1(r)}{4 r^3}\right)u^2\\& + 4 \left(\frac{(2-\kappa)  l^2 M}{r^3}+\frac{ l^2 M w_1(r)}{r^3}\right)\grt_ru \gr_tu+\left((1-\kappa)+{w_1(r)} \right)|\su|^2 \\&+ 
   \left(-\frac{ \kappa r^2}{ l^2}-\frac{(3-2\kappa)  M}{r}+\frac{r^2 w_1(r)}{l^2}-\frac{2 M w_1(r)}{r}\right)(\grt_ru)^2\\&+ \left(-\frac{(5-2\kappa)  l^4 M}{r^5}-\frac{2 l^4 M w_1(r)}{r^5}-\frac{l^2}{r^2}\left( 
   (1-\kappa)+w_1(r)\right) \right)(\gr_tu)^2.
\end{split}
\end{equation}
\end{proof}
\subsection{Proof of Lemma \ref{MECB}}
\begin{proof}
To prove this we apply the divergence theorem \eqref{DT} to the current and prove the fluxes across the hypersurfaces are bounded by a constant multiple of the initial energy. \\
For the surfaces of constant $r$ we compute that:
\begin{equation}\label{rflux}
\begin{split}
&\int_{\Sigma_r^{[T_1,T_2]}}\tilde{J}_\mu m^\mu dS_{\Sigma_r} = \\
&\int_{\Sigma_r^{[T_1,T_2]}}\left( r^3 w_2(r)- \frac{(3-2\kappa)^2}{4}M\right)u^2+\frac{r^2}{2l^2}\left(r^3 - 2Ml^2 \right) (\grt_ru)^2+  \left(\frac{l^4
	M}{r^2}+\frac{l^2 r}{2}\right)(\gr_tu)^2\\&-\frac{1}{2}  r^3|\su|^2+ w_1(r) \frac{r}{l^2}\left(r^3-2 Ml^2\right)u\grt_ru+\frac{2 l^2 M  w_1(r)}{r}u\gr_tu \hspace{2pt} dtdxdy.
\end{split}
\end{equation}
We first investigate the terms on $\mi$. Using the asymptotic analysis for Robin data as seen in \eqref{DRN} we can quickly see most of these terms are converging to $0$. The limit thus reduces to the study of
\begin{equation*}
\begin{split}
&\lim_{r\to\infty}\frac{r^2}{2l^2}\left(r^3 - 2Ml^2 \right) (\grt_ru)^2 + w_1(r) \frac{r}{l^2}\left(r^3-2 Ml^2\right)u\grt_ru\\
=&\lim_{r\to\infty} \frac{1}{2l^2r^{2\kappa}}(r^{\frac{5}{2}+\kappa} \grt_ru)^2  -\frac{k_1}{l^2}\left( r^{\frac{3}{2}-\kappa}u\right)r^{\frac{5}{2}+\kappa}\grt_ru \\
=& \lim_{r\to\infty}  \frac{k_1\beta}{l^2}\left( r^{\frac{3}{2}-\kappa}u\right)^2 = \lim_{r\to\infty}\lambda \beta\left( r^{\frac{3}{2}-\kappa}u\right)^2,
\end{split}
\end{equation*}
where in the last line we have substituted in the limit for Robin data and $\lambda>0$ is some constant independent of $T$. In integral form this is
\begin{equation*}
\begin{split}
\lim_{r\to\infty}\int_{\Sigma_r^{[T_1,T_2]}}\tilde{J}_\mu m^\mu dS_{\Sigma_r} &=\lim_{r\to\infty}\int_{\Sigma_r^{[T_1,T_2]}}\lambda \beta\left( r^{\frac{3}{2}-\kappa}u\right)^2\\
&=\lambda\int_{T_1}^{T_2}\int_{T_{t,\infty}^2} \beta\left( r^{\frac{3}{2}-\kappa}u\right)^2dxdydt.
\end{split}
\end{equation*}
As for the contribution of the terms at the horizon $(r=r_+)$ we first bound \eqref{rflux} using Young's inequality
\begin{equation*}
\begin{split}
\lim_{r\to r_+}&\int_{\Sigma_r^{[T_1,T_2]}}\tilde{J}_\mu m^\mu dS_{\Sigma_r} \le \\
\lim_{r\to r_+}&\int_{\Sigma_r^{[T_1,T_2]}}\left( r^3 w_2(r)- \frac{(3-2\kappa)^2}{4}M+\epsilon\right)u^2+\frac{r^2}{2l^2}\left(r^3 - 2Ml^2 \right) (\grt_ru)^2\\&+  \left(\frac{l^4
	M}{r^2}+\frac{l^2 r}{2}+\frac{ l^4 M^2  w_1^2(r)}{\epsilon r^2}\right)(\gr_tu)^2-\frac{1}{2}  r^3|\su|^2+ w_1(r) \frac{r}{l^2}\left(r^3-2 Ml^2\right)u\grt_ru \hspace{2pt} dtdxdy\\
\le &\int_{\Sigma_{r_+}^{[T_1,T_2]}}\left( r^3 w_2(r_+)- \frac{(3-2\kappa)^2}{4}M+\epsilon \right)u^2+ \left(\frac{l^4
	M}{r_+^2}+\frac{l^2 r_+}{2}+\frac{ l^4 M^2  w_1^2(r_+)}{\epsilon r_+^2}\right)(\gr_tu)^2 \hspace{2pt} dtdxdy.\\
	\le&\int_{\Sigma_{r_+}^{[T_1,T_2]}}\left( k_2 - \frac{(3-2\kappa)^2}{4}M+\epsilon \right)u^2 + CF_u[T_1,T_2]
\end{split}
\end{equation*} 
For $k_2 < \frac{(3-2\kappa)^2}{4}M$ we can always find an $\epsilon>0$ such that the first bracket in the integrand is negative.  We may drop the negative terms and evaluate all the radial functions absorbing their values into a constant $C$. We thus conclude
\begin{equation*}
\begin{split}
&\lim_{r\to r_+}\int_{\Sigma_r^{[T_1,T_2]}}\tilde{J}_\mu m^\mu dS_{\Sigma_r} \le \int_{\Sigma_{r_+}^{[T_1,T_2]}}C(\gr_tu)^2\hspace{2pt} dtdxdy \le CF_u[T_1,T_2] \le C E_{T_1}[u].
\end{split}
\end{equation*}
We now have expressions for the $\mi$ and $\Sigma_{r_+}$ contributions.\\
As the energy is decreasing in time both surfaces of constant $t$ can be dealt with in one calculation. So for $T_1 \le t \le T_2$ we compute 
\begin{equation*}
\begin{split}
&\int_{\Sigma_t}\tilde{J}_\mu m^\mu dS_{\Sigma_t}  \\
=&\int_{\Sigma_t} -2  l^2 M (\grt_ru)^2+ \left(\frac{2 l^4 M}{r^2}+l^2 r\right) \grt_ru \gr_tu+w_1(r) \left(\frac{2 l^4 M}{r^3}+l^2\right)u\gr_tu\\&-\frac{2 l^2 M w_1(r)}{r}u\grt_ru \hspace{2pt} drdxdy\\
=&\int_{\Sigma_t} -2  l^2 M (\grt_ru)^2+ \left(\frac{2 l^4 M}{r^2}+l^2 r\right)\grt_ru \gr_tu-k_1l^2u\gr_tu \\
&+ \left(\frac{2 l^4 M(f(r)-k_1)}{r^3}+l^2f(r)\right)u\gr_tu-\frac{2 l^2 M w_1(r)}{r}u\grt_ru \hspace{2pt} drdxdy\\
= &\int_{\Sigma_t} -2 l^2 M (\grt_ru)^2+h_1(r)\grt_ru \gr_tu-k_1l^2u\gr_tu
+ h_2(r)u\gr_tu-h_3(r)u\grt_ru \hspace{2pt} drdxdy,
\end{split}
\end{equation*}
where $h_1(r) \in \mo(r)$, $h_2(r) \in \mo(r^{-1})$ and $h_3(r) \in \mo(r^{-1})$. With this in mind we now apply Young's inequality to the cross terms
\begin{equation*}
\begin{split}
&\int_{\Sigma_t}\tilde{J}_\mu m^\mu dS_{\Sigma_t}  \\
\le &\int_{\Sigma_t}\left(\frac{1}{2\epsilon_1} + \frac{k_1l^4}{2\epsilon_3}+\hf |h_2(r)| \right) (\gr_tu)^2+\left(\hf|h_2(r)|+\frac{1}{2\epsilon_2}|h_3(r)|\right) u^2 \hspace{2pt} drdxdy \\
+ &\int_{\Sigma_t}\left(\frac{\epsilon_2}{2}|h_3(r)|+\frac{\epsilon_1}{2}(h_1(r))^2-2l^2M \right)(\grt_ru)^2 + \frac{k_1^2l^4\epsilon_3}{2} u^2 \hspace{2pt} drdxdy.
\end{split}
\end{equation*}
Now $r$ is bounded away from zero and as the function coefficients are of a low enough order (in terms of $r$) we can easily bound the first integral by a constant multiple of the energy at time $T_1$ (the constant depends on the choice of the $\epsilon_i $'s). To deal however with the second integral we need to invoke the Hardy inequality
\begin{equation*}
\begin{split}
&\int_{\Sigma_t}\tilde{J}_\mu m^\mu dS_{\Sigma_t}  \\
\le &\int_{\Sigma_t}\left(\frac{1}{2\epsilon_1} + \frac{k_1l^4}{2\epsilon_3}+\hf |h_2(r)| \right) (\gr_tu)^2+\left(\hf|h_2(r)|+\frac{1}{2\epsilon_2}|h_3(r)|\right) u^2 \hspace{2pt} drdxdy\\&+\int_{\Sigma_t} C_2\frac{k_1^2l^4\epsilon_3}{2r}u^2 + \left(C_2\frac{k_1^2l^4\epsilon_3}{2}r^2+\frac{\epsilon_2}{2}|h_3(r)|+\frac{\epsilon_1}{2}(h_1(r))^2-2l^2M \right)(\grt_ru)^2 \hspace{2pt} drdxdy.
\end{split}
\end{equation*} 
We now note that we simply need to choose positive $\epsilon_1$, $\epsilon_2$ and $\epsilon_3$ such that
\begin{equation*}
C_2\frac{k_1^2l^4\epsilon_3}{2}r_+^2+\frac{\epsilon_2}{2}|h_3(r_+)|+\frac{\epsilon_1}{2}(h_1(r_+))^2<2l^2M, 
\end{equation*}
we may then find a $C>0$ independent of $T_1$ and $T_2$ such that
\begin{equation*}
\int_{\Sigma_t}\tilde{J}_\mu m^\mu dS_{\Sigma_t}  \le CE_{T_1}[u],
\end{equation*}
for $T_1\le t\le T_2$.\\
We now apply the divergence theorem to yield
\begin{equation*}
\int_{\mf_{[T_1,T_2]}} -\gr_\mu \tilde{J}^\mu[u] d^4Vol \le C_1E_{T_1}[u] + C_2E_{T_1}[u] + C_3E_{T_1}[u] - C\int_{T_1}^{T_2}\int_{T_{t,\infty}^2}\beta(ru)^2dxdydt,\\
\end{equation*}
which may be rewritten as
\begin{equation*}
\int_{T_1}^{T_2}\int_{T_{t,\infty}^2}\beta(ru)^2dxdydt+\int_{\mf_{[T_1,T_2]}} -\gr_\mu \tilde{J}^\mu[u] d^4Vol \le CE_{T_1}[u],    
\end{equation*}
thus proving the proposition.
\end{proof}
\subsection{Proof of Lemma \ref{newgronlem}}
\begin{proof}
We first re-write \eq{newgron} as:
\be
\frac{d}{dt} \left(f(t) e ^{\varkappa t} \right) \leq \frac{A e ^{\varkappa t}}{\left(1+t-T_1\right)^k}
\ee
so that
\begin{align*}
e^{\varkappa t} f(t) - e^{\varkappa T_1} f(T_1) & \leq A \int_{T_1}^t \frac{ e ^{\varkappa s}}{\left(1+s-T_1\right)^k} ds \\
&\leq A e^{\varkappa T_1} \int_{0}^{t-T_1} \frac{ e ^{\varkappa s'}}{\left(1+s'\right)^k} ds' \\
& \leq C A e^{\varkappa T_1}\left(  \frac{e ^{\varkappa (t-T_1)}}{\left(1+t-T_1\right)^k}  \right)
\end{align*}
Here, we use:
\begin{align*}
\int_{0}^{t} \frac{ e ^{\varkappa s'}}{\left(1+s'\right)^k} ds' &= \left[ \frac{e ^{\varkappa s'}}{\varkappa \left(1+s'\right)^k}  \right]_0^t + k \int_{0}^{t} \frac{  e ^{\varkappa s'}}{\varkappa \left(1+s'\right)^{k+1}} ds' \\
&\leq \frac{e ^{\varkappa t}}{\varkappa \left(1+t\right)^k} + k t \max_{s'\in (0, t) } \abs{ \frac{  e ^{\varkappa s'}}{\varkappa \left(1+s'\right)^{k+1}}} \\
&\leq C  \frac{e ^{\varkappa t}}{\left(1+t\right)^k}
\end{align*}
a simple re-arrangement gives \eq{newgron}.
\end{proof}\subsection{Proof of Lemma \ref{TNG}}
\begin{proof}
The easiest way to construct the geodesic is by the Hamiltonian method. We can quickly spot three integrals of motion arising from the Killing fields $\pa_t,\pa_x,\pa_y$ and from the fact that $\dot{\gamma}$ is null. More explicitly we define three constants $a,b,c$ and write the equations
\begin{equation*}
\begin{split}
a &= g(\gd,\pa_t),\\
b &= g(\gd,\pa_x),\\
c &= g(\gd,\pa_y),\\
0 &= g(\gd,\gd).
\end{split}
\end{equation*}
Working in co-ordinates and taking an affine parameter $\tau$ for $\gamma(t,r,x,y)$. We get the following geodesic equation
\begin{equation}\label{FDE}
\begin{split}
\td & = \frac{-1}{\left(\frac{-2M}{r} +\frac{r^2}{l^2} \right)}\left(a - \rd\frac{2Ml^2}{r^3} \right), \\
\rd &= -\left(d^2\left(\frac{2M}{r}-\frac{1}{l^2} \right) + a^2  \right)^\hf,\\
\xd &= \frac{b}{r^2},\\
\yd &=  \frac{c}{r^2},
\end{split}
\end{equation} 
where $d^2 = b^2 + c^2$ and we chose the negative root of $\rd^2$ as we want to look at a photon travelling tangentially at a distance $R$ from the origin falling into the black hole.
\begin{equation*}
\begin{split}
r(0) &= R, \\
\dot{r}(0) &= 0.
\end{split}
\end{equation*}
This allows us to directly solve for $a$ as
\begin{equation*}
a = - d\left(\frac{1}{l^2} - \frac{2M}{R} \right)^\hf. 
\end{equation*}
We have chosen the negative root as we would like a positive killing energy and $\td \ge 0$ so we can measure the co-ordinate time it takes to fall some distance towards the event horizon. Our equation thus becomes
\begin{equation}
\begin{split} \label{GDE}
\td &= \frac{drl^2}{\left(r^3-2Ml^2\right) }\left( \left( \frac{1}{l^2} - \frac{2M}{R^3}\right)^\hf - \frac{2Ml^2}{r^3} \sqrt{2M}\left( \frac{1}{r^3} - \frac{1}{R^3}\right)^\hf \right),\\
\dot{r} &=  -d\sqrt{2M}\left( \frac{1}{r^3} - \frac{1}{R^3}\right)^\hf, \\
\xd &=\frac{b}{r^2}, \\
\yd &=\frac{c}{r^2}. 
\end{split}
\end{equation}
So long as $R> r_+$ we have that these are all signed (away from $R$) quantities. This allows to deduce that $t,r,x,y$ are all monotonic. As $r$ is monotonic decreasing the geodesic never reaches $\mi$ and we find  that $\td,\xd,\yd$ also are monotonic. As $\gd$ is clearly smooth we deduce that $\gamma$ is a smooth embedding. \\

We now wish to study the co-ordinate time it takes to fall  a distance of $\frac{R}{2}$ remaining entirely outside the event horizon. More explicitly let $\frac{3}{2}r_+<\frac{R}{2}$ we are interested in
\begin{equation*}
\Delta_{\frac{R}{2}}t= t|_{r=R}- t|_{r=\frac{R}{2}},
\end{equation*}
To evaluate this we first need the affine time $\tau_{\frac{R}{2}}$ between $R$ and $\frac{R}{2}$. From \eqref{GDE} we can compute this as
\begin{equation*}
\tau_{\frac{R}{2}} =   \frac{1}{d\sqrt{2M}}\int_{\frac{R}{2}}^{R}\left( \frac{1}{r^3} - \frac{1}{R^3}\right)^{-\hf}dr,
\end{equation*} 
rescaling the integral we get
\begin{equation*}
\tau_{\frac{R}{2}}=\frac{R^{\frac{5}{2}}}{d\sqrt{2M}}\int_{{\hf}}^{1}\sqrt{\frac{y^3}{1-y^3}}dy.
\end{equation*}
Now 
\[
0< \int_{{\hf}}^{1}\sqrt{\frac{y^3}{1-y^3}}dy \le \int_{{0}}^{1}\sqrt{\frac{y^3}{1-y^3}}dy = \sqrt{\pi}\frac{\Gamma(\frac{5}{6})}{\Gamma(\frac{1}{3})} <\infty,
\]
so this integral is simply some positive constant. We denote it as $K_F$ and write
\begin{equation*}
\tau_{\frac{R}{2}} =\frac{K_FR^{\frac{5}{2}}}{d\sqrt{2M}}.
\end{equation*}
We now turn to the fall time. Our goal is to show we can for fixed $T$, find a geodesic that remains outside the event horizon up to and including $T$. In order to do this we consider bounding $\td$ below on the interval $r\in \left[\frac{R}{2}, R\right]$
\begin{equation*}
\begin{split}
\td &\ge   \frac{dl^2}{r^2} \left( \frac{1}{l^2} - \frac{2M}{R^3}\right)^\hf  -\frac{4dRl^2}{\left(R^3-16Ml^2\right) } \left( \frac{16Ml^2}{R^3} \sqrt{2M}\frac{\sqrt{7}}{R^{\frac{3}{2}}}\right)  \\
&\ge \frac{dl^2}{r^2}\left( {\frac{R^3-2Ml^2}{l^2R^3}}\right) ^\hf - \frac{4dl^2}{R^{\frac{7}{2}}(R^3-16Ml^2)},\\
\end{split}
\end{equation*}
now fix $\epsilon>1$ and set $C = l^{-1}\left(1-\frac{1}{\epsilon}\right)$ then provided $R^3\ge 2\epsilon Ml^2$ we have
\begin{equation*}
\begin{split}
&\td \ge \frac{dl^2C}{R^2} - \frac{4dl^2}{R^{\frac{7}{2}}(R^3-16Ml^2)}\\
&= \frac{dl^2}{R^2}\left(C - \frac{4}{R^\frac{3}{2} (R^3-16Ml^2)}\right). \
 \end{split}
\end{equation*} 
If we then choose $\epsilon=9$ so we require $R^3\ge 18Ml^2$ we have
\begin{equation*}
\begin{split}
\dot{r} \ge \frac{dl^2}{R^2}\left( \frac{8}{9l} - \frac{4}{R^\frac{3}{2}2Ml^2}\right), 
\end{split}
\end{equation*}
Insisting $R^3 \ge \max\{18Ml^2,\frac{100}{16M^2l^2}\}$ we find
\begin{equation*}
\td \ge \frac{4dl}{45} \frac{1}{R^2}.
\end{equation*}
We may now prove an inequality for the fall time (for large $R$)
\begin{equation*}
\Delta_{\frac{R}{2}}t = \int_0^{\tau_{\frac{R}{2}}}\td d\tau \ge K\frac{dl^2}{R^2}\cdot\tau_{\frac{R}{2}}= K_FR^\hf.
\end{equation*}
Now fix $T$ and let $R= \max\{\frac{9T^2}{4K_F^2},(18Ml^2)^\frac{1}{3},(\frac{100}{16M^2l^2})^\frac{1}{3} \}$ we then have
\begin{equation*}
\Delta_{\frac{R}{2}}t \ge \frac{3T}{2},
\end{equation*}
or in other words a geodesic with $r>\frac{3}{2}r_+$ for $0\le t\le T$.\\
\end{proof}
\section{Acknowledgements}
Jake Dunn is supported by EPSRC as part of the MASDOC DTC
at the University of Warwick. Grant No. EP/HO23364/1\\
The authors would also like to thank the anonymous referees for their comments. 
\bibliographystyle{ieeetr}
\bibliography{ref}
\end{document}